\keywords{parity games, strategy iteration, value iteration, progress measure, universal trees}
\algrenewcommand{\algorithmiccomment}[1]{\hfill $\rhd$ \emph{#1}}%
\algrenewcommand{\algorithmicrequire}{\textbf{Input:}}
\algrenewcommand{\algorithmicensure}{\textbf{Output:}}
\algnewcommand{\OR}{\textbf{or }}
\algnewcommand{\AND}{\textbf{and }}
\algnewcommand{\Not}{\textbf{not}\,}
\algnewcommand{\True}{\textbf{true}}
\algnewcommand{\False}{\textbf{false}}
\newtcolorbox{myframe}[2][]{%
  enhanced,colback=white,colframe=black,coltitle=black,
  sharp corners,boxrule=0.4pt,
  fonttitle=\scshape,
  attach boxed title to top left={yshift=-0.3\baselineskip-0.4pt,xshift=2mm},
  boxed title style={tile,size=minimal,left=0.5mm,right=0.5mm,
    colback=white,before upper=\strut},
  title=#2,#1
}
\theoremstyle{plain} %
\newcommand\Lcal{\mathcal{L}}
\newcommand{\T}{\mathbb T}
\newcommand{\R}{\mathbb R}
\newcommand{\Z}{\mathbb Z}
\newcommand{\N}{\mathbb N}
\newcommand{\1}{\mathbbm 1}
\newcommand{\size}[1]{\ensuremath{\left|#1\right|}}
\newcommand{\ceil}[1]{\ensuremath{\left\lceil#1\right\rceil}}
\newcommand{\floor}[1]{\ensuremath{\left\lfloor#1\right\rfloor}}
\newcommand{\set}[1]{\ensuremath{\left\{#1\right\}}}
\newcommand{\pr}[1]{\ensuremath{\left(#1\right)}}
\DeclareMathOperator{\drop}{drop}
\begin{document}

\title[Strategy Iteration with Universal Trees]{Beyond Value Iteration for Parity Games:\texorpdfstring{\\}{}Strategy Iteration with Universal Trees\rsuper*}
\titlecomment{{\lsuper*}An extended abstract of this paper has appeared in Proceedings of the 47th International Symposium on Mathematical Foundations of Computer Science, MFCS 2022.}
\thanks{This project has received funding from the European Research Council (ERC) under the European Union's Horizon 2020 research and innovation programme (grant agreement nos.~757481--ScaleOpt and 805241--QIP). \\ Z.~K.~Koh---This work was done while the author was at the London School of Economics and Centrum Wiskunde \& Informatica.}	

\author[Z.~K.~Koh]{Zhuan Khye Koh\lmcsorcid{0000-0002-4450-8506}}[a]
\author[G.~Loho]{Georg Loho\lmcsorcid{0000-0001-6500-385X}}[b]

\address{Boston University, 665 Commonwealth Avenue, Boston, MA 02215, USA.}	%
\email{zkkoh@bu.edu}  %

\address{University of Twente, Drienerlolaan 5, 7522 NB Enschede, The Netherlands.}	%
\email{g.loho@utwente.nl}  %

\begin{abstract}
  \noindent Parity games have witnessed several new quasi-polynomial algorithms since the breakthrough result of Calude et al.~(STOC 2017).
  The combinatorial object underlying these approaches is a \emph{universal tree}, as identified by Czerwi{\'n}ski et al.~(SODA 2019).
  By proving a quasi-polynomial lower bound on the size of a universal tree, they have highlighted a barrier that must be overcome by all existing approaches to attain polynomial running time.
  This is due to the existence of worst case instances which force these algorithms to explore a large portion of the tree.

  As an attempt to overcome this barrier, we propose a strategy iteration framework which can be applied on any universal tree.
  It is at least as fast as its value iteration counterparts, while allowing one to take bigger leaps in the universal tree. 
  Our main technical contribution is an efficient method for computing the least fixed point of 1-player games. 
  This is achieved via a careful adaptation of shortest path algorithms to the setting of ordered trees.
  By plugging in the universal tree of Jurdzi{\'n}ski and Lazi{\'c}~(LICS 2017), or the Strahler universal tree of Daviaud et al.~(ICALP 2020), we obtain instantiations of the general framework that take time $O(mn^2\log n\log d)$ and $O(mn^2\log^3 n \log d)$ respectively per iteration.
\end{abstract}

\maketitle

\section{Introduction} \label{sec:intro}
A \emph{parity game} is an infinite duration game between two players Even and Odd.
It takes place on a sinkless directed graph $G=(V,E)$ equipped with a \emph{priority} function $\pi:V\rightarrow\{1,2,\dots,d\}$.
Let $n = |V|$ and $m=|E|$.
The node set $V$ is partitioned into $V_0\sqcup V_1$ such that nodes in $V_0$ and $V_1$ are owned by Even and Odd respectively.
The game starts when a token is placed on a node.
In each turn, the owner of the current node moves the token along an outgoing arc to the next node, resulting in an infinite walk.
If the highest priority occurring infinitely often in this walk is even, then Even wins.
Otherwise, Odd wins.

By the positional determinacy of parity games~\cite{Martin75,conf/focs/EmersonJ91}, there exists a partition of $V$ into two subsets from which Even and Odd can force a win respectively.
The main algorithmic problem of parity games is to determine this partition, or equivalently, to decide the winner given a starting node. 
This is a notorious problem that lies in $\text{NP}\cap\text{co-NP}$ \cite{conf/cav/EmersonJS93}, and also in $\text{UP}\cap\text{co-UP}$ \cite{journals/ipl/Jurdzinski98}, with no known polynomial algorithm to date.

Due to its intriguing complexity status, as well as its fundamental role in automata theory and logic \cite{conf/cav/EmersonJS93,conf/stoc/KupfermanV98}, parity games have been intensely studied over the past three decades. 
Prior to 2017, algorithms for solving parity games, e.g.~\cite{journals/tcs/Zielonka98,conf/stacs/Jurdzinski00,conf/cav/VogeJ00,conf/stacs/BjorklundSV03,conf/csl/Schewe08,journals/siamcomp/JurdzinskiPZ08,journals/jcss/Schewe17,journals/disopt/MnichRR18,journals/fmsd/BenerecettiDM18}, are either exponential or mildly subexponential.
In a breakthrough result, Calude et al.~\cite{conf/stoc/CaludeJKL017} gave the first quasi-polynomial algorithm.
Since then, many other quasi-polynomial algorithms \cite{journals/sttt/FearnleyJKSSW19,conf/lics/JurdzinskiL17,conf/lics/Lehtinen18,conf/mfcs/Parys19,journals/jcss/BenerecettiDMSW25} have been developed.
Most of them have been unified by Czerwi{\'n}ski et al.~\cite{conf/soda/CzerwinskiDFJLP19} via the concept of a \emph{universal tree}.
A universal tree is an ordered tree into which every ordered tree of a certain size can be isomorphically embedded.
They proved a quasi-polynomial lower bound on the size of a universal tree.

\paragraph{Value iteration}
The starting point of this paper is the classic \emph{progress measure} algorithm \cite{conf/stacs/Jurdzinski00,conf/lics/JurdzinskiL17} for solving parity games.
It belongs to a broad class of algorithms called \emph{value iteration} -- a well-known method for solving more general games on graphs such as mean payoff games and stochastic games.
In value iteration, every node $v$ in $G$ is assigned a value $\mu(v)\in \mathcal{V}$ from some totally ordered set $\mathcal{V}$, and the values are locally improved until we reach the \emph{least fixed point} of a set of operators associated with the game. 
The set $\mathcal{V}$ is called the \emph{value domain}, which is usually a bounded set of real numbers or integers.
For the progress measure algorithm, its value domain is the set of leaves $L(T)$ in a universal tree~$T$.
As the values are monotonically improved, the running time is proportional to $|L(T)|$.
The first progress measure algorithm of Jurdzi{\'n}ski \cite{conf/stacs/Jurdzinski00} uses a perfect $n$-ary tree, which runs in exponential time.
Its subsequent improvement by Jurdzi{\'n}ski and Lazi{\'c}~\cite{conf/lics/JurdzinskiL17} uses a quasi-polynomial-sized tree, which runs in $n^{\log (d/\log n) + O(1)}$ time.

Despite having good theoretical efficiency, the progress measure algorithm is not robust against its worst-case behaviour.
In fact, it is known to realize its worst-case running time on very simple instances.
As an example, let $(G,\pi)$ be an arbitrary instance with maximum priority $d$, with $d$ being even.
For a small odd constant $k$, if we add two nodes of priority $k$ as shown in Figure \ref{fig:vi_bad_example}, then the progress measure algorithm realizes its worst-case running time. 
This is because the values of these nodes are updated superpolynomially many times.

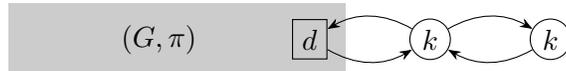
\begin{figure}[h!]
\centering
\begin{tikzpicture}[scale=0.8]
\fill[fill=gray!40] (-7,0.6) rectangle (-1.4,-0.6);
\node (G) at (-4.5,0) {$(G,\pi)$};
\begin{scope}[every node/.style={circle,draw,inner sep = 2pt}]
    \node (A) at (0,0) {$k$};
    \node (B) at (2,0) {$k$};   
\end{scope}
\begin{scope}[every node/.style={rectangle,draw}]
    \node (C) at (-2,0) {$d$};
\end{scope}

\begin{scope}[>={Stealth[black]},
              every edge/.style={draw}]
    \path [->] (A) edge [bend left] (B);
    \path [->] (A) edge [bend right] (C);
    \path [->] (B) edge [bend left] (A);   
    \path [->] (C) edge [bend right] (A);   
\end{scope}
\end{tikzpicture}
\caption{
  A worst-case construction for the progress measure algorithm.
  Nodes in $V_0$ and $V_1$ are drawn as squares and circles, respectively.
  }
\label{fig:vi_bad_example}
\end{figure}

\paragraph{Strategy iteration}
A different but related method for solving games on graphs is \emph{strategy iteration}.
For a parity game $(G,\pi)$, a \emph{(positional) strategy} $\tau$ for a player (say Odd) is a choice of an outgoing arc from every node in $V_1$. 
Removing the unchosen outgoing arcs from every node in $V_1$ results in a \emph{strategy subgraph} $G_\tau\subseteq G$.
A general framework for strategy iteration is given, e.g., in \cite{Friedmann:2011}. 
Following that exposition, one fixes a suitable value domain $\mathcal{V}$ and associates a valuation $\mu:V\rightarrow \mathcal{V}$ to each strategy for Odd.
The valuation of a strategy $\tau$ can be thought of as an evaluation of the best counterstrategy for Even. More formally, it is a fixed point of a set of operators associated with the 1-player game $(G_{\tau},\pi)$ for Even.
Given two strategies $\tau, \tau'$ and their respective valuations $\mu, \mu'$, we say that $\tau'$ is \emph{better} than $\tau$ if $\mu'(v)\geq \mu(v)$ for all $v\in V$ and $\mu'\neq \mu$.
In every iteration, strategy iteration maintains a strategy $\tau$ for Odd and its corresponding valuation $\mu$.
Based on $\mu$ and a \emph{pivot rule}, it switches to a better strategy $\tau'$, and then computes its valuation $\mu'$.
This process is repeated until we reach an optimal strategy for Odd.

Originally introduced by Hoffman and Karp for stochastic games~\cite{HoffmanKarp:1966}, variants of strategy iteration for parity games have been developed \cite{thesis/Puri95,conf/cav/VogeJ00,conf/stacs/BjorklundSV03,journals/dam/BjorklundV07,conf/csl/Schewe08,Luttenberger:2008}. %
They usually perform well in practice, but tedious constructions of their worst case (sub)exponential complexity are known~\cite{conf/lics/Friedmann09}.   
Motivated by the construction of quasi-polynomial universal trees \cite{conf/lics/JurdzinskiL17,conf/icalp/DaviaudJT20}, a natural question is whether there exists a strategy iteration algorithm with value domain $L(T)$ for a universal tree $T$.
It is not hard to see that with value domain $L(T)$, unfortunately, the fixed point of a 1-player game $(G_\tau,\pi)$ may not be unique.
Moreover, in a recent thesis \cite{Ohlmann:2021}, Ohlmann showed that a valuation that is fit for strategy iteration cannot be defined using $L(T)$ for the aforementioned trees \cite{conf/stacs/Jurdzinski00,conf/lics/JurdzinskiL17,conf/icalp/DaviaudJT20}.

\paragraph{Our contribution}
We show that an adaptation of strategy iteration with value domain $L(T)$ is still possible.
To circumvent the impossibility result of Ohlmann \cite{Ohlmann:2021}, we slightly alter the strategy iteration framework as follows.
After pivoting to a strategy $\tau'$ in an iteration, we update the current node labeling $\mu$ to the least fixed point of $(G_{\tau'},\pi)$ that is \emph{pointwise at least} $\mu$.
In other words, we force $\mu$ to increase (whereas this happens automatically in the previous framework).
Since the fixed point of a 1-player game may not be unique, this means that we may encounter a strategy more than once during the course of the algorithm.
The motivation of our approach comes from tropical geometry, as discussed in Appendix \ref{sec:mpg}.

To carry out each iteration efficiently, we give a combinatorial method for computing the least fixed point of 1-player games with value domain $L(T)$.
In the literature, valuations are usually computed using shortest path techniques, such as label-setting (e.g.~Dijkstra's algorithm) and label-correcting (e.g.~Bellman--Ford algorithm), with running times $O(m+n \log n)$ and $O(mn)$ respectively. %
We adapt these techniques to the setting of ordered trees.
When $T$ is instantiated as a specific universal tree constructed in the literature, we obtain the following running times: 
\begin{itemize}
  \item The perfect $n$-ary tree of height $d/2$ takes $O(d(m+n\log n))$.
  \item The universal tree of Jurdzi{\'n}ski and Lazi{\'c} \cite{conf/lics/JurdzinskiL17} takes $O(mn^2\log n\log d)$.
  \item The Strahler universal tree of Daviaud et al.~\cite{conf/icalp/DaviaudJT20} takes $O(mn^2\log^3 n \log d)$. 
\end{itemize}

The total number of strategy iterations is trivially bounded by $n|L(T)|$, the same bound for the progress measure algorithm.  
Whereas we do not obtain a strict improvement over previous running time bounds, it is conceivable that our algorithm would terminate in fewer iterations than the progress measure algorithm on most examples.
Moreover, our framework provides large flexibility in the choice of pivot rules.
Identifying a pivot rule that may provide strictly improved (and possibly even polynomial) running time is left for future research.

\subsection{Computing the Least Fixed Point of 1-Player Games}
\label{sec:techniques-Cramer-computation}

Let $(G_\tau,\pi)$ be a 1-player game for Even, and $\mu^*$ be its least fixed point with value domain $L(T)$ for some universal tree $T$.
Starting from $\mu(v)=\min L(T)$ for all $v\in V$, the progress measure algorithm successively lifts the label of a node based on the labels of its out-neighbours until $\mu^*$ is reached.
This procedure is not polynomial in general, even on 1-player games.
So, instead of approaching $\mu^*$ from below, we approach it from above.
This is reminiscent of shortest path algorithms, where node labels form upper bounds on the shortest path distances throughout the algorithm.
To compute shortest paths to a target node $t$, the label at $t$ is initialized to $0$, while the labels at all other nodes are initialized to $+\infty$. 
In a label-correcting method like the Bellman--Ford algorithm, the labels are monotonically decreased until they converge to shortest path distances.
Differently, in a label-setting method such as Dijkstra's algorithm, the label of a node is directly fixed to its shortest path distance in every iteration.
We refer to Ahuja et al.~\cite{AhujaMagnantiOrlin:1993} for an overview on label-correcting and label-setting techniques for the shortest path problem. 

There are two main challenges when applying these techniques to the setting of ordered trees.
First, the target node $t$ is not explicitly given.
Let us call a cycle \emph{even} if its maximum priority is even.
By inspecting the operators associated with $(G_\tau,\pi)$, one can deduce that for every node $v$, one gets $\mu^*(v)<\top$ only if $v$ can reach an even cycle in $G_\tau$ (Lemma~\ref{lem:tight_cycle}).
However, it is unclear which even cycle determines the value of $\mu^*(v)$ a priori.
To overcome this issue, we consider maximum priority nodes in even cycles of $G_\tau$, which we call \emph{base nodes}.
They will play the role of target nodes.

The second challenge is how to initialize the labels on the base nodes. When $T$ is a perfect $n$-ary tree, one can show that for every base node $w$, the label $\mu^*(w)$ is equal to the smallest leaf in $T$ (Theorem~\ref{thm:label-setting} and Remark~\ref{rem:label-setting}). This property of perfect $n$-ary trees allows us to compute $\mu^*$ using a label-setting method, with the initial node labeling $\nu$ given by $\nu(v) = \min L(T)$ if $v$ is a base node, and $\nu(v) = \top$ otherwise. Here, $\top$ is defined to be bigger than every element in $L(T)$ (so $\top$ is analogous to $+\infty$ for real numbers).

For the shortest path problem, Dijkstra's algorithm selects a node with the smallest label to be fixed in every iteration. Unfortunately, this criterion does not work for us because the representation of a parity game as a mean payoff game can have negative arc weights, and so the correctness of Dijkstra's algorithm is not guaranteed. To fix this issue, we define a potential function to guide the algorithm in selecting the correct node.
Let $H$ be the subgraph of $G_\tau$ obtained by deleting all the base nodes.
For $p\in \mathbb{N}$, let $H_p$ be the subgraph of $H$ induced by the nodes with priority at most $p$.
The potential function is constructed by interlacing the label on a node with a tuple that encodes its topological orders in $H_2,H_4,\dots$.
In every iteration, a node with the smallest potential is selected, and its label is fixed.

When $T$ is a general universal tree, we lose the property that $\mu^*(w) = \min L(T)$ for all base nodes $w$. Since we do not know $\mu^*(w)$, we resort to label-correcting techniques. It turns out that the Bellman--Ford algorithm has a less stringent requirement on the initial node labeling $\nu$. For every base node $w$, we identify a range $[\mu^*(w),\widehat\mu(w)]$ for $\nu(w)$ such that the Bellman--Ford algorithm is guaranteed to return $\mu^*$.

To understand the upper bound $\widehat\mu(w)$, consider the set of cycles in $G_\tau$ which contain $w$ as a maximum priority node.
Clearly, every such cycle $C$ induces a subgame $(C,\pi)$ on which Even wins because $\pi(w)$ is even.
The least fixed point of $(C,\pi)$ consists of leaves in $T$ which belong to a subtree $T_C$ of height $j:=\pi(w)/2$.
The upper bound $\widehat\mu(w)$ is given by a cycle $C$ with the `narrowest' $T_C$. Since the subtrees of height $j$ in $T$ form a poset $(\mathcal{T}_j,\sqsubseteq)$ with respect to the partial order of embeddability, the notion of `width' here is defined using a chain cover $\mathcal{C}_j$ of $(\mathcal{T}_j,\sqsubseteq)$.

To compute $\nu(w)\in [\mu^*(w),\widehat\mu(w)]$ for a base node $w$, we construct an arc-weighted auxiliary digraph $D$ on the set of base nodes.
Every arc $uv$ in $D$ represents a path from base node $u$ to base node $v$ in $G_\tau$.
We show that a minimum bottleneck cycle containing $w$ in $D$ gives rise to a cycle $C$ in $G_\tau$, such that the smallest leaf in $T_C$ lies in $[\mu^*(w),\widehat\mu(w)]$.
After computing $\nu(w)$ for all base nodes $w$, we set $\nu(v) = \top$ for the other nodes.

With this initial node labeling $\nu$, the Bellman--Ford algorithm returns $\mu^*$ in $O(mn)$ steps.
The overall running time is dominated by the computation of $\nu$, whose running time is proportional to the size of the chain cover $\mathcal{C}_j$.
We prove that the quasi-polynomial universal trees constructed in the literature \cite{conf/lics/JurdzinskiL17,conf/icalp/DaviaudJT20} admit small chain covers.
Using this result, we then give efficient implementations of our method for these trees.

\paragraph{Paper organization}
In Section \ref{sec:prelim}, we introduce notation and provide the necessary preliminaries on parity games and universal trees.
Section \ref{sec:strategy_iteration} contains our strategy iteration framework based on universal trees. 
Section \ref{sec:label-setting} gives a label-setting method for computing the least fixed point of 1-player games, and demonstrates its applicability to perfect $n$-ary trees.
In Section \ref{sec:label_correcting}, we develop a label-correcting method for the same task, and apply it to known constructions of quasi-polynomial universal trees.

\section{Preliminaries on Parity Games and Universal Trees}
\label{sec:prelim}

For $d\in \N$, let $[d] = \{1,2,\dots,d\}$.
For a graph $G$, we use $V(G)$ as its vertex set, and $E(G)$ as its edge set.
A parity game instance is given by $(G,\pi)$, where $G=(V,E)$ is a sinkless directed graph with $V=V_0\sqcup V_1$, and $\pi:V\rightarrow [d]$ is a priority function.
Recall that the players are Even and Odd, who control vertices in $V_0$ and $V_1$ respectively.
Without loss of generality, we may assume that $d$ is even.
In this paper, we are only concerned with positional strategies.
A \emph{strategy} for Odd is a function $\tau:V_1\rightarrow V$ such that $v\tau(v)\in E$ for all $v\in V_1$. 
Its \emph{strategy subgraph} is $G_\tau = (V,E_\tau)$, where $E_\tau := \set{vw\in E:v\in V_0} \cup \set{v\tau(v):v\in V_1}$.
A strategy for Even and its strategy subgraph are defined analogously.
We always denote a strategy for Even as $\sigma$, and a strategy for Odd as $\tau$.
If we fix a strategy $\tau$ for Odd, the resulting instance $(G_\tau,\pi)$ is a \emph{1-player game} for Even.

For the sake of brevity, we overload the priority function $\pi$ as follows.
Given a subgraph $H\subseteq G$, let $\pi(H)$ be the highest priority in $H$.
The subgraph $H$ is said to be \emph{even} if $\pi(H)$ is even, and \emph{odd} otherwise.
For a fixed $\pi$, we denote by $\Pi(H)$ the set of nodes with the highest priority in~$H$.
If $v\in \Pi(H)$, we say that $v$ \emph{dominates} $H$.
For $p\in [d]$, $H_p$ refers to the subgraph of $H$ induced by nodes with priority at most $p$.
For a node $v$, let $\delta^-_H(v)$ and $\delta^+_H(v)$ be the incoming and outgoing arcs of $v$ in $H$ respectively.
Similarly, let $N^-_H(v)$ and $N^+_H(v)$ be the in-neighbors and out-neighbors of $v$ in $H$ respectively.
When $H$ is clear from context, we will omit it from the subscripts.

The win of a player can be certified by \emph{node labels} from a \emph{universal tree}, as stated in Theorem~\ref{thm:feasible_pg}. 
We give the necessary background for this now.

\subsection{Ordered Trees and Universal Trees}

An \emph{ordered tree} $T$ is a prefix-closed set of tuples, whose elements are drawn from a linearly ordered set $M$.
The linear order of $M$ lexicographically extends to $T$.  
We note that any proper prefix of a tuple is smaller than the tuple itself.
Equivalently, $T$ can be thought of as a rooted tree, whose root we denote by $r$.
Under this interpretation, elements in $M$ correspond to the branching directions at each vertex of $T$ (see Figures~\ref{fig:perfect_tree} and~\ref{fig:succinct_tree} for examples). 
Every tuple then corresponds to a vertex $v\in V(T)$.
This is because the tuple can be read by traversing the unique $r$-$v$ path in $T$.
Observe that $v$ is an $h$-tuple if and only if $v$ is at depth $h$ in $T$.
In particular, $r$ is the empty tuple.

In this paper, we always use the terms `vertex' and `edge' when referring to an ordered tree $T$.
The terms `node' and `arc' are reserved for the game graph $G$.

Given an ordered tree $T$ of height $h$, let $L(T)$ be the set of leaves in $T$.
For convenience, we assume that every leaf in $T$ is at depth $h$ throughout.
The tuple representing a leaf $\xi\in L(T)$ is denoted as $\xi = (\xi_{2h-1}, \xi_{2h-3}, \dots, \xi_1)$, where $\xi_i\in M$ for all $i$.
We refer to $\xi_{2h-1}$ as the \emph{first} component of $\xi$, even though it has index $2h-1$. 
For a fixed $p\in [2h]$, the \emph{$p$-truncation} of $\xi$ is
\[\xi|_p := \begin{cases}
    (\xi_{2h-1}, \xi_{2h-3}, \dots, \xi_{p+1}), &\text{if $p$ is even}\\
    (\xi_{2h-1}, \xi_{2h-3}, \dots, \xi_p), &\text{if $p$ is odd.}
\end{cases}\]
In other words, the $p$-truncation of a tuple is obtained by deleting the components with index less than $p$.
Note that a truncated tuple is an ancestor of the untruncated tuple in $T$.

\begin{defi} \label{def:embedding-order}
  Given ordered trees $T$ and $T'$, we say that $T$ \emph{embeds into} $T'$ (denoted $T\sqsubseteq T'$) if there exists an \emph{injective} and \emph{order-preserving} homomorphism from $T$ to $T'$ such that leaves in $T$ are mapped to leaves in $T'$. 
  Formally, this is an injective function $f:V(T)\rightarrow V(T')$ which satisfies the following properties:
\begin{enumerate}
    \item For all $u,v\in V(T)$, $uv\in E(T)$ implies $f(u)f(v)\in E(T')$;
    \item For all $u,v\in V(T)$, $u\leq v$ implies $f(u)\leq f(v)$.
    \item $f(u) \in L(T')$ for all $u \in L(T)$. 
\end{enumerate} 
We write $T \equiv T'$ if $T\sqsubseteq T'$ and $T'\sqsubseteq T$.
Also, $T\sqsubset T'$ if $T\sqsubseteq T'$ and $T\not\equiv T'$.
\end{defi}

In the definition above, since $f$ is order-preserving, the children of every vertex in $T$ are mapped to the children of its image injectively such that their order is preserved.
As an example, the tree in Figure \ref{fig:succinct_tree} embeds into the tree in Figure \ref{fig:perfect_tree}.
It is easy to verify that $\sqsubseteq$ is a partial order on the set of all ordered trees.

\begin{defi}
An \emph{$(\ell,h)$-universal tree} is an ordered tree $T'$ of height $h$ such that $T\sqsubseteq T'$ for every ordered tree $T$ of height $h$ and with at most $\ell$ leaves, all at depth exactly $h$.
\end{defi}

The simplest example of an $(\ell,h)$-universal tree is the perfect $\ell$-ary tree of height $h$, which we call a \emph{perfect universal tree}. 
The linearly ordered set $M$ for this tree can be chosen as $\set{0,1,\dots,\ell-1}$ (see Figure \ref{fig:perfect_tree} for an example).
It has $\ell^h$ leaves, which grows exponentially with $h$. 
Jurdzi\'{n}ski and Lazi\'{c} \cite{conf/lics/JurdzinskiL17} constructed an $(\ell,h)$-universal tree with at most $\ell^{\log h + O(1)}$ leaves, which we call a \emph{succinct universal tree}. 
In this tree, every leaf $\xi$ corresponds to an $h$-tuple of binary strings with at most $\floor{\log(\ell)}$ bits in total\footnote{A slightly looser bound of $\ceil{\log\ell}$ was derived in \cite[Lemma 1]{conf/lics/JurdzinskiL17}. It can be strengthened to $\floor{\log \ell}$ with virtually no change in the proof.}.
We use $\size{\xi}$ and $\size{\xi_i}$ to denote the total number of bits in $\xi$ and $\xi_i$ respectively.
The linearly ordered set $M$ for this tree consists of finite binary strings, where $\varepsilon\in M$ is the empty string (see Figure \ref{fig:succinct_tree} for an example).
For any pair of binary strings $s,s'\in M$ and a bit $b$, the linear order on $M$ is defined as $0s<\varepsilon<1s'$ and $bs<bs' \iff s<s'$.

\begin{figure}[h!]
\def\x{0.7}
\def\y{1.4}
\begin{minipage}{0.49\textwidth}
\centering
\begin{tikzpicture}[scale=0.9]
\begin{scope}[every node/.style={circle,draw, inner sep=2pt}]
    \node (p) at (0,0) {};
    \node (p1) at (-3*\x,-1*\y) {};   
    \node (p2) at (0,-1*\y) {};
    \node (p3) at (3*\x,-1*\y) {};
    \node (p11) at (-3*\x-\x,-2*\y) {};   
    \node (p12) at (-3*\x,-2*\y) {};
    \node (p13) at (-3*\x+\x,-2*\y) {};
    \node (p21) at (0-\x,-2*\y) {};   
    \node (p22) at (0,-2*\y) {};
    \node (p23) at (0+\x,-2*\y) {};
    \node (p31) at (3*\x-\x,-2*\y) {};   
    \node (p32) at (3*\x,-2*\y) {};
    \node (p33) at (3*\x+\x,-2*\y) {};
\end{scope}

\begin{scope}[every node/.style={fill=white,circle,font=\footnotesize,inner sep=2pt},
              every edge/.style={draw}]
    \path [-] (p) edge node {0} (p1);
    \path [-] (p) edge node {1} (p2);
    \path [-] (p) edge node {2} (p3);
    \path [-] (p1) edge node {0} (p11);
    \path [-] (p1) edge node {1} (p12);
    \path [-] (p1) edge node {2} (p13);
    \path [-] (p2) edge node {0} (p21);
    \path [-] (p2) edge node {1} (p22);
    \path [-] (p2) edge node {2} (p23);
    \path [-] (p3) edge node {0} (p31);
    \path [-] (p3) edge node {1} (p32);
    \path [-] (p3) edge node {2} (p33);
\end{scope}
\end{tikzpicture}
\caption{The perfect (3,2)-universal tree.}
\label{fig:perfect_tree}
\end{minipage}
\begin{minipage}{0.49\textwidth}
\centering
\begin{tikzpicture}[scale=0.9]
\begin{scope}[every node/.style={circle,draw, inner sep=2pt}]
    \node (p) at (0,0) {};
    \node (p1) at (-3*\x,-1*\y) {};   
    \node (p2) at (0,-1*\y) {};
    \node (p3) at (3*\x,-1*\y) {}; 
    \node (p12) at (-3*\x,-2*\y) {};
    \node (p21) at (0-\x,-2*\y) {};   
    \node (p22) at (0,-2*\y) {};
    \node (p23) at (0+\x,-2*\y) {};
    \node (p32) at (3*\x,-2*\y) {};
\end{scope}

\begin{scope}[every node/.style={fill=white,circle,font=\footnotesize,inner sep=2pt},
              every edge/.style={draw}]
    \path [-] (p) edge node {0} (p1);
    \path [-] (p) edge node {$\varepsilon$} (p2);
    \path [-] (p) edge node {1} (p3);
    \path [-] (p1) edge node {$\varepsilon$} (p12);
    \path [-] (p2) edge node {0} (p21);
    \path [-] (p2) edge node {$\varepsilon$} (p22);
    \path [-] (p2) edge node {1} (p23);
    \path [-] (p3) edge node {$\varepsilon$} (p32);
\end{scope}
\end{tikzpicture}
\caption{The succinct (3,2)-universal tree.}
\label{fig:succinct_tree}
\end{minipage}
\end{figure}

\subsection{Node Labelings from Universal Trees}

Let $(G,\pi)$ be a parity game instance and $T$ be an ordered tree of height $d/2$.
We augment the set of leaves with an extra \emph{top} element~$\top$, denoted $\bar{L}(T):=L(T)\cup\set{\top}$, such that $\top > v$ for all $v\in V(T)$.
In this way, $\top$ is analogous to $+\infty$ for real numbers.
We also set 
$\top|_p := \top$ for all $p\in [d]$.
A function $\mu:V\rightarrow \bar{L}(T)$ which maps the nodes in $G$ to $\bar{L}(T)$ is called a \emph{node labeling}. 
The node labeling is \emph{finite} if $\mu(v)\neq\top$ for all $v\in V$.

For a subgraph $H$ of $G$, we say that $\mu$ is \emph{feasible in $H$} if there exists a strategy $\sigma:V_0\rightarrow V$ for Even with $v\sigma(v)\in E(H)$ whenever $\delta^+_H(v)\neq \emptyset$, such that the following condition holds for every arc $vw$ in $H\cap G_{\sigma}$:
\begin{itemize}
    \item If $\pi(v)$ is even, then $\mu(v)|_{\pi(v)} \geq \mu(w)|_{\pi(v)}$.
    \item If $\pi(v)$ is odd, then $\mu(v)|_{\pi(v)} > \mu(w)|_{\pi(v)}$ or $\mu(v) = \mu(w) = \top$.
\end{itemize}
An arc $vw$ which does not satisfy the condition above is called \emph{violated} (with respect to $\mu$).
On the other hand, we distinguish two possibilities for non-violated arcs. 
If $\mu(v)$ is the smallest element in $\bar{L}(T)$ such that $vw$ is not violated, then $vw$ is said to be \emph{tight}.
Any arc which is neither tight nor violated is called \emph{loose}.
We remark that loose does not mean non-tight.
We say that a subgraph is \emph{tight} if it consists of tight arcs.

In the literature, a node labeling which is feasible in $G$ is also called a \emph{progress measure}.
The node labeling given by $\mu(v) = \top$ for all $v\in V$ is trivially feasible in $G$.
However, we are primarily interested in progress measures with maximal finite support, i.e.~the set of nodes with non-top labels is inclusion-wise maximal.

\begin{thmC}[{\cite{conf/stacs/Jurdzinski00}}]\label{thm:feasible_pg}
Given an $(n,d/2)$-universal tree $T$, let $\mu^*:V\rightarrow \bar{L}(T)$ be a node labeling which is feasible in $G$ and has maximal finite support.
Then, Even wins from $v\in V$ if and only if $\mu^*(v) \neq \top$.
\end{thmC}

The above theorem formalizes the following intuition: nodes with smaller labels are more advantageous for Even to play on.
Note that if $\mu$ is a minimal node labeling which is feasible in $G$, i.e.~$\mu'$ is infeasible in $G$ for all $\mu'<\mu$, then there exists a strategy $\sigma$ for Even such that $v\sigma(v)$ is tight for all $v\in V_0$.
The next observation is well-known (see, e.g.,~\cite[Lemma~2]{conf/lics/JurdzinskiL17}) and follows directly from the definition of feasibility.

\begin{lem}[Cycle Lemma]\label{lem:tight_cycle}
  If a node labeling $\mu$ is feasible in a cycle $C$, then
  \begin{align*}
    \mu(v)|_{\pi(C)} = \mu(w)|_{\pi(C)} \text{ for all } v,w\in V(C) \enspace .
  \end{align*}
Furthermore, if $\mu(v) \neq \top$ for some $v\in V(C)$, then $C$ is even.
\end{lem}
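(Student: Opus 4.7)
The plan is to exploit the fact that $p$-truncation (for $p := \pi(C)$) is a further truncation of $q$-truncation for any $q \leq p$, and therefore preserves weak inequalities; applying this to every arc of $C$ collapses the chain of feasibility inequalities into the desired equality, after which the second part follows by inspecting a node that attains priority $p$.

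First I would record the monotonicity of truncation: for any $q \leq p$ and $\xi,\xi'\in \bar L(T)$, one has $\xi|_p = (\xi|_q)|_p$, because the components removed by $|_p$ properly extend those removed by $|_q$. Consequently, $\xi|_q \geq \xi'|_q$ (in lex order) implies $\xi|_p \geq \xi'|_p$: the tuples $\xi|_p$ and $\xi'|_p$ are the high-index prefixes of $\xi|_q$ and $\xi'|_q$, so any strict inequality at an index below $p$ collapses to equality while any difference at index $\geq p$ is preserved. The convention $\top|_p = \top$ handles the top element.

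Second, I would use this to promote all feasibility inequalities on $C$ to the uniform level $p$. For every arc $vw\in E(C)$ we have $\pi(v)\leq p$, and feasibility gives $\mu(v)|_{\pi(v)} \geq \mu(w)|_{\pi(v)}$ (even case), or $\mu(v)|_{\pi(v)} > \mu(w)|_{\pi(v)}$, or $\mu(v)=\mu(w)=\top$ (odd case). In every case the monotonicity of the first step yields
\[\mu(v)|_p \;\geq\; \mu(w)|_p \enspace .\]
Chaining these inequalities around the cycle $C = v_0v_1\dots v_kv_0$ forces $\mu(v_0)|_p = \mu(v_1)|_p = \dots = \mu(v_k)|_p$, which is the first conclusion.

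Finally, for the second part, suppose $\mu(v_0) \neq \top$ for some $v_0 \in V(C)$. Then $\mu(v_0)|_p \neq \top$, so by the equality just derived $\mu(v)|_p \neq \top$, and hence $\mu(v)\neq\top$, for every $v\in V(C)$. Pick $w\in V(C)$ with $\pi(w)=p$ and let $w'$ be its successor on $C$; if $p$ were odd, feasibility would require $\mu(w)|_p > \mu(w')|_p$ (the $\top$-alternative is excluded), contradicting the equality. Hence $p$ is even. The only subtle point -- and the thing the reader must not overlook -- is passing from the \emph{strict} odd-priority inequality at level $\pi(v)$ to the weakened $\geq$ at level $p$ after further truncation: additional truncation may merge distinct tuples but can never flip the order. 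Once this monotonicity is in hand, the rest is a direct chase through the definition of feasibility.
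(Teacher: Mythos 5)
Your proof is correct: the truncation-monotonicity observation, the chain of inequalities around the cycle, and the inspection of a node attaining the top priority $\pi(C)$ together give exactly the argument the paper has in mind when it says the lemma ``follows directly from the definition of feasibility'' (the paper itself supplies no proof, citing Jurdzi\'{n}ski--Lazi\'{c} instead). In particular, you correctly note that in a cycle every node has a unique outgoing arc, so the feasibility condition applies to every arc of $C$, and your handling of the $\top$ cases and of the strict odd-priority inequality is sound.
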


We assume to have access to the following algorithmic primitive \textsc{Tighten}, whose running time we denote by $\gamma(T)$.
Its implementation depends on the ordered tree $T$.
For instance, $\gamma(T) = O(d)$ if $T$ is a perfect $(n,d/2)$-universal tree.
If $T$ is a succinct $(n,d/2)$-universal tree, Jurdzi\'{n}ski and Lazi\'{c} \cite[Theorem 7]{conf/lics/JurdzinskiL17} showed that $\gamma(T) = O(\log n \log d)$.

\begin{myframe}{Tighten{$(\mu,vw)$}}
  Given a node labeling $\mu:V\rightarrow \bar{L}(T)$ and an arc $vw\in E$, return the unique element $\xi\in \bar{L}(T)$ such that $vw$ is tight after setting $\mu(v)$, the label of $v$, to $\xi$.
\end{myframe}

Given a node labeling $\mu:V\rightarrow \bar{L}(T)$ and an arc $vw\in E$, let $\text{lift}(\mu,vw)$ be the smallest element $\xi\in \bar{L}(T)$ such that $\xi\geq \mu(v)$ and $vw$ is not violated after setting $\mu(v)$ to $\xi$.
Observe that if $vw$ is violated, $\text{lift}(\mu,vw)$ is given by {\sc Tighten}$(\mu,vw)$.
Otherwise, it is equal to $\mu(v)$.
Hence, it can be computed in $\gamma(T)$ time.

\subsection{Fixed Points in Lattices}

We recall a fundamental result on the existence of fixed points in lattices.
Let $\Lcal$ be a non-empty finite lattice, and let $\mu, \nu \in \Lcal$.
An operator $\phi \colon \Lcal \to \Lcal$ is \emph{monotone} if $\mu \leq \nu \Rightarrow \phi(\mu) \leq \phi(\nu)$, and it is \emph{inflationary} if $\mu \leq \phi(\mu)$.
Given a family $\mathcal{G}$ of inflationary monotone operators on $\Lcal$, we denote $\mu^{\mathcal{G}}$ as the \emph{least} (simultaneous) fixed point of ${\mathcal{G}}$ which is \emph{at least} $\mu$.
The following proposition, which can be seen a variant of the Knaster--Tarski theorem, guarantees the existence of $\mu^\mathcal{G}$.
We remark that $\mu^{\mathcal{G}}$ may not exist if the inflationary assumption is removed.

\begin{prop}\label{prop:Tarski}
Let $\mathcal{G}$ be a family of inflationary monotone operators on a non-empty finite lattice $\mathcal{L}$. For any $\mu\in \mathcal{L}$ and $\mathcal{H}\subseteq \mathcal{G}$,
  \begin{enumerate}[label=(\roman*)]
  \item The least fixed point $\mu^{\mathcal{H}}$ exists. It can be obtained by applying the operators in $\mathcal{H}$ to $\mu$ in any order until convergence.
  \item The least fixed point is non-decreasing with the set of operators: $\mu^{\mathcal{H}} \leq \mu^{\mathcal{G}}$.
  \item The least fixed point is monotone: if $\mu \leq \nu$ then $\mu^{\mathcal{H}} \leq \nu^{\mathcal{H}}$. 
  \end{enumerate} 
\end{prop}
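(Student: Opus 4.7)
The plan is to prove (i) directly through a concrete iteration argument, and then obtain (ii) and (iii) as immediate corollaries of the characterization of the least fixed point.

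For (i), I would define a fair iteration starting from $\mu_0 := \mu$: at step $k$, if $\phi(\mu_k) = \mu_k$ for every $\phi \in \mathcal{H}$, stop; otherwise pick any $\phi \in \mathcal{H}$ with $\phi(\mu_k) \neq \mu_k$ and set $\mu_{k+1} := \phi(\mu_k)$. By the inflationary property, $\phi(\mu_k) \geq \mu_k$, so whenever the iteration does not halt, $\mu_{k+1} > \mu_k$ strictly. Since $\mathcal{L}$ is finite, this chain must terminate, at which point the terminal value $\mu^*$ is a simultaneous fixed point of $\mathcal{H}$ with $\mu^* \geq \mu$ by construction.

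The core of (i) is to show that $\mu^*$ is the \emph{least} such fixed point, and hence is independent of the ordering choices made during the iteration. Let $\nu \in \mathcal{L}$ be any fixed point of $\mathcal{H}$ with $\nu \geq \mu$. I would establish by induction on $k$ that $\mu_k \leq \nu$: the base case $\mu_0 = \mu \leq \nu$ is immediate, and for the inductive step, monotonicity gives $\mu_{k+1} = \phi(\mu_k) \leq \phi(\nu) = \nu$. Evaluating at the stopping time yields $\mu^* \leq \nu$, so $\mu^*$ coincides with the least fixed point $\mu^{\mathcal{H}}$ regardless of the operator sequence picked.

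Parts (ii) and (iii) then follow from this characterization with no further work. For (ii), the element $\mu^{\mathcal{G}}$ is a fixed point of every operator in $\mathcal{G}$, hence in $\mathcal{H} \subseteq \mathcal{G}$, and satisfies $\mu^{\mathcal{G}} \geq \mu$; leastness of $\mu^{\mathcal{H}}$ among such fixed points gives $\mu^{\mathcal{H}} \leq \mu^{\mathcal{G}}$. For (iii), $\nu^{\mathcal{H}}$ is a simultaneous fixed point of $\mathcal{H}$ with $\nu^{\mathcal{H}} \geq \nu \geq \mu$, so leastness again yields $\mu^{\mathcal{H}} \leq \nu^{\mathcal{H}}$.

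I do not anticipate any substantive obstacle here: the inflationary hypothesis is exactly what prevents the iteration from overshooting any candidate fixed point, and finiteness of $\mathcal{L}$ obviates the ordinal or continuity machinery typical of the Knaster--Tarski setting. The only subtlety worth stating explicitly is that the iteration must be \emph{fair}, i.e.\ one only halts when no operator in $\mathcal{H}$ strictly increases the current value; otherwise the process could terminate at a value that is not a simultaneous fixed point.
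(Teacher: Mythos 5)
Your proposal is correct and follows essentially the same route as the paper's proof: terminate a maximal (fair) iteration using finiteness and the inflationary property, then show by induction with monotonicity that every iterate stays below any simultaneous fixed point above $\mu$, with (ii) and (iii) following directly from leastness. The only cosmetic difference is that the paper phrases the induction via the composition $(\phi_i\circ\cdots\circ\phi_1)(\mu')\geq\mu_{i+1}$, whereas you apply monotonicity one operator at a time, which is equivalent.
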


\begin{proof}
For part (i), let $\mu \eqqcolon \mu_1 < \mu_2 < \dots < \mu_k$ be a maximal sequence obtained by applying the operators in $\mathcal{H}$ to $\mu$; this sequence is finite because $\mathcal{L}$ is finite.
By maximality, $\mu_k$ is a simultaneous fixed point of $\mathcal{H}$.
To show minimality of this fixed point $\mu_k$, let $\mu'\geq \mu$ be an arbitrary simultaneous fixed point of $\mathcal{H}$.
For each $i<k$, let $\phi_i\in \mathcal{H}$ where $\phi_i(\mu_i) = \mu_{i+1}$.
Then, by monotonicity and induction, $\mu' = (\phi_i \circ \cdots \circ \phi_1)(\mu') \geq \phi_i(\mu_i) = \mu_{i+1}$ for all $i<k$.
In particular, $\mu'\geq \mu_k$, so $\mu_k = \mu^{\mathcal{H}}$.

Part (ii) holds by definition because any simultaneous fixed point of $\mathcal{G}$ is a simultaneous fixed point of $\mathcal{H}$.
Part (iii) holds again by definition because $\nu^{\mathcal{H}}$ is a simultaneous fixed point of $\mathcal{H}$ which is at least $\mu$.
\end{proof}

An operator $\psi \colon \Lcal \to \Lcal$ is \emph{deflationary} if $\mu \geq \psi(\mu)$. 
Considering the lattice through an order-reversing poset isomorphism implies that the analogous statements of Proposition~\ref{prop:Tarski} hold for deflationary monotone operators.
Given a family $\mathcal{G}$ of deflationary monotone operators on $\mathcal{L}$, we denote $\mu^{\mathcal{G}}$ as \emph{greatest} (simultaneous) fixed point of $\mathcal{G}$ which is \emph{at most} $\mu$.

In this paper, $\mathcal{L}$ is the finite lattice of node labelings mapping $V$ to $\bar{L}(T)$ for a universal tree $T$. 
The partial order of $\mathcal{L}$ is induced by the total order of $\bar{L}(T)$.
For a subgraph $H\subseteq G$, consider the following operators. %
For every node $v\in V_0$, define $\text{Lift}_v:\mathcal{L}\times V\rightarrow \bar{L}(T)$ as
\[\text{Lift}_v(\mu,u):=\begin{cases}
    \min_{vw\in E(H)}\text{lift}(\mu,vw), &\text{ if }u=v\text{ and }\delta^+_H(v)\neq \emptyset\\
    \mu(u), &\text{ otherwise.}
\end{cases}\]
For every arc $vw\in E(H)$ where $v\in V_1$, define $\text{Lift}_{vw}:\mathcal{L} \times V\rightarrow \bar{L}(T)$ as
\[\text{Lift}_{vw}(\mu,u):=\begin{cases}
    \text{lift}(\mu,vw), &\text{ if }u=v\\
    \mu(u), &\text{ otherwise.}
\end{cases}\]
We denote $\mathcal{H}^\uparrow:=\{\text{Lift}_v:v\in V_0\}\cup\{\text{Lift}_{vw}:v\in V_1\}$ as the operators in $H$.
Since they are inflationary and monotone, for any $\mu\in \mathcal{L}$, the least fixed point $\mu^{\mathcal{H}^\uparrow}$ exists by Proposition~\ref{prop:Tarski}(i).
Note that a node labeling is a fixed point of $\mathcal{H}^\uparrow$ if and only if it is feasible in $H$.
In this terminology, the \emph{progress measure algorithm} \cite{conf/stacs/Jurdzinski00,conf/lics/JurdzinskiL17} is an iterative application of the operators in $\mathcal{G}^\uparrow$ to $\mu$ to obtain $\mu^{\mathcal{G}^\uparrow}$.

\section{Strategy Iteration with Tree Labels}
\label{sec:strategy_iteration}

In this section, we present a strategy iteration algorithm (Algorithm \ref{algo:strategy+iteration}) whose pivots are guided by a universal tree.
It takes as input an instance $(G,\pi)$, a universal tree $T$, and an initial strategy $\tau_1$ for Odd.
Throughout, it maintains a node labeling $\mu:V\rightarrow \bar{L}(T)$, initialized as the least simultaneous fixed point of $\mathcal{G}^\uparrow_{\tau_1}$.
At the start of every iteration, the algorithm maintains a strategy $\tau$ for Odd such that the following two invariants hold: (1) $\mu$ is feasible in $G_\tau$; and (2) there are no loose arcs in $G_\tau$ with respect to $\mu$.
So, every arc in $G_\tau$ is either tight (usable by Even in her counterstrategy $\sigma$) or violated (not used by Even).
Note that our initial node labeling satisfies these conditions with respect to $\tau_1$.

\begin{algorithm}[htbp]
  \caption{Strategy iteration with tree labels: $(G,\pi)$ instance, $T$ universal tree, $\tau_1$ initial strategy for Odd}
  \label{algo:strategy+iteration}
  \begin{algorithmic}[1]
    \Procedure{StrategyIteration}{$(G,\pi),T,\tau_1$}
    \State $\mu(v) \gets \min L(T) \;\forall v\in V$
    \State $\tau \gets \tau_1$, $\mu\gets \mu^{\mathcal{G}^\uparrow_\tau}$
    \While{$\exists$ an admissible arc in $G$ with respect to $\mu$}
      \State Pivot to a strategy $\tau'$ by selecting admissible arc(s) \Comment{requires a pivot rule}
      \State $\tau \gets \tau'$, $\mu\gets \mu^{\mathcal{G}^\uparrow_\tau}$
    \EndWhile
    \State \Return  $\tau$, $\mu$
    \EndProcedure
  \end{algorithmic}
\end{algorithm}

For $v \in V_1$, we call a violated arc $vw\in E$ with respect to $\mu$ \emph{admissible} (as it admits Odd to perform an improvement).
If there are no admissible arcs in $G$, then the algorithm terminates.
In this case, $\mu$ is feasible in $G$. 
Otherwise, Odd pivots to a new strategy $\tau'$ by switching to admissible arc(s).
The choice of which admissible arc(s) to pick is governed by a \emph{pivot rule}.
Then, $\mu$ is updated to $\mu^{\mathcal{G}^\uparrow_{\tau'}}$.
Note that both invariants continue to hold at the start of the next iteration. Indeed, there are no loose arc in $G_\tau$ with respect to $\mu$ due to the minimality of $\mu^{\mathcal{G}^\uparrow_{\tau'}}$.

We remark that a strategy $\tau$ may occur more than once during the course of the algorithm, as mentioned in the description of strategy iteration in Section~\ref{sec:intro}.
This is because the fixed points of $\mathcal{G}^\uparrow_\tau$ are not necessarily unique.
See Figure \ref{fig:iterations} for an example run with a perfect universal tree and a succinct universal tree.

\begin{figure}[ht] 
\begin{minipage}{0.33\textwidth}
\centering
\begin{tikzpicture}[scale=0.8]
\node (left) at (-5,1) {};
\node (above) at (0,2.9) {};
\node (below) at (0,-.9) {};
\node (l1) at (-4.5,1) {\footnotesize $(0,0)$};
\node (l2) at (0,2.65) {\footnotesize $(0,0)$};

\begin{scope}[every node/.style={circle,draw,inner sep = 2pt}]
    \node (A) at (0,2) {1};    
    \node (E) at (-3.7,1) {2};
\end{scope}
\begin{scope}[every node/.style={rectangle,draw}]
    \node[label={below}:{\footnotesize $(0,0)$}] (B) at (0,0) {1};    
    \node[label={below}:{\footnotesize $(0,0)$}] (C) at (-2,0) {3};
    \node[label={above}:{\footnotesize $(0,0)$}] (D) at (-2,2) {4};
\end{scope}

\begin{scope}[>={Stealth[black]},
              every edge/.style={draw}]
    \path [->] (B) edge [bend right] (A);
    \path [->] (C) edge (D);
    \path [->] (D) edge [bend right] (E);
    \path [->] (E) edge [bend right] (C);
    \path [->] (C) edge [bend right] (E);
    \path [->] (C) edge (B);

    \path [->] (A) edge [bend right] (B);
    \path [->] (A) edge (D);
\end{scope}
\end{tikzpicture}

\begin{tikzpicture}[scale=0.8]
\node (left) at (-5,1) {};
\node (above) at (0,2.9) {};
\node (below) at (0,-.9) {};
\node (l1) at (-4.5,1) {\footnotesize $(0,\varepsilon)$};
\node (l2) at (0,2.65) {\footnotesize $(0,\varepsilon)$};

\begin{scope}[every node/.style={circle,draw,inner sep = 2pt}]
    \node (A) at (0,2) {1};    
    \node (E) at (-3.7,1) {2};
\end{scope}
\begin{scope}[every node/.style={rectangle,draw}]
    \node[label={below}:{\footnotesize $(0,\varepsilon)$}] (B) at (0,0) {1};    
    \node[label={below}:{\footnotesize $(0,\varepsilon)$}] (C) at (-2,0) {3};
    \node[label={above}:{\footnotesize $(0,\varepsilon)$}] (D) at (-2,2) {4};
\end{scope}

\begin{scope}[>={Stealth[black]},
              every edge/.style={draw}]
    \path [->] (B) edge [bend right] (A);
    \path [->] (C) edge (D);
    \path [->] (D) edge [bend right] (E);
    \path [->] (E) edge [bend right] (C);
    \path [->] (C) edge [bend right] (E);
    \path [->] (C) edge (B);

    \path [->] (A) edge [bend right] (B);
    \path [->] (A) edge (D);
\end{scope}
\end{tikzpicture}
\end{minipage}
\begin{minipage}{0.33\textwidth}
\centering
\begin{tikzpicture}[scale=0.8]
\node (left) at (-5,1) {};
\node (above) at (0,2.9) {};
\node (below) at (0,-.9) {};
\node (l1) at (-4.5,1) {\footnotesize $(1,0)$};
\node (l2) at (0,2.65) {\footnotesize $(0,1)$};

\begin{scope}[every node/.style={circle,draw,inner sep = 2pt}]
    \node (A) at (0,2) {1};    
    \node (E) at (-3.7,1) {2};
\end{scope}
\begin{scope}[every node/.style={rectangle,draw}]
    \node[label={below}:{\footnotesize $(0,2)$}] (B) at (0,0) {1};    
    \node[label={below}:{\footnotesize $(1,0)$}] (C) at (-2,0) {3};
    \node[label={above}:{\footnotesize $(0,0)$}] (D) at (-2,2) {4};
\end{scope}

\begin{scope}[>={Stealth[black]},
              every edge/.style={draw}]
    \path [->] (B) edge [bend right] (A);
    \path [->] (C) edge (D);
    \path [->] (D) edge [bend right] (E);
    \path [->] (E) edge [bend right] (C);
    \path [->] (C) edge [bend right] (E);
    \path [->] (C) edge (B);

    \path [->, opacity=.2] (A) edge [bend right] (B);
    \path [->] (A) edge (D);
\end{scope}
\end{tikzpicture}

\begin{tikzpicture}[scale=0.8]
\node (left) at (-5,1) {};
\node (above) at (0,2.9) {};
\node (below) at (0,-.9) {};
\node (l1) at (-4.5,1) {\footnotesize $(\varepsilon,0)$};
\node (l2) at (0,2.65) {\footnotesize $(\varepsilon,0)$};

\begin{scope}[every node/.style={circle,draw,inner sep = 2pt}]
    \node (A) at (0,2) {1};    
    \node (E) at (-3.7,1) {2};
\end{scope}
\begin{scope}[every node/.style={rectangle,draw}]
    \node[label={below}:{\footnotesize $(\varepsilon,\varepsilon)$}] (B) at (0,0) {1};    
    \node[label={below}:{\footnotesize $(\varepsilon,0)$}] (C) at (-2,0) {3};
    \node[label={above}:{\footnotesize $(0,\varepsilon)$}] (D) at (-2,2) {4};
\end{scope}

\begin{scope}[>={Stealth[black]},
              every edge/.style={draw}]
    \path [->] (B) edge [bend right] (A);
    \path [->] (C) edge (D);
    \path [->] (D) edge [bend right] (E);
    \path [->] (E) edge [bend right] (C);
    \path [->] (C) edge [bend right] (E);
    \path [->] (C) edge (B);

    \path [->, opacity=.2] (A) edge [bend right] (B);
    \path [->] (A) edge (D);
\end{scope}
\end{tikzpicture}
\end{minipage}
\begin{minipage}{0.32\textwidth}
\centering
\begin{tikzpicture}[scale=0.8]
\node (left) at (-5,1) {};
\node (above) at (0,2.9) {};
\node (below) at (0,-.9) {};
\node (l1) at (-4.5,1) {\footnotesize $(1,0)$};
\node (l2) at (0,2.65) {\footnotesize $\top$};

\begin{scope}[every node/.style={circle,draw,inner sep = 2pt}]
    \node (A) at (0,2) {1};    
    \node (E) at (-3.7,1) {2};
\end{scope}
\begin{scope}[every node/.style={rectangle,draw}]
    \node[label={below}:{\footnotesize $\top$}] (B) at (0,0) {1};    
    \node[label={below}:{\footnotesize $(1,0)$}] (C) at (-2,0) {3};
    \node[label={above}:{\footnotesize $(0,0)$}] (D) at (-2,2) {4};
\end{scope}

\begin{scope}[>={Stealth[black]},
              every edge/.style={draw}]
    \path [->] (B) edge [bend right] (A);
    \path [->] (C) edge node[right] {\small $e_2$} (D);
    \path [->] (D) edge [bend right] (E);
    \path [->] (E) edge [bend right] (C);
    \path [->] (C) edge [bend right] (E);
    \path [->] (C) edge node[above] {\small $e_3$} (B);

    \path [->] (A) edge [bend right] (B);
    \path [->,opacity=.2] (A) edge node[below] {\small $e_1$} (D);
\end{scope}
\end{tikzpicture}

\begin{tikzpicture}[scale=0.8]
\node (left) at (-5,1) {};
\node (above) at (0,2.9) {};
\node (below) at (0,-.9) {};
\node (l1) at (-4.5,1) {\footnotesize $(\varepsilon,0)$};
\node (l2) at (0,2.65) {\footnotesize $\top$};

\begin{scope}[every node/.style={circle,draw,inner sep = 2pt}]
    \node (A) at (0,2) {1};    
    \node (E) at (-3.7,1) {2};
\end{scope}
\begin{scope}[every node/.style={rectangle,draw}]
    \node[label={below}:{\footnotesize $\top$}] (B) at (0,0) {1};    
    \node[label={below}:{\footnotesize $(\varepsilon,0)$}] (C) at (-2,0) {3};
    \node[label={above}:{\footnotesize $(0,\varepsilon)$}] (D) at (-2,2) {4};
\end{scope}

\begin{scope}[>={Stealth[black]},
              every edge/.style={draw}]
    \path [->] (B) edge [bend right] (A);
    \path [->] (C) edge node[right] {\small $e_2$} (D);
    \path [->] (D) edge [bend right] (E);
    \path [->] (E) edge [bend right] (C);
    \path [->] (C) edge [bend right] (E);
    \path [->] (C) edge node[above] {\small $e_3$} (B);

    \path [->] (A) edge [bend right] (B);
    \path [->,opacity=.2] (A) edge node[below] {\small $e_1$} (D);
\end{scope}
\end{tikzpicture}
\end{minipage}
\caption{The top and bottom rows illustrate an example run of Algorithm \ref{algo:strategy+iteration} with the perfect (3,2)-universal tree and the succinct (3,2)-universal tree respectively.
  In each row, the left figure depicts the game instance (nodes in $V_0$ and $V_1$ are drawn as squares and circles respectively).
  The next two figures show Odd's strategy and the node labeling at the start of Iteration 1 and 2.
  The arcs not selected by Odd are greyed out.
  In the right figure, $e_1$ is loose, $e_2$ is tight, and $e_3$ is violated. }
\label{fig:iterations}
\end{figure}
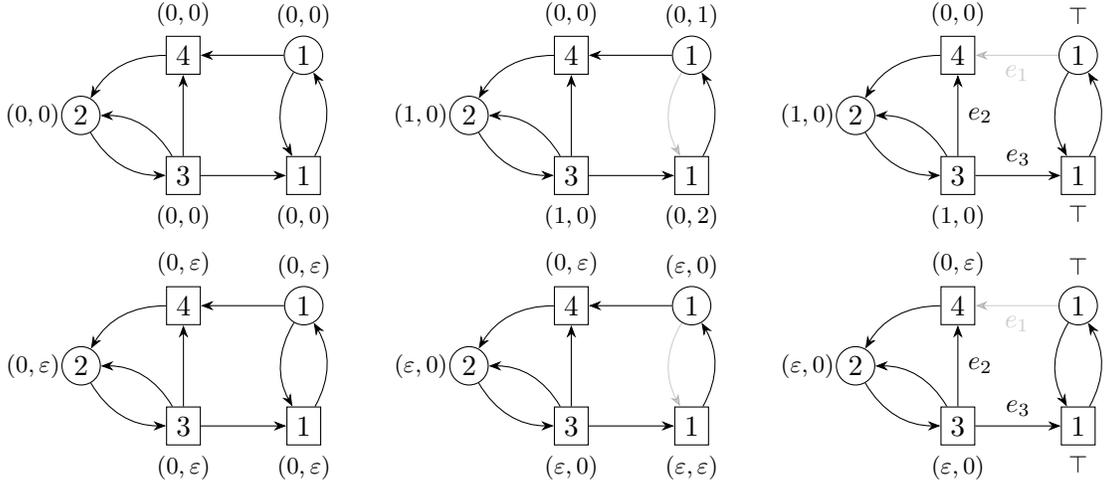

The correctness of Algorithm~\ref{algo:strategy+iteration} is an easy consequence of Proposition \ref{prop:Tarski}.

\begin{thm}\label{thm:correctness}
Algorithm~\ref{algo:strategy+iteration} returns the pointwise minimal node labeling $\mu^*:V\rightarrow \bar{L}(T)$ which is feasible in $G$. 
\end{thm}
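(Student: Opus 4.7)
The plan is to establish three things in sequence: termination, feasibility of the returned labeling in $G$, and pointwise minimality.

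First I would argue termination. The labeling $\mu$ is monotonically non-decreasing across iterations: pivoting does not change $\mu$, and after each update $\mu \gets \mu^{\mathcal{G}^\uparrow_{\tau'}}$ the new labeling is at least $\mu$ by Proposition~\ref{prop:Tarski}(i) (the least fixed point at least $\mu$ dominates $\mu$). To see strict progress in each iteration, note that immediately after the pivot there is at least one arc $vw$ with $v \in V_1$ in $G_{\tau'}$ that was admissible (violated), so applying $\text{Lift}_{vw}$ strictly increases the label at $v$. Since $\mathcal{L}$ is a finite lattice, the algorithm terminates.

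Next I would show that when the loop exits, $\mu$ is feasible in $G$. At this point $\mu$ is feasible in $G_\tau$, so there is a strategy $\sigma$ for Even with $v\sigma(v) \in E(G_\tau)$ for all $v \in V_0$, and every arc in $G_\tau \cap G_\sigma$ is not violated. Feasibility in $G$ requires, in addition, that for every $v \in V_1$ no outgoing arc $vw$ (whether in $G_\tau$ or not) is violated. The absence of admissible arcs is exactly this condition, so $\mu$ is feasible in $G$ with the same $\sigma$.

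The main remaining step is pointwise minimality, which I would prove by invariant: throughout the run, $\mu \leq \nu$ for \emph{every} node labeling $\nu$ that is feasible in $G$. Initially this holds since $\mu(v) = \min L(T)$. For the inductive step, any $\nu$ feasible in $G$ is a fortiori feasible in $G_\tau$ for the current strategy $\tau$ (and for $\tau'$ after any pivot), hence a simultaneous fixed point of $\mathcal{G}^\uparrow_\tau$. By Proposition~\ref{prop:Tarski}(iii) applied to $\mathcal{H} = \mathcal{G}^\uparrow_\tau$, the assumption $\mu \leq \nu$ gives $\mu^{\mathcal{G}^\uparrow_\tau} \leq \nu^{\mathcal{G}^\uparrow_\tau} = \nu$, where the final equality uses that $\nu$ is already a fixed point (so its least fixed point at least itself is itself). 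Pivoting does not alter $\mu$, so the invariant survives both steps of each iteration. At termination this invariant combined with the feasibility of $\mu$ in $G$ yields $\mu = \mu^*$.

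The main conceptual obstacle is the subtle interplay between the two notions of feasibility (in $G_\tau$ versus in $G$) and the role of admissible arcs; once one observes that the absence of admissible arcs for Odd is precisely the gap between feasibility in $G_\tau$ and in $G$, and that any $\nu$ feasible in $G$ is a fixed point of $\mathcal{G}^\uparrow_\tau$ for every $\tau$, the rest is a direct application of Proposition~\ref{prop:Tarski}.
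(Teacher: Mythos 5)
Your proof is correct and follows essentially the same route as the paper's: termination from monotonicity over the finite lattice, the inductive invariant that $\mu$ stays below every labeling feasible in $G$, and the observation that the loop exits at a fixed point of $\mathcal{G}^\uparrow$. The only cosmetic difference is that you derive the invariant from Proposition~\ref{prop:Tarski}(iii) plus the fact that a $G$-feasible labeling is a fixed point of $\mathcal{G}^\uparrow_\tau$, whereas the paper invokes part (ii) via $\mathcal{G}^\uparrow_\tau\subseteq\mathcal{G}^\uparrow$; these are interchangeable.
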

\begin{proof}
The node labeling $\mu$ is monotone increasing in every iteration.
Since $\mathcal{L}$ is finite and the all-top node labeling is feasible in $G$, the algorithm terminates.
Let $\mu^*$ be the pointwise minimal node labeling which is feasible in $G$.
Note that $\mu^*$ is the least simultaneous fixed point of $\mathcal{G}^\uparrow$ in $\mathcal{L}$.
By induction and Proposition~\ref{prop:Tarski}\,(ii), we have $\mu^{\mathcal{G}^\uparrow_\tau} \leq \mu^*$ in every iteration.
As the algorithm terminates with a simultaneous fixed point of $\mathcal{G}^\uparrow$, it terminates with $\mu^*$.
\end{proof}
Thus, by Theorem~\ref{thm:feasible_pg}, the algorithm correctly determines the winning positions for Even.

\paragraph{Running time}
In Algorithm \ref{algo:strategy+iteration}, a naive method for computing the least fixed point $\mu^{\mathcal{G}^\uparrow_\tau}$ is to iterate the operators in $\mathcal{G}^\uparrow_\tau$ on $\mu$ until convergence.
The operators $\text{Lift}_v$ and $\text{Lift}_{vw}$ can be implemented to run in $O(|\delta^+(v)|\gamma(T))$ time.
Since $\mu$ is monotone increasing throughout, the total running time of Algorithm \ref{algo:strategy+iteration} is 
\[O\pr{\sum_{v\in V} |\delta^+(v)|\gamma(T)|L(T)|} = O(m\gamma(T)|L(T)|),\]
which matches progress measure algorithms \cite{conf/stacs/Jurdzinski00,conf/lics/JurdzinskiL17}.
However, this method of computing $\mu^{\mathcal{G}^\uparrow_\tau}$ can take $\Omega(\gamma(T)|L(T)|)$ time; recall that $|L(T)|$ is at least quasi-polynomial for a universal tree $T$ \cite{conf/soda/CzerwinskiDFJLP19}.
Our goal is to compute $\mu^{\mathcal{G}^\uparrow_\tau}$ in polynomial time.
Nevertheless, running this method in parallel with a more efficient algorithm for computing $\mu^{\mathcal{G}^\uparrow_\tau}$ is still useful in ensuring that we are not slower than the progress measure algorithm overall.

\subsection{The Least Fixed Point of 1-Player Games}
\label{sec:prelude}

Let $(G_\tau,\pi)$ be a 1-player game for Even, and let $\mu\in \mathcal{L}$ be a node labeling such that there are no loose arcs in $G_\tau$.
In the rest of the paper, we focus on developing efficient methods for computing $\mu^{\mathcal{G}^\uparrow_\tau}$.
We know that applying the operators in $\mathcal{G}^\uparrow_\tau$ to $\mu$ is not polynomial in general.
So, we will approach $\mu^{\mathcal{G}^\uparrow_\tau}$ from above instead.

Given a node labeling $\nu:V\rightarrow \bar{L}(T)$ and an arc $vw\in E$, let $\text{drop}(\nu,vw)$ be the largest element $\xi\in \bar{L}(T)$ such that $\xi\leq \nu(v)$ and $vw$ is not loose after setting $\nu(v)$ to $\xi$.
Observe that if $vw$ is loose, then $\text{drop}(\nu,vw)$ is given by {\sc Tighten}$(\nu,vw)$.
Otherwise, it is equal to $\nu(v)$.
Hence, it can be computed in $\gamma(T)$ time.

\begin{exa}
    Let $vw\in E$ be an arc where $\pi(v) = 2$. 
    If $\nu$ is a node labeling from the perfect (3,2)-universal tree such that $\nu(v) = (1,2)$ and $\nu(w) = (1,1)$, then $\text{drop}(\nu,vw) = (1,0)$.
    If $\nu'$ is a node labeling from the succinct (3,2)-universal tree such that $\nu'(v) = (0,\varepsilon)$ and $\nu'(w) = (\varepsilon,\varepsilon)$, then $\text{drop}(\nu',vw) = (0,\varepsilon)$.
\end{exa}

We are ready to define the deflationary counterpart of $\text{Lift}_{vw}$.
For every arc $vw\in E_\tau$, define the operator $\text{Drop}_{vw}:\mathcal{L} \times V\rightarrow \bar{L}(T)$ as
\[\text{Drop}_{vw}(\nu,u):= \begin{cases}
    \text{drop}(\nu,vw), &\text{ if }u=v\\
    \nu(v), &\text{ otherwise.}
\end{cases}\]
For a subgraph $H\subseteq G_\tau$, we denote $\mathcal{H}^\downarrow:=\{\text{Drop}_e:e\in E(H)\}$ as the operators in $H$.
Since they are deflationary and monotone, for any $\nu\in \mathcal{L}$, the \emph{greatest} simultaneous fixed point $\nu^{\mathcal{H}^\downarrow}$ exists by Proposition~\ref{prop:Tarski}\,(i).
Note that a node labeling is a simultaneous fixed point of $\mathcal{H}^\downarrow$ if and only if there are no loose arcs in $H$ with respect to it.

Our techniques are inspired by the methods of \emph{label-correcting} and \emph{label-setting} for the shortest path problem.
In the shortest path problem, we have a designated target node $t$ whose label is initialized to $0$.
For us, the role of $t$ is replaced by a (potentially empty) set of even cycles in $G_\tau$, which we do not know a priori.
So, we define a set of candidates nodes called \emph{base nodes}, whose labels need to be initialized properly.

\begin{defi}
Given a 1-player game $(G_\tau,\pi)$ for Even, we call $v\in V$ a \emph{base node} 
if $v\in \Pi(C)$ for some even cycle $C$ in $G_\tau$.\
Denote $B(G_\tau)$ as the set of base nodes in $G_\tau$.
\end{defi}

Next, we introduce the {\sc MinBottleneckCycles} problem, which will play an important role in this paper. The input is a strongly connected digraph $G = (V,E)$ with arc costs $c\in \R^E$. For a cycle $C$, we define its \emph{bottleneck cost} as $c(C) := \max_{e\in E(C)} c(e)$. The goal is to compute for every node $v\in V$, the minimum bottleneck cost of a cycle which contains $v$. 
  It can be solved efficiently using a slight variation of the algorithm by \cite{conf/fossacs/KingKV01} for finding even cycles in parity automata.

\begin{thm}\label{thm:min-bottleneck-cycles}
Given a strongly connected digraph $G=(V,E)$ with $n$ nodes, $m$ arcs and $\ell$ distinct arc costs, {\sc MinBottleneckCycles} can be solved in $O(m\log \ell)$ time.
\end{thm}

\begin{proof}
The main idea is to recursively decompose the graph into strongly connected components (SCCs) using binary search.
Let $c_1<c_2<\dots < c_\ell$ be the distinct arc costs.
Consider the subgraph $H\subseteq G$ induced by arcs with cost at most $c_{\ceil{\ell/2}}$, and let $K_1, K_2, \dots, K_t$ be its SCCs.
They can be computed in $O(n+m) = O(m)$ time using Tarjan's SCCs algorithm \cite{journals/siamcomp/Tarjan72}.
For each $i\in [t]$, if $\size{E(K_i)}>0$, then every node in $K_i$ has a cycle going through it with cost at most $c_{\ceil{\ell/2}}$.
Otherwise, $K_i$ is a singleton with no self-loops.
Hence, every cycle going through it has cost greater than $c_{\ceil{\ell/2}}$.
This observation allows us to split $G$ into $t+1$ subinstances $G_1,G_2,\dots,G_{t+1}$.
The first $t$ subinstances are given by $G_i = (V_i,E_i) = K_i$ for all $i\in [t]$.
The last subinstance $G_{t+1} = (V_{t+1},E_{t+1})$ is obtained by contracting each $K_i$ into a node in $G$, and destroying any self-loops on the resulting node.
Then, the procedure is repeated on these subinstances.
In particular, for each $i\in [t]$, we consider the subgraph $H_i\subseteq G_i$ induced by arcs with cost at most $c_{\ceil{\ell/4}}$.
For $G_{t+1}$, we consider the subgraph $H_{t+1}\subseteq G_{t+1}$ induced by arcs with cost at most $c_{\ceil{3\ell/4}}$.

By construction, every edge in $G$ contributes an edge to at most one of the $t+1$ subinstances.
Hence, $\sum_{i=1}^{t+1}|E_i| \leq |E|$.
Furthermore, for every $i\in [t+1]$, we have $|E_i|\geq |V_i|$ because $G_i$ is strongly connected.
It follows that the total running time of all $t+1$ calls to Tarjan's SCCs algorithm is $O(m)$ again.
Since the recursion depth is $O(\log \ell)$, the overall running time is $O(m\log \ell)$.
\end{proof}

Computing the set of base nodes in $G_\tau$ can be reduced to {\sc MinBottleneckCycles}. For every node $v\in V$, assign cost $c(e) := \pi(v)$ to all its outgoing arcs $e\in \delta^+_{G_\tau}(v)$. Then, apply Theorem~\ref{thm:min-bottleneck-cycles} to every SCC of $G_\tau$ which contains at least one arc. It is easy to see that $v\in B(G_\tau)$ if and only if $\pi(v)$ is even and the minimum bottleneck cost of a cycle passing through $v$ is $\pi(v)$. As this reduction takes $O(n+m) = O(m)$ time, computing $B(G_\tau)$ takes $O(m\log d)$ time.

\smallskip

We start by presenting a label-setting method for computing $\mu^{\mathcal{G}^\uparrow_\tau}$, and demonstrate its applicability to perfect universal trees.
Then, we develop a label-correcting method for computing $\mu^{\mathcal{G}^\uparrow_\tau}$, and apply it to the quasi-polynomial universal trees constructed in the literature \cite{conf/lics/JurdzinskiL17,conf/icalp/DaviaudJT20}.

\section{Label-Setting Method for Computing the Least Fixed Point}
\label{sec:label-setting}

The epitome of a label-setting algorithm is none other than Dijkstra's algorithm \cite{journals/nm/Dijkstra59}. 
In this section, we develop its analogue for ordered trees.
It can be used to compute $\mu^{\mathcal{G}^\uparrow_\tau}$ whenever the labels $\{\mu^{\mathcal{G}^\uparrow_\tau}(v):v\in B(G_\tau)\}$ are known beforehand. We show that these labels are easily obtained for perfect universal trees, so it can be used to compute $\mu^{\mathcal{G}^\uparrow_\tau}$ for these trees.

The algorithm takes as input a 1-player game $(G_\tau,\pi)$ for Even and 
a node labeling $\nu:V\rightarrow \bar{L}(T)$ from some ordered tree $T$.
During its execution, the node labeling $\nu$ is updated.
The algorithm also maintains a growing node set $S\subseteq V$ such that $\nu(v)$ remains fixed for all $v\in S$.
For the sake of brevity, let us denote $H:=G_\tau\setminus \cup_{v\in B(G_\tau)}\delta^+(v)$. 
Observe that there are no even cycles in $H$.

In every iteration, a new node is added to $S$, whose label is fixed.
To determine this node, we introduce a label function $\Phi$, which remains fixed throughout the algorithm.
It encodes a family of topological orders in $H$ induced by the even priorities.
The next node to be added to $S$ is then selected using a \emph{potential} function $\Phi^{\nu}$, defined based on the labels $\Phi$ and $\nu$.
We remark that the usual criterion of selecting a node $v\in V\setminus S$ based on the smallest label $\nu(v)$ does not work. This is because the representation of a parity game as a mean payoff game can have negative arc weights, and the correctness of Dijkstra's algorithm is not guaranteed in the presence of negative arc weights.

To describe $\Phi$, we define a family of functions $\Phi_p$, parametrized by the even priorities in $H$.
For an even $p\in [d]$, the function $\Phi_p$ encodes the topological order of nodes in the subgraph $H_p$. 
Recall that $H_p$ is the subgraph of $H$ induced by the nodes with priority at most $p$.
Formally, $\Phi_p:V\rightarrow \Z_+$ is any function which satisfies the following three properties:
\begin{itemize}
  \item $\Phi_p(v) = 0$ if and only if $\pi(v) > p$;
  \item $\Phi_p(v) \geq \Phi_p(w)$ if $v$ can reach $w$ in $H_p$;
  \item $\Phi_p(v) = \Phi_p(w) > 0$ if and only if $v$ and $w$ are strongly connected in $H_p$.
\end{itemize}

The label function $\Phi:V\rightarrow \Z^{d/2}_+$ is then defined as
\[\Phi(v) := (\Phi_d(v), \Phi_{d-2}(v), \dots, \Phi_2(v)).\] 
A linear order on $\Phi(v)$ is obtained by extending the linear order of its components lexicographically.

\begin{rem}
Given a pair of nodes $v$ and $w$, comparing $\Phi(v)$ and $\Phi(w)$ amounts to finding the largest $p\in [d]$ such that $\Phi_p(v)\neq \Phi_p(w)$.
Observe that if $\Phi_q(v) = \Phi_q(w) > 0$ for some $q\in [d]$, then $\Phi_r(v) = \Phi_r(w)$ for all $r\geq q$.
On the other hand, if $\Phi_q(v) = \Phi_q(w) = 0$, then $\Phi_r(v) = \Phi_r(w) = 0$ for all $r\leq q$.
Hence, such a $p$ can be computed in $O(\log d)$ time via binary search.
\end{rem}

Given a node labeling $\nu:V\rightarrow \bar{L}(T)$, the potential function $\Phi^\nu:V\rightarrow (\Z_+^{d/2}\times L(T))\cup\set{\infty}$ is obtained by interlacing the components of $\Phi$ and $\nu$ in the following way
\[\Phi^\nu(v) := \begin{cases}
  (\Phi_d(v), \nu(v)_{d-1}, \dots, \Phi_2(v), \nu(v)_1), &\text{ if }\nu(v)\neq \top.\\
  \infty &\text{ otherwise.}
\end{cases} \]
A linear order on $\Phi^\nu(v)$ is acquired by extending the linear order of its components lexicographically.
For any $p\in [d]$, the \emph{$p$-truncation} of $\Phi(v)$ and $\Phi^\nu(v)$, denoted $\Phi(v)|_p$ and $\Phi^\nu(v)|_p$ respectively, are obtained by deleting the components with index less than $p$, with the convention $\infty|_p := \infty$.

We are ready to state Dijkstra's algorithm for ordered trees (Algorithm \ref{algo:dijkstra}).
First, it initializes the node set $S$ as $B(G_\tau)$, and sets $\nu(v) := \top$ for all $v\in V\setminus S$.
Then, for each even $p\in [d]$, it computes the topological order $\Phi_p$ by running Tarjan's SCCs algorithm on $H_p$.
Next, $\nu$ is updated by dropping the tail labels of the incoming arcs $\delta^-(S)$.
At the start of every iteration, the algorithm selects a node $u$ with minimum potential $\Phi^\nu(u)$ among all the nodes in $V\setminus S$ (ties are broken arbitrarily).
Then, it adds $u$ to $S$ and updates $\nu$ by dropping the tail labels of the incoming arcs $\delta^-(u)\cap \delta^-(S)$. 
The algorithm terminates when $S = V$.

\begin{algorithm}[htbp]
  \caption{Dijkstra: $(G_\tau,\pi)$ 1-player game for Even, $\nu:V\rightarrow \bar{L}(T)$ node labeling from an ordered tree $T$}
  \label{algo:dijkstra}
  \begin{algorithmic}[1]
    \Procedure{Dijkstra}{$(G_\tau, \pi), \nu$}
    \State $S \gets B(G_\tau)$
    \State $\nu(v) \gets \top$ for all $v\in V\setminus S$
    \State Compute topological order $\Phi_p$ for all even $p\in [d]$ \Comment{Using Tarjan's SCCs algorithm}
    \ForAll{$vw\in \delta^-(S)$}
    \State $\nu(v) \gets \text{drop}(\nu,vw)$
    \EndFor
    \While{$S\subsetneq V$}
    \State $u\in \arg\min_{v\in V\setminus S}\Phi^\nu(v)$ \Comment{Break ties arbitrarily}
    \State $S\gets S\cup\set{u}$
    \ForAll{$vu\in \delta^-(u)$ where $v\notin S$}
    \State $\nu(v) \gets \text{drop}(\nu,vu)$
    \EndFor
    \EndWhile
    \State \Return $\nu$
    \EndProcedure
  \end{algorithmic}
\end{algorithm}

An efficient implementation of Dijkstra's algorithm using Fibonacci heaps was given by Fredman and Tarjan \cite{journals/jacm/FredmanT87}.
Its running time is $O(m+n\log n)$ when the keys in the heap are real numbers, assuming that each elementary operation on the reals takes constant time.
In our setting, the keys are the node potentials $\Phi^\nu$.
Since computing drop$(\nu,e)$ takes $\gamma(T)$ time while comparing the potential of two nodes takes $\gamma(T) + \log d$ time, their result translates to $O(\gamma(T) m+ (\gamma(T)+\log d)n\log n)$ time here.
Since identifying the base nodes $B(G_\tau)$ takes $O(m\log d)$ time while computing the topological orders $\Phi_p$ takes $O(dm)$ time, the total running time of Algorithm \ref{algo:dijkstra} is $O((\gamma(T)+d)m+ (\gamma(T)+\log d)n\log n)$.

The following lemma is a consequence of no even cycles in $H$.

\begin{lem}\label{lem:tight_arc}
Let $\nu\in \mathcal{L}$ be a node labeling. For every non-violated arc $vw\in E(H)$, we have $\Phi^\nu(v)\geq\Phi^\nu(w)$.
\end{lem}

\begin{proof}
We may assume that $\nu(v)\neq \top$, as otherwise $\Phi^\nu(v) =\infty$.
Since $v$ can reach $w$ in $H$, we have $\Phi(v)|_{\pi(v)}\geq \Phi(w)|_{\pi(v)}$.
We also have $\nu(v)|_{\pi(v)}\geq \nu(w)|_{\pi(v)}$ because $vw$ is non-violated. 
Combining these two inequalities yields $\Phi^\nu(v)|_{\pi(v)} \geq \Phi^\nu(w)|_{\pi(v)}$.

For the purpose of contradiction, suppose that $\Phi^\nu(v)<\Phi^\nu(w)$.
Then, $\Phi^\nu(v)|_{\pi(v)} = \Phi^\nu(w)|_{\pi(v)}$.
Since this implies $\nu(v)|_{\pi(v)} = \nu(w)|_{\pi(v)}$, we conclude that $\pi(v)$ is even because $vw$ is non-violated and $\nu(v),\nu(w)\neq \top$.
It follows that $0 < \Phi_{\pi(v)}(v) = \Phi_{\pi(v)}(w)$; recall that $\Phi$ is indexed by the even priorities.
Hence, $v$ and $w$ are strongly connected in $H_{\pi(v)}$, so there exists a cycle $C$ in $H_{\pi(v)}$ such that $vw\in E(C)$.
As $v$ has the largest priority in $H_{\pi(v)}$, we have $\pi(C) = \pi(v)$.
Thus, $C$ is even, which is a contradiction.
\end{proof}

The next theorem shows that Algorithm \ref{algo:dijkstra} returns the pointwise minimal node labeling which is feasible in $H$.
The key observation is that the sequence of node potentials admitted to $S$ during the algorithm is monotonically nondecreasing. 

\begin{thm}\label{thm:dijkstra}
Given an initial node labeling $\nu\in \mathcal{L}$, Algorithm \ref{algo:dijkstra} returns the pointwise minimal node labeling $\nu^*\in \mathcal{L}$ which is feasible in $H$ and satisfies $\nu^*(v) = \nu(v)$ for all $v\in B(G_\tau)$.
\end{thm}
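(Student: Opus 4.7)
The plan is to prove, by induction on extraction order, the invariant that whenever a node $u$ is added to $S$ in the main while-loop, the current value of $\nu(u)$ equals $\nu^*(u)$. Since labels of nodes in $S$ are never modified again and the algorithm terminates when $S=V$, this invariant immediately yields that Algorithm~\ref{algo:dijkstra} returns $\nu^*$.

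Two preliminary facts underpin the argument. First, $\nu \geq \nu^*$ throughout the execution. Indeed, $\nu^*$ is a fixed point of $\mathcal{H}^\downarrow$: any loose arc $vw$ at $\nu^*$ would allow us to strictly decrease $\nu^*(v)$ while preserving feasibility (for $v\in V_1$ the arc becomes tight; for $v\in V_0$ the formerly loose arc becomes tight and can serve as Even's feasibility choice), contradicting pointwise minimality. Combined with the monotonicity of drop operators and the initialization $\nu(v) = \nu^*(v)$ for $v\in B(G_\tau)$, this gives $\nu \geq \nu^*$ always; moreover, once $\nu(v) = \nu^*(v)$ is reached, any subsequent drop keeps the value at $\nu^*(v)$ (because the tight value for any outgoing arc at node $v$ is at least $\nu^*(v)$ whenever the outgoing labels are $\geq \nu^*$). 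Second, every non-base node admits a tight-at-$\nu^*$ outgoing arc in $H$: for $v\in V_1\setminus B(G_\tau)$ feasibility forces $v\tau(v)$ to be non-violated, hence tight; for $v\in V_0\setminus B(G_\tau)$ the same applies to $v\sigma(v)$ for Even's feasibility strategy $\sigma$.

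For the inductive step, let $u_0 := u$ be the next node to be extracted and suppose for contradiction that $\nu(u_0) > \nu^*(u_0)$; in particular $\nu^*(u_0) < \top$. Starting at $u_0$ and repeatedly following a tight-at-$\nu^*$ outgoing arc in $H$, build a sequence $u_0,u_1,u_2,\ldots$. Tightness preserves finiteness of the $\nu^*$-label, so every $u_i$ satisfies $\nu^*(u_i)<\top$. The Cycle Lemma (Lemma~\ref{lem:tight_cycle}) forbids odd cycles with finite labels, and $H$ has no even cycles by construction, so the sequence is simple; since non-base nodes always admit a continuation, the path must terminate at a sink of $H$, and the sinks of $H$ are exactly the base nodes in $S$. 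Let $u_k$ be the first index with $u_k\in S$, so $k\geq 1$. By the inductive hypothesis, $\nu(u_k)=\nu^*(u_k)$ from the moment $u_k$ was extracted onward. At that moment the algorithm applied $\nu(u_{k-1})\gets \drop(\nu,u_{k-1}u_k)$; since $u_{k-1}u_k$ is tight at $\nu^*$ and $\nu(u_k)=\nu^*(u_k)$, the drop reduces $\nu(u_{k-1})$ to $\nu^*(u_{k-1})$, a value that persists to the present by the preliminary fact. A lexicographic comparison now gives $\Phi^\nu(u_0)>\Phi^{\nu^*}(u_0)$ (strict because $\nu(u_0)>\nu^*(u_0)$ while the $\Phi_p$ coordinates coincide), while repeated application of Lemma~\ref{lem:tight_arc} along the tight path (invoking the absence of even cycles in $H$) yields $\Phi^{\nu^*}(u_0)\geq \Phi^{\nu^*}(u_1)\geq \cdots \geq \Phi^{\nu^*}(u_{k-1})=\Phi^\nu(u_{k-1})$. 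Concatenating, $\Phi^\nu(u_{k-1})<\Phi^\nu(u_0)$; for $k\geq 2$ this contradicts the choice of $u_0$ as a minimizer of $\Phi^\nu$ over $V\setminus S$, and for $k=1$ we have $u_{k-1}=u_0$ so the inequality is self-contradictory.

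The main obstacle is the tight-path construction: I must confirm that following tight-at-$\nu^*$ arcs from $u_0$ produces a sequence that is simple and reaches $S$. This draws on every ingredient of the setup at once: feasibility of $\nu^*$ to guarantee a tight outgoing arc at each non-base node, the fixed-point property of $\nu^*$ under $\mathcal{H}^\downarrow$ to rule out looseness, the Cycle Lemma to exclude odd cycles with finite labels, and the removal of outgoing arcs at base nodes in the construction of $H$ to eliminate even cycles. Once the path is secured, the contradiction via Lemma~\ref{lem:tight_arc} becomes a routine chaining of potentials.
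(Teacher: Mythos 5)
Your proof is correct and takes a genuinely different route from the paper's. The paper analyses the output labeling a posteriori: it notes that drops preserve feasibility, proves that the potentials of the nodes extracted into $S$ are monotonically nondecreasing with respect to the \emph{final} labeling (via Lemma \ref{lem:tight_arc} and the absence of even cycles in $H$), deduces from this monotonicity that the output has no loose arcs in $H$, and only then obtains pointwise minimality by exhibiting, for any competing feasible labeling, a path to a base node that would force a loose arc. You instead run a textbook Dijkstra-style argument: fix the optimum $\nu^*$ in advance, show it is a fixed point of $\mathcal{H}^\downarrow$ so that $\nu\geq\nu^*$ is preserved by every drop, and prove by induction on extraction order that each label equals $\nu^*$ at the moment the node enters $S$, using a tight-at-$\nu^*$ path to the first node of $S$ (simple by the Cycle Lemma plus even-cycle-freeness of $H$) and a potential comparison again via Lemma \ref{lem:tight_arc}. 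Your intermediate invariant is stronger — every label is already final and correct when fixed — and the key step that a loose arc whose head already carries its $\nu^*$-value drops the tail to exactly $\nu^*(u_{k-1})$ is sound, since the tight tail value is determined by the head label. The one thing you presuppose that the paper does not is the existence of the pointwise minimal feasible labeling $\nu^*$ with the prescribed base values: to define it one needs, for instance, that feasible labelings agreeing with $\nu$ on $B(G_\tau)$ are closed under pointwise minimum (for $v\in V_0$ take the feasibility arc of whichever labeling attains the minimum at $v$; for $v\in V_1$ the unique arc of $G_\tau$ stays non-violated), or an appeal to the greatest-fixed-point machinery of Proposition \ref{prop:Tarski}; the paper's proof gets existence for free because it shows directly that the returned labeling lies below every competitor. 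With that one-line closure argument added, your proof is complete.
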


\begin{proof}
Let $\nu_0:=\nu$ be the input node labeling, and suppose that the algorithm ran for $k-1$ iterations.
For every $i\geq 1$, let $\nu_i$ be the node labeling at the start of iteration $i$. 
Then, the algorithm returns $\nu_k$.
Clearly, $\nu_0$ is feasible in $H$ because $\nu_0(v) = \top$ for all $v\notin B(G_\tau)$.
Since $\nu_i$ is obtained from $\nu_{i-1}$ by a sequence of $\text{drop}(\cdot,\cdot)$ operations, and this operation preserves feasibility, it follows that $\nu_k$ is feasible in $H$.
Furthermore, $\nu_k(v) = \nu_{k-1}(v) = \dots = \nu_0(v)$ for all $v\in B(G_\tau)$ due to our initialization $S\gets B(G_\tau)$.

\smallskip

For every $i\geq 1$, let $u_i$ be the node added to $S$ in iteration $i$.
We start by showing that $\Phi^{\nu_k}(u_i)\geq \Phi^{\nu_k}(u_{i-1})$ for all $i>1$. 
This is equivalent to $\Phi^{\nu_i}(u_i) \geq \Phi^{\nu_i}(u_{i-1})$ because $\nu_k(u_i) = \nu_j(u_i)$ for all $j\geq i$. If $\nu_i(u_i) = \nu_{i-1}(u_i)$, then we are done by our choice of $u_{i-1}$. So, we may assume that $\nu_i(u_i)<\nu_{i-1}(u_i)$, which implies that drop$(\cdot,\cdot)$ was applied to $u_i u_{i-1}\in E(H)$ in iteration $i-1$. Hence, the arc $u_iu_{i-1}$ is tight with respect to $\nu_i$. By Lemma~\ref{lem:tight_arc}, we obtain $\Phi^{\nu_i}(u_i) \geq \Phi^{\nu_i}(u_{i-1})$.

\smallskip

Next, we show that there are no loose arcs in $H$ with respect to $\nu_k$.
Since $B(G_\tau)$ is an independent set in $H$, it suffices to prove that $u_iu_j$ is not loose with respect to $\nu_k$ for all $1 \leq i < j$ where $u_iu_j\in E(H)$.
For the purpose of contradiction, let $u_iu_j\in E(H)$ be a loose arc with respect to $\nu_k$.
Note that $\nu_k(u_i)\neq \top$, as otherwise it would imply $\nu_k(u_j) = \top$ because $\Phi^{\nu_k}(u_i)\leq \Phi^{\nu_k}(u_j)$.
Since the label $\nu_k(u_i)$ was given by {\sc Tighten}, it is the smallest leaf in the subtree of $T$ rooted at $\nu_k(u_i)|_{\pi(u_i)}$. 
Therefore, $\nu_k(u_i)|_{\pi(u_i)}>\nu_k(u_j)|_{\pi(u_i)}$.
Let $p$ be the smallest even integer such that $p\geq \pi(u_i)$.
Then, $\Phi(u_i)|_p\geq \Phi(u_j)|_p$ because either $\pi(u_j)>p$ or $u_iu_j\in E(H_p)$.
However, these two inequalities yield $\Phi^{\nu_k}(u_i)>\Phi^{\nu_k}(u_j)$, which is a contradiction.

\smallskip

It is left to show the pointwise minimality of $\nu_k$.
Let $\nu':V\rightarrow \bar{L}(T)$ be a node labeling feasible in $H$ such that $\nu'(v) = \nu(v)$ for all $v\in B(G_\tau)$.
We prove that $\nu'\geq \nu_k$. Let $u\in V$; we may assume that $\nu'(u)\neq \top$.
From the definition of feasibility, there exists a strategy $\sigma$ for Even such that $G_{\sigma\tau}\setminus \cup_{v\in B(G_\tau)}\delta^+(v)$ does not contain violated arcs with respect to $\nu'$.
Observe that $u$ can reach $B(G_\tau)$ in this subgraph.
Indeed, if $u$ cannot reach $B(G_\tau)$, then it reaches a cycle $C$. As there are no even cycles in $H$, $C$ is odd and so $\nu'(v) = \top$ for all $v\in V(C)$.
Now, let $P$ be a $u$-$w$ path in this subgraph for some $w\in B(G_\tau)$.
Since $P$ has no loose arcs with respect to $\nu_k$ and $\nu_k(w)=\nu(w)=\nu'(w)$, an inductive argument from $w$ along $P$ shows that $\nu_k(v)\leq \nu'(v)$ for all $v\in V(P)$.
\end{proof}

Consequently, if we can determine $\mu^{\mathcal{G}^{\uparrow}_\tau}(v)$ for all $v\in B(G_\tau)$ beforehand, then we can use Algorithm \ref{algo:dijkstra} to compute $\mu^{\mathcal{G}^{\uparrow}_\tau}$.

\begin{cor}\label{cor:dijkstra}
Let $\mu\in \mathcal{L}$ be a node labeling with no loose arcs in $G_\tau$.
Given an initial node labeling $\nu\in \mathcal{L}$ where $\nu(v) = \mu^{\mathcal{G}^{\uparrow}_\tau}(v)$ for all $v\in B(G_\tau)$, Algorithm \ref{algo:dijkstra} returns $\mu^{\mathcal{G}^{\uparrow}_\tau}$.
\end{cor}

\begin{proof}
Let $\nu^*$ be the node labeling returned by Algorithm \ref{algo:dijkstra}.
By Theorem \ref{thm:dijkstra}, we have $\nu^* \leq \mu^{\mathcal{G}^{\uparrow}_\tau}$ because $H$ is a subgraph of $G_\tau$.
For every $v\in B(G_\tau)$, observe that $\nu^*(v) = \mu^{\mathcal{G}^{\uparrow}_\tau}(v)$ and $v$ has a non-violated outgoing arc with respect to $\mu^{\mathcal{G}^{\uparrow}_\tau}$. 
Hence, $\nu^*$ is feasible in $G_\tau$.
To show that $\nu^*\geq \mu^{\mathcal{G}^{\uparrow}_\tau}$, it suffices to prove that $\nu^*\geq \mu$ due to the pointwise minimality of $\mu^{\mathcal{G}^{\uparrow}_\tau}$.
Let $u\in V$; we may assume that $\nu^*(u)\neq \top$.
Let $\sigma$ be a strategy for Even such that $G_{\sigma\tau}$ does not contain violated arcs with respect to $\nu^*$.
As $\nu^*(u)\neq \top$, $u$ can reach an even cycle in $G_{\sigma\tau}$.
So, there exists a $u$-$w$ path $P$ for some $w\in B(G_\tau)$ in $G_{\sigma\tau}$.
Since $P$ has no loose arcs with respect to $\mu$ and $\mu(w)\leq \mu^{\mathcal{G}^{\uparrow}_\tau}(w) = \nu^*(w)$, an inductive argument from $w$ along $P$ shows that $\mu(v)\leq \nu^*(v)$ for all $v\in V(P)$.
\end{proof}

\subsection{Application to Perfect Universal Trees}
In this subsection, we show how to compute $\mu^{\mathcal{G}^{\uparrow}_\tau}$ using Algorithm~\ref{algo:dijkstra} for perfect universal trees.
Given a node labeling $\mu\in \mathcal{L}$ such that $G_\tau$ has no loose arcs, Algorithm~\ref{algo:label-setting} first applies the operator $\text{Lift}_v\in \mathcal{G}^{\uparrow}_\tau$ once for each base node $v\in B(G_\tau)$.
Then, it calls Algorithm~\ref{algo:dijkstra} with this updated node labeling.

\begin{algorithm}[htbp]
  \caption{Label-Setting for Perfect Universal Trees: $(G_\tau,\pi)$ 1-player game for Even, $\mu:V\rightarrow \bar{L}(T)$ node labeling from a perfect $(n,d/2)$-universal tree $T$ such that $G_\tau$ does not contain loose arcs.}
  \label{algo:label-setting}
  \begin{algorithmic}[1]
    \Procedure{LabelSetting}{($G_\tau,\pi),\mu$}
    \State $\nu\gets \mu$
    \ForAll{$v\in B(G_\tau)$} 
      \State $\nu(v) \gets \min_{vw\in E_\tau}\text{lift}(\nu,vw)$ \label{line:lift}
    \EndFor
    \State $\nu \gets $ \Call{Dijkstra}{$(G_\tau,\pi),\nu$}
    \State \Return $\nu$
    \EndProcedure
  \end{algorithmic}
\end{algorithm}

\begin{thm}\label{thm:label-setting}
Algorithm~\ref{algo:label-setting} returns $\mu^{\mathcal{G}^{\uparrow}_\tau}$ in $O(d(m+n\log n))$ time.
\end{thm}

\begin{proof}
Since there are no loose arcs in $G_\tau$, we have $\mu(v)|_{\pi(v)}\leq \mu(w)|_{\pi(v)}$ for all $vw\in E_\tau$ where $v\in B(G_\tau$).
Furthermore, if $\mu(v)|_{\pi(v)}= \mu(w)|_{\pi(v)}$, then $\mu(v)$ is either $\top$ or the smallest leaf in the subtree of $T$ rooted at $\mu(v)|_{\pi(v)}$.
Let $\nu\in \mathcal{L}$ be the node labeling right before calling {\sc Dijkstra}.
It is obtained by applying the operators $\{\text{Lift}_v:v\in B(G_\tau)\}\subseteq \mathcal{G}^{\uparrow}_\tau$ to $\mu$.
Hence, for every base node $v\in B(G_\tau)$, $\nu(v)$ is either $\top$ or the smallest leaf in the subtree of $T$ rooted at $\mu(v)|_{\pi(v)}$.
Moreover, we maintain $\nu\leq \mu^{\mathcal{G}^\uparrow_\tau}$ and the absence of loose arcs in $G_\tau$ with respect to $\nu$.

First, we prove correctness.
By Corollary \ref{cor:dijkstra}, it suffices to prove that $\nu(v) = \mu^{\mathcal{G}^{\uparrow}_\tau}(v)$ for all $v\in B(G_\tau)$.
Fix a base node $w\in B(G_\tau)$.
We may assume that $\nu(w)\neq \top$.
Let $C$ be a cycle dominated by $w$ in $G_\tau$, and consider the path $P:=C\setminus wu$ where $wu\in E(C)$.
Let $\bar{\nu}:V\rightarrow \bar{L}(T)$ be a node labeling such that $\bar{\nu}(v) = \top$ for all $v\notin V(P)$, $\bar{\nu}(w) = \nu(w)$ and $P$ is tight with respect to $\bar{\nu}$.
Then, $\bar{\nu}$ is feasible in $G_\tau\setminus \delta^+(w)$.
Moreover, $\bar{\nu}\geq \nu$ because there are no loose arcs in $P$ with respect to $\nu$.
Now, recall that $\nu(w)$ is the smallest leaf in the subtree of $T$ rooted at $\nu(w)|_{\pi(w)}$.
Since $\pi(v) \leq \pi(w)$ for all $v\in V(P)$, we have $\bar{\nu}(u)|_{\pi(w)} = \bar{\nu}(w)|_{\pi(w)}$ because $\size{V(P)}\leq n$.
As $\pi(w)$ is even, the arc $wu$ is tight with respect to $\bar{\nu}$.
It follows that $\bar{\nu}$ is feasible in $G_\tau$.
Thus, $\nu\leq \mu^{\mathcal{G}^{\uparrow}_\tau}\leq \nu^{\mathcal{G}^{\uparrow}_\tau} \leq \bar{\nu}$, where the second inequality is due to $\mu\leq \nu$ and the third inequality is due to the pointwise minimality of $\nu^{\mathcal{G}^\uparrow_\tau}$.
In particular, $\nu(w) = \mu^{\mathcal{G}^{\uparrow}_\tau}(w)$.

Next, we prove the running time. Applying the operator $\text{Lift}_v\in \mathcal{G}^{\uparrow}_\tau$ once for each base node $v\in B(G_\tau)$ involves calling the {\sc Tighten} subroutine at most $m$ times. On the perfect $(n,d/2)$-perfect universal tree $T$, {\sc Tighten} takes $\gamma(T) = O(d)$ time. Finally, {\sc Dijkstra} with $T$ takes $O(d(m+n\log n))$ time.
\end{proof}

\begin{rem}\label{rem:label-setting}
The purpose of Line~\ref{line:lift} in Algorithm~\ref{algo:label-setting} is to ensure that before calling {\sc Dijkstra}, $\nu(v)$ is either $\top$ or the smallest leaf in the subtree of $T$ rooted at $\nu(v)|_{\pi(v)}$ for every base node $v\in B(G_\tau)$. If Algorithm~\ref{algo:label-setting} is used to compute $\mu^{\mathcal{G}^\uparrow_\tau}$ in every iteration of Algorithm~\ref{algo:strategy+iteration}, then $\mu$ already satisfies this property due to the pointwise minimality of $\mu^{\mathcal{G}^\uparrow_\tau}$. So, Line~\ref{line:lift} can be skipped.
\end{rem}

\section{Label-Correcting Method for Computing the Least Fixed Point}
\label{sec:label_correcting}

In the previous section, we have seen that knowing the labels $\{\mu^{\mathcal{G}^\uparrow_\tau}(v):v\in B(G_\tau)\}$ allows one to compute the least fixed point $\mu^{\mathcal{G}^\uparrow_\tau}$ using {\sc Dijkstra}. We have also shown how to compute these labels for perfect universal trees. For more complicated universal trees such as the quasi-polynomial ones, it is unclear how to compute these labels efficiently. In this section, we develop a label-correcting method for computing $\mu^{\mathcal{G}^\uparrow_\tau}$. It has the advantage of only requiring an over-approximation of $\mu^{\mathcal{G}^\uparrow_\tau}(v)$ for all $v\in B(G_\tau)$.

The Bellman--Ford algorithm for the shortest path problem is a well-known implementation of the generic label-correcting method~\cite{AhujaMagnantiOrlin:1993}.
We start by giving its analogue for ordered trees.
Algorithm~\ref{algo:Bellman+Ford} takes as input a 1-player game $(G_\tau,\pi)$ for Even and a
node labeling $\nu:V\rightarrow \bar{L}(T)$ from some ordered tree $T$.
Like its classical version for shortest paths, the algorithm runs for $n-1$ iterations.
In each iteration, it replaces the tail label of every arc $e\in E_\tau$ by $\text{drop}(\nu,e)$. 
Clearly, the running time is $O(mn\gamma(T))$.
Moreover, if $\nu'$ is the returned node labeling, then $\nu'\geq \nu^{\mathcal{G}^\downarrow_\tau}$. 

\begin{algorithm}[htbp]
  \caption{Bellman--Ford: $(G_\tau,\pi)$ 1-player game for Even, $\nu:V\rightarrow \bar{L}(T)$ node labeling from an ordered tree $T$}
  \label{algo:Bellman+Ford}
  \begin{algorithmic}[1]
    \Procedure{BellmanFord}{($G,\pi), \nu$}
    \For{$i=1$ \textbf{to} $n-1$}
      \ForAll{$vw\in E$} \Comment{In any order}
        \State $\nu(v) \gets \text{drop}(\nu,vw)$
      \EndFor 
    \EndFor
    \State \Return $\nu$
    \EndProcedure
  \end{algorithmic}
\end{algorithm}

Recall that we have a node labeling $\mu\in \mathcal{L}$ such that $G_\tau$ does not have loose arcs, and our goal is to compute $\mu^{\mathcal{G}^\uparrow_\tau}$.
We now specify an upper bound of $\nu(v)$ for all $v\in B(G_\tau)$ which ensures that Algorithm~\ref{algo:Bellman+Ford} does not return a node labeling larger than $\mu^{\mathcal{G}^\uparrow_\tau}$.

\begin{defi} \label{def:threshold}
Given a node labeling $\mu\in \mathcal{L}$, define $\widehat{\mu}:B(G_\tau)\to \bar{L}(T)$ as
\[
\widehat{\mu}(v):= \min_{\tilde{\mu}\in \mathcal{L}}\set{\tilde{\mu}(v):\tilde{\mu}(v)\geq \mu(v) \text{ and } \tilde{\mu} \text{ is feasible in a cycle dominated by }v \text{ in }G_\tau} \enspace .
\]
We call $\widehat\mu(v)$ \emph{the threshold label} of base node $v$.
\end{defi}

The next lemma follows directly from the pointwise minimality of $\mu^{\mathcal{G}^\uparrow_\tau}$.

\begin{lem} \label{lem:bounds-labels-base-nodes}
Let $\mu\in \mathcal{L}$ be a node labeling such that $G_\tau$ does not have loose arcs. 
For every base node $v\in B(G_\tau)$, the threshold label of $v$ satisfies $\widehat{\mu}(v)\geq \mu^{\mathcal{G}^\uparrow_\tau}(v)$.
\end{lem}

\begin{proof}
Fix a base node $v$.
Let $C$ be a cycle dominated by $v$ in $G_\tau$.
Let $\tilde{\mu}\in \mathcal{L}$ be a node labeling which is feasible in $C$ and satisfies $\tilde{\mu}(v)\geq \mu(v)$.
It suffices to prove that $\tilde{\mu}(v)\geq \mu^{\mathcal{G}^\uparrow_\tau}(v)$, as $C$ and $\tilde{\mu}$ were chosen arbitrarily.
Without loss of generality, we may assume that $\tilde{\mu}(w) = \top$ for all $w\in V\setminus V(C)$.
Then, $\tilde{\mu}$ is feasible in $G_\tau$.
Let $u$ be the unique out-neighbour of $v$ in $C$, and consider the $u$-$v$ subpath $P$ of $C$.
Since $\tilde{\mu}(v)\geq \mu(v)$ and there are no loose arcs in $P$ with respect to $\mu$, an inductive argument from $v$ along $P$ shows that $\tilde{\mu}(w)\geq \mu(w)$ for all $w\in V(P)$.
It follows that $\tilde{\mu}\geq \mu$.
Hence, $\tilde{\mu}$ is a fixed point of $\mathcal{G}^\uparrow_\tau$ that is pointwise at least $\mu$.
From the pointwise minimality of $\mu^{\mathcal{G}^\uparrow_\tau}$, we get $\tilde{\mu}\geq \mu^{\mathcal{G}^\uparrow_\tau}$.
\end{proof}

Even though $\mu^{\mathcal{G}^\uparrow_\tau}$ is feasible in $G_\tau$, there may exist a base node $v$ such that $\mu^{\mathcal{G}^\uparrow_\tau}$ is infeasible in every cycle dominated by $v$ in $G_\tau$ (see Figure~\ref{fig:threshold}). Hence, we do not have the reverse inequality.

\begin{figure}[ht] 
\begin{minipage}{0.4\textwidth}
\centering
\begin{tikzpicture}[scale=0.9]
\begin{scope}[every node/.style={circle,draw,inner sep = 2pt}]
    \node[label={left}:{\small $(0,\varepsilon)$}] (A) at (0,0) {1};    
    \node[label={left}:{\small $(0,\varepsilon)$}] (B) at (0,2) {1};
\end{scope}
\begin{scope}[every node/.style={rectangle,draw}]
    \node[label={right}:{\small $(0,\varepsilon)$}] (C) at (2,2) {2};    
    \node[label={right}:{\small $(0,\varepsilon)$}] (D) at (2,0) {4};
\end{scope}

\begin{scope}[>={Stealth[black]},
              every edge/.style={draw}]
    \path [->] (A) edge (B);
    \path [->] (B) edge (C);
    \path [->] (C) edge (D);
    \path [->] (D) edge (A);
    \path [->] (C) edge (A);
\end{scope}
\end{tikzpicture}
\end{minipage}
\begin{minipage}{0.4\textwidth}
\centering
\begin{tikzpicture}[scale=0.9]

\begin{scope}[every node/.style={circle,draw,inner sep = 2pt}]
    \node[label={left}:{\small $(\varepsilon,\varepsilon)$}] (A) at (0,0) {1};    
    \node[label={left}:{\small $(\varepsilon,0)$}] (B) at (0,2) {1};
\end{scope}
\begin{scope}[every node/.style={rectangle,draw}]
    \node[label={right}:{\small $(0,\varepsilon)$}] (C) at (2,2) {2};    
    \node[label={right}:{\small $(0,\varepsilon)$}] (D) at (2,0) {4};
\end{scope}

\begin{scope}[>={Stealth[black]},
              every edge/.style={draw}]
    \path [->] (A) edge (B);
    \path [->] (B) edge (C);
    \path [->] (C) edge (D);
    \path [->] (D) edge (A);
    \path [->] (C) edge (A);
\end{scope}
\end{tikzpicture}
\end{minipage}
\caption{An example of a strategy subgraph $G_\tau$ with node labelings from the succinct (3,2)-universal tree ($\mu$ on the left and $\mu^{\mathcal{G}_\tau^\uparrow}$ on the right). Nodes in $V_0$ and $V_1$ are drawn as squares and circles respectively. The base nodes are $v,w$ where $\pi(v) = 2$ and $\pi(w) = 4$. Note that $\widehat\mu(v) = (\varepsilon,0) > \mu^{\mathcal{G}^\uparrow_\tau}(v)$.}
\label{fig:threshold}
\end{figure}

The next theorem shows that if we initialize the base nodes with their corresponding threshold labels, then Algorithm \ref{algo:Bellman+Ford} returns $\mu^{\mathcal{G}^\uparrow_\tau}$. 
Even more, it suffices to have an initial node labeling $\nu\in \mathcal{L}$ such that $\mu^{\mathcal{G}^\uparrow_\tau}(v)\leq \nu(v)\leq \widehat{\mu}(v)$ for all $v\in B(G_\tau)$.
For the other nodes $v\notin B(G_\tau)$, we can simply set $\nu(v) \gets \top$.

\begin{thm}\label{thm:bellmanFord}
Let $\mu\in \mathcal{L}$ be a node labeling such that $G_\tau$ does not have loose arcs.
For every base node $v\in B(G_\tau)$, let $\widehat\mu(v)$ be the threshold label of $v$.
Given an input node labeling $\nu\in \mathcal{L}$ which satisfies $\nu\geq \mu^{\mathcal{G}^\uparrow_\tau}$ and $\nu(v)\leq \widehat{\mu}(v)$ for all $v\in B(G_\tau)$, Algorithm \ref{algo:Bellman+Ford} returns $\mu^{\mathcal{G}^\uparrow_\tau}$. 
\end{thm}

\begin{proof}
Given input $\nu\in \mathcal{L}$, let $\nu'$ be the node labeling returned by Algorithm~\ref{algo:Bellman+Ford}. Then, we have
\begin{equation}\label{eq:bellmanFord}
  \mu\leq \mu^{\mathcal{G}^\uparrow_\tau} \leq \nu^{\mathcal{G}^\downarrow_\tau} \leq \nu' \leq \nu.
\end{equation}
The second inequality is justified as follows.
Since $G_\tau$ does not have loose arcs with respect to $\mu$, it also does not have loose arcs with respect to $\mu^{\mathcal{G}^\uparrow_\tau}$ due to the pointwise minimality of $\mu^{\mathcal{G}^\uparrow_\tau}$.
Hence, $\mu^{\mathcal{G}^\uparrow_\tau}$ is a fixed point of ${\mathcal{G}^\downarrow_\tau}$.
As $\mu^{\mathcal{G}^\uparrow_\tau}\leq \nu$, we get $\mu^{\mathcal{G}^\uparrow_\tau}\leq \nu^{\mathcal{G}^\downarrow_\tau}$ by the pointwise maximality of $\nu^{\mathcal{G}^\downarrow_\tau}$.
The third inequality, on the other hand, follows from the fact that $\nu^{\mathcal{G}^\downarrow_\tau}$ and $\nu'$ are obtained by iterating the operators in $\mathcal{G}^\downarrow_\tau$ on $\nu$. 

First, we show that $\mu^{\mathcal{G}^\uparrow_\tau} = \nu^{\mathcal{G}^\downarrow_\tau}$.
Since $\mu^{\mathcal{G}^\uparrow_\tau}$ is feasible in $G_\tau$, there exists a strategy $\sigma$ for Even such that $G_{\sigma\tau}$ has no violated arcs with respect to $\mu^{\mathcal{G}^\uparrow_\tau}$. In the subgraph $G_{\sigma\tau}$, every node has out-degree 1, so every node can reach a cycle. Pick any cycle $C$ in $G_{\sigma\tau}$ and let $v\in \Pi(C)$. Without loss of generality, we may assume that $\mu^{\mathcal{G}^\uparrow_\tau}(v)<\top$, as otherwise $\mu^{\mathcal{G}^\uparrow_\tau}(w) = \top$ for all the nodes $w$ which can reach $v$ in $G_{\sigma\tau}$. Then, $C$ is even by the Cycle Lemma, which implies that $v\in B(G_\tau)$. From the definition of threshold label, we have $\widehat{\mu}(v)\leq \mu^{\mathcal{G}^\uparrow_\tau}(v)$. As $\nu(v) \leq \widehat{\mu}(v)$, it follows from \eqref{eq:bellmanFord} that $\mu^{\mathcal{G}^\uparrow_\tau}(v) = \nu^{\mathcal{G}^\downarrow_\tau}(v)$. Now, let $u$ be a node which can reach $v$ in $G_{\sigma\tau}$, and consider the corresponding $u$-$v$ path $P$. Since there are no loose arcs in $G_{\sigma\tau}$ with respect to $\nu^{\mathcal{G}^\downarrow_\tau}$, an inductive argument from $v$ along $P$ shows that $\nu^{\mathcal{G}^\downarrow_\tau}(u)\leq \mu^{\mathcal{G}^\uparrow_\tau}(u)$. As $C$ and $u$ were chosen arbitrarily, $\nu^{\mathcal{G}^\downarrow_\tau}\leq \mu^{\mathcal{G}^\uparrow_\tau}$ as desired.

Next, we show that $\nu^{\mathcal{G}^\downarrow_\tau}=\nu'$, or equivalently, $\nu'$ is a fixed point of $\mathcal{G}^\downarrow_\tau$.
Recall that $\nu'$ is the node labeling returned by Algorithm~\ref{algo:Bellman+Ford}.
Let $T:=(n-1)m_\tau$ be the total number of steps carried out by the algorithm, where $m_\tau := |E_\tau|$.
For each $0\leq t\leq T$, let $\nu_t\in \mathcal{L}$ be the node labeling at the end of step $t$.
Note that $\nu_0 = \nu$ and $\nu_T := \nu'$.

\begin{prop}\label{lem:bellmanFord}
For each $0\leq t\leq T$, if an arc $uv\in E_\tau$ is loose with respect to $\nu_t$, then there exists a path $P$ from $u$ to some node $w$ such that $uv\in E(P)$, $\nu_t(w) = \nu(w)$ and $\nu_t$ is feasible in $P$.
\end{prop}

\begin{proof}
We proceed by induction on $t$.
The base case $t=0$ is trivially true with the path $P = uv$.
Now, suppose that the lemma is true for some $t\geq 0$.
Let $pq$ be the arc processed in step $t+1$.
We may assume that $pq$ is loose with respect to $\nu_t$, as otherwise $\nu_{t+1} = \nu_t$ and we are done.
By the inductive hypothesis, there exists a path $P$ from $p$ to some node $r$ such that $pq\in E(P)$, $\nu_t(r) = \nu(r)$ and $\nu_t$ is feasible in $P$.
Pick a shortest such $P$.
Then, $\nu_t(s)<\nu(s)$ for all $s\in V(P)\setminus \{p,r\}$.

Let $uv$ be a loose arc with respect to $\nu_{t+1}$.
There are 2 cases:

\smallskip
\emph{Case 1:} $uv$ was loose with respect to $\nu_t$.
By the inductive hypothesis, there exists a path $Q$ from $u$ to some node $w$ such that $uv\in E(Q)$, $\nu_t(w) = \nu(w)$ and $\nu_t$ is feasible in $Q$.
Pick a shortest such $Q$.
Then, $\nu_t(s)<\nu(s)$ for all $s\in V(Q)\setminus\{u,w\}$.
We may assume that $p\in V(Q)\setminus\{u\}$, as otherwise we are done.
Consider the walk $W = Q_{up}P$ obtained by concatenating the $u$-$p$ subpath of $Q$ and $P$.
Every arc in $W$ is non-violated with respect to $\nu_{t+1}$.

\smallskip
\emph{Case 2:} $uv$ was not loose with respect to $\nu_t$. 
In this case, $v = p$.
Consider the walk $W = uvP$ obtained by concatenating $uv$ and $P$.
Every arc in $W$ is non-violated with respect to $\nu_{t+1}$.

\smallskip
In both cases, we obtain a $u$-$r$ walk $W$ such that $uv\in E(W)$ and $\nu_{t+1}(r) = \nu(r)$ with no violated arcs with respect to $\nu_{t+1}$.
It remains to show that $W$ is a path.
For the purpose of contradiction, let $C$ be the first cycle encountered while traversing $W$ from $u$.
Note that $p\in V(C)$.
Since $\nu_{t+1}$ is feasible in $C$ and $\nu_{t+1}(p)<\nu_t(p)\leq \top$, it follows from the Cycle Lemma that $\nu_{t+1}(s)<\top$ for all $s\in V(C)$.
Furthermore, $C$ is even.
Let $x\in \Pi(C)$, which is a base node.
If $x = u$, then $\widehat{\mu}(x) < \nu_{t+1}(x) \leq \nu(x) \leq \widehat{\mu}(x)$, where the strict inequality is due to $uv\in E(C)$ being loose.
If $x\neq u$, then $\widehat{\mu}(x) \leq \nu_{t+1}(x) < \nu(x) \leq \widehat{\mu}(x)$, where the strict inequality is due to our choice of $P$ and $Q$.
Both scenarios result in a contradiction.
\end{proof}

Let $P$ be a $u$-$w$ path with $i$ arcs.
Let $\bar \nu\in \mathcal{L}$ be a node labeling for which $P$ is tight and $\bar\nu(w) = \nu(w)$.
It is left to prove that $\nu_{im_\tau}(u)\leq \bar\nu(u)$.
This is because Proposition~\ref{lem:bellmanFord} would imply the absence of loose arcs with respect to $\nu_T = \nu'$.
We proceed by induction on $i\geq 0$.
The base case $i=0$ is trivial.
Suppose that the statement is true for all paths with at most $i$ arcs, and consider $|E(P)| = i+1$.
Let $uv$ be the first arc of $P$.
By the inductive hypothesis, $\nu_{im_\tau}(v)\leq \bar\nu(v)$.
Since $uv$ is processed in iteration $i+1$ and $\nu_{t+1}\leq \nu_t$ for all $t$, we obtain $\nu_{(i+1)m_\tau}(u)\leq \bar\nu(u)$.
\end{proof}

Our strategy for computing the node labeling $\nu$ as specified in the statement of Theorem~\ref{thm:bellmanFord} is to find the cycles in Definition \ref{def:threshold}.
In particular, for every base node $v\in B(G_\tau)$, we aim to find a cycle $C$ dominated by $v$ in $G_\tau$ such that $\widehat{\mu}(v)$ can be extended to a node labeling that is feasible in $C$.
To accomplish this goal, we first introduce the notion of \emph{width} in Section \ref{sec:width}, which allows us to evaluate how `good' a cycle is.
It is defined using chains in the poset of subtrees of $T$, where the partial order is given by $\sqsubseteq$.
Then, in Section \ref{sec:raise}, we show how to obtain the desired cycles by computing minimum bottleneck cycles on a suitably defined auxiliary digraph.

\subsection{Width from a Chain of Subtrees in \texorpdfstring{$T$}{T}}
\label{sec:width}

Two ordered trees $T'$ and $T''$ are said to be \emph{distinct} if $T'\not\equiv T''$ (not isomorphic in the sense of Definition~\ref{def:embedding-order}).
Let $h$ be the height of our universal tree $T$.
For $0\leq j \leq h$, denote $\mathcal{T}_j$ as the set of distinct (whole) subtrees rooted at the vertices of depth $h-j$ in $T$. 
For example, $\mathcal{T}_h = \{T\}$, while $\mathcal{T}_0$ contains the trivial tree with a single vertex. 
Since we assumed that all the leaves in $T$ are at the same depth, every tree in $\mathcal{T}_j$ has height $j$.
We denote $\mathcal{T}=\cup_{j=0}^h \mathcal{T}_j$ as the union of all these subtrees.
The sets $\mathcal{T}$ and $\mathcal{T}_j$ form posets with respect to the partial order $\sqsubseteq$. 
The next definition is the usual chain cover of a poset, where we additionally require that the chains form an indexed tuple instead of a set.

\begin{defi}
For $0\leq j \leq h$, let $\mathcal{C}_j=(\mathcal{C}^0_j, \mathcal{C}^1_j, \dots, \mathcal{C}^{\ell}_j)$ be a tuple of chains in the poset $(\mathcal{T}_j,\sqsubseteq)$.
We call $\mathcal{C}_j$ a \emph{cover of $\mathcal{T}_j$} if $\cup_{k=0}^\ell \mathcal{C}^k_j = \mathcal{T}_j$.
A \emph{cover of $\mathcal{T}$} is a tuple $\mathcal{C}=(\mathcal{C}_0,\mathcal{C}_1,\dots, \mathcal{C}_h)$ where $\mathcal{C}_j$ is a cover of $\mathcal{T}_j$ for all $0\leq j\leq h$.
We refer to $\mathcal{C}_j$ as the \emph{$j$th-subcover of $\mathcal{C}$}.
Given a cover $\mathcal{C}$ of $\mathcal{T}$, we denote the trees in the chain $\mathcal{C}^k_j$ as $T^k_{0,j} \sqsubset T^k_{1,j} \sqsubset \cdots  \sqsubset T^k_{|\mathcal{C}^k_j|-1,j}$.
\end{defi}

An example of an ordered tree with its cover is given in Figure \ref{fig:cover}. 
We are ready to introduce the key concept of this subsection.

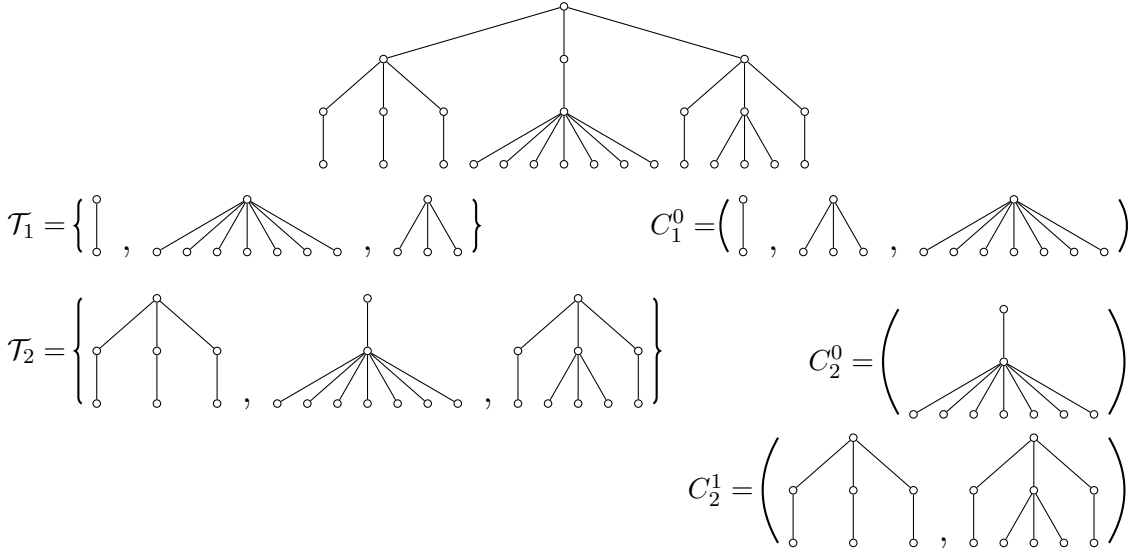
\begin{figure}[h!]
\def\x{0.4}
\def\y{0.7}
\centering
\begin{tikzpicture}
\centering
\begin{scope}[every node/.style={circle,draw,,inner sep=1pt}]
    \node (p) at (0,0) {};
    \node (q1) at (-6*\x,-1*\y) {};  
    \node (q2) at (0*\x,-1*\y) {};  
    \node (q3) at (6*\x,-1*\y) {};   

    \node (p1) at (-8*\x,-2*\y) {};  
    \node (p2) at (-6*\x,-2*\y) {};  
    \node (p3) at (-4*\x,-2*\y) {};   
    \node (p11) at (-8*\x,-3*\y) {};
    \node (p22) at (-6*\x,-3*\y) {};
    \node (p31) at (-4*\x,-3*\y) {};   
    \node (p4) at (0,-2*\y) {};
    \node (p5) at (4*\x,-2*\y) {}; 
    \node (p6) at (6*\x,-2*\y) {}; 
    \node (p7) at (8*\x,-2*\y) {}; 
    \node (p41) at (0-3*\x,-3*\y) {};   
    \node (p42) at (0-2*\x,-3*\y) {};
    \node (p43) at (0-1*\x,-3*\y) {};
    \node (p44) at (0,-3*\y) {};
    \node (p45) at (0+1*\x,-3*\y) {};   
    \node (p46) at (0+2*\x,-3*\y) {};
    \node (p47) at (0+3*\x,-3*\y) {};
    \node (p51) at (4*\x,-3*\y) {};
    \node (p61) at (5*\x,-3*\y) {};
    \node (p62) at (6*\x,-3*\y) {};
    \node (p63) at (7*\x,-3*\y) {};
    \node (p71) at (8*\x,-3*\y) {};
\end{scope}

\begin{scope}[every node/.style={fill=white,circle,font=\scriptsize,inner sep=1pt},
              every edge/.style={draw}]
    \path [-] (p) edge (q1);
    \path [-] (p) edge (q2);
    \path [-] (p) edge (q3);
    \path [-] (q1) edge (p1);
    \path [-] (q1) edge (p2);
    \path [-] (q1) edge (p3);
    \path [-] (q2) edge (p4);
    \path [-] (q3) edge (p5);
    \path [-] (q3) edge (p6);
    \path [-] (q3) edge (p7);
    \path [-] (p1) edge (p11);
    \path [-] (p2) edge (p22);
    \path [-] (p3) edge (p31);
    \path [-] (p4) edge (p41);
    \path [-] (p4) edge (p42);
    \path [-] (p4) edge (p43);
    \path [-] (p4) edge (p44);
    \path [-] (p4) edge (p45);
    \path [-] (p4) edge (p46);
    \path [-] (p4) edge (p47);
    \path [-] (p5) edge (p51);
    \path [-] (p6) edge (p61);
    \path [-] (p6) edge (p62);
    \path [-] (p6) edge (p63);
    \path [-] (p7) edge (p71);
\end{scope}
\end{tikzpicture}

\bigskip
\begin{minipage}{0.49\textwidth}
\raggedright
\begin{tikzpicture}
\raggedright
\node (T) at (-7*\x,-1.5*\y) {$\mathcal{T}_1 = $};
\draw [thick,decorate,decoration = {brace,mirror}] (-5.5*\x,-1*\y) --  (-5.5*\x,-2*\y);
\draw [thick,decorate,decoration = {brace}] (7.5*\x,-1*\y) --  (7.5*\x,-2*\y);

\node (C1) at (-4*\x,-2*\y) {\Large ,};
\node (C1) at (4*\x,-2*\y) {\Large ,};

\begin{scope}[every node/.style={circle,draw,,inner sep=1pt}]
    \node (p1) at (-5*\x,-1*\y) {};  
    \node (p2) at (0*\x,-1*\y) {};  
    \node (p3) at (6*\x,-1*\y) {};   

    \node (p11) at (-5*\x,-2*\y) {};  
    \node (p21) at (-3*\x,-2*\y) {};  
    \node (p22) at (-2*\x,-2*\y) {};  
    \node (p23) at (-1*\x,-2*\y) {}; 
    \node (p24) at (0*\x,-2*\y) {}; 
    \node (p25) at (1*\x,-2*\y) {};  
    \node (p26) at (2*\x,-2*\y) {};  
    \node (p27) at (3*\x,-2*\y) {};     
    \node (p31) at (5*\x,-2*\y) {};  
    \node (p32) at (6*\x,-2*\y) {};  
    \node (p33) at (7*\x,-2*\y) {};
\end{scope}

\begin{scope}[every node/.style={fill=white,circle,font=\scriptsize,inner sep=1pt},
              every edge/.style={draw}]
    \path [-] (p1) edge (p11);
    \path [-] (p2) edge (p21);
    \path [-] (p2) edge (p22);
    \path [-] (p2) edge (p23);
    \path [-] (p2) edge (p24);
    \path [-] (p2) edge (p25);
    \path [-] (p2) edge (p26);
    \path [-] (p2) edge (p27);
    \path [-] (p3) edge (p31);
    \path [-] (p3) edge (p32);
    \path [-] (p3) edge (p33);
\end{scope}
\end{tikzpicture}
\end{minipage}
\begin{minipage}{0.49\textwidth}
\raggedleft
\begin{tikzpicture}
\raggedleft
\node (C) at (-11*\x,-1.5*\y) {$C^0_1 = $};
\draw [thick,decorate,] (-9.5*\x,-1*\y) to[out=240,in=120] (-9.5*\x,-2*\y);
\draw [thick,decorate] (3.5*\x,-1*\y) to[out=300,in=60]  (3.5*\x,-2*\y);

\node (c1) at (-8*\x,-2*\y) {\Large ,};
\node (c2) at (-4*\x,-2*\y) {\Large ,};

\begin{scope}[every node/.style={circle,draw,,inner sep=1pt}]
    \node (p1) at (-9*\x,-1*\y) {};  
    \node (p2) at (-6*\x,-1*\y) {};  
    \node (p3) at (0*\x,-1*\y) {};   

    \node (p11) at (-9*\x,-2*\y) {};  
    \node (p21) at (-7*\x,-2*\y) {};  
    \node (p22) at (-6*\x,-2*\y) {};  
    \node (p23) at (-5*\x,-2*\y) {};   
    \node (p31) at (-3*\x,-2*\y) {};  
    \node (p32) at (-2*\x,-2*\y) {};  
    \node (p33) at (-1*\x,-2*\y) {};
    \node (p34) at (0*\x,-2*\y) {}; 
    \node (p35) at (1*\x,-2*\y) {};  
    \node (p36) at (2*\x,-2*\y) {};  
    \node (p37) at (3*\x,-2*\y) {};     
\end{scope}

\begin{scope}[every node/.style={fill=white,circle,font=\scriptsize,inner sep=1pt},
              every edge/.style={draw}]
    \path [-] (p1) edge (p11);
    \path [-] (p2) edge (p21);
    \path [-] (p2) edge (p22);
    \path [-] (p2) edge (p23);
    \path [-] (p3) edge (p31);
    \path [-] (p3) edge (p32);
    \path [-] (p3) edge (p33);
    \path [-] (p3) edge (p34);
    \path [-] (p3) edge (p35);
    \path [-] (p3) edge (p36);
    \path [-] (p3) edge (p37);
\end{scope}
\end{tikzpicture}
\end{minipage}

\bigskip
\begin{minipage}[t]{0.55\textwidth}
\raggedright
\begin{tikzpicture}
\node (T) at (-11*\x,-2*\y) {$\mathcal{T}_2 = $};
\draw [thick,decorate,decoration = {brace,mirror}] (-9.5*\x,-1*\y) --  (-9.5*\x,-3*\y);
\draw [thick,decorate,decoration = {brace}] (9.5*\x,-1*\y) --  (9.5*\x,-3*\y);

\node (C1) at (-4*\x,-3*\y) {\Large ,};
\node (C1) at (4*\x,-3*\y) {\Large ,};

\begin{scope}[every node/.style={circle,draw,,inner sep=1pt}]
    \node (q1) at (-7*\x,-1*\y) {};  
    \node (q2) at (0*\x,-1*\y) {};  
    \node (q3) at (7*\x,-1*\y) {};   

    \node (p1) at (-9*\x,-2*\y) {};  
    \node (p2) at (-7*\x,-2*\y) {};  
    \node (p3) at (-5*\x,-2*\y) {};   
    \node (p11) at (-9*\x,-3*\y) {};
    \node (p22) at (-7*\x,-3*\y) {};
    \node (p31) at (-5*\x,-3*\y) {};   
    \node (p4) at (0,-2*\y) {};
    \node (p5) at (5*\x,-2*\y) {}; 
    \node (p6) at (7*\x,-2*\y) {}; 
    \node (p7) at (9*\x,-2*\y) {}; 
    \node (p41) at (0-3*\x,-3*\y) {};   
    \node (p42) at (0-2*\x,-3*\y) {};
    \node (p43) at (0-1*\x,-3*\y) {};
    \node (p44) at (0,-3*\y) {};
    \node (p45) at (0+1*\x,-3*\y) {};   
    \node (p46) at (0+2*\x,-3*\y) {};
    \node (p47) at (0+3*\x,-3*\y) {};
    \node (p51) at (5*\x,-3*\y) {};
    \node (p61) at (6*\x,-3*\y) {};
    \node (p62) at (7*\x,-3*\y) {};
    \node (p63) at (8*\x,-3*\y) {};
    \node (p71) at (9*\x,-3*\y) {};
\end{scope}

\begin{scope}[every node/.style={fill=white,circle,font=\scriptsize,inner sep=1pt},
              every edge/.style={draw}]
    \path [-] (q1) edge (p1);
    \path [-] (q1) edge (p2);
    \path [-] (q1) edge (p3);
    \path [-] (q2) edge (p4);
    \path [-] (q3) edge (p5);
    \path [-] (q3) edge (p6);
    \path [-] (q3) edge (p7);
    \path [-] (p1) edge (p11);
    \path [-] (p2) edge (p22);
    \path [-] (p3) edge (p31);
    \path [-] (p4) edge (p41);
    \path [-] (p4) edge (p42);
    \path [-] (p4) edge (p43);
    \path [-] (p4) edge (p44);
    \path [-] (p4) edge (p45);
    \path [-] (p4) edge (p46);
    \path [-] (p4) edge (p47);
    \path [-] (p5) edge (p51);
    \path [-] (p6) edge (p61);
    \path [-] (p6) edge (p62);
    \path [-] (p6) edge (p63);
    \path [-] (p7) edge (p71);
\end{scope}
\end{tikzpicture}
\end{minipage}
\begin{minipage}[t]{0.43\textwidth}
\raggedleft
\begin{tikzpicture}
\raggedleft
\node (C2) at (-5.4*\x,-2*\y) {$C^0_2 = $};
\draw [thick,decorate] (-3.5*\x,-1*\y) to[out=240,in=120]  (-3.5*\x,-3*\y);
\draw [thick,decorate] (3.5*\x,-1*\y) to[out=300,in=60]  (3.5*\x,-3*\y);

\begin{scope}[every node/.style={circle,draw,,inner sep=1pt}]
    \node (q2) at (0*\x,-1*\y) {}; 
    \node (p4) at (0*\x,-2*\y) {}; 
    \node (p41) at (-3*\x,-3*\y) {};   
    \node (p42) at (-2*\x,-3*\y) {};
    \node (p43) at (-1*\x,-3*\y) {};
    \node (p44) at (0*\x,-3*\y) {};
    \node (p45) at (1*\x,-3*\y) {};   
    \node (p46) at (2*\x,-3*\y) {};
    \node (p47) at (3*\x,-3*\y) {};
\end{scope}

\begin{scope}[every node/.style={fill=white,circle,font=\scriptsize,inner sep=1pt},
              every edge/.style={draw}]
    \path [-] (q2) edge (p4);
    \path [-] (p4) edge (p41);
    \path [-] (p4) edge (p42);
    \path [-] (p4) edge (p43);
    \path [-] (p4) edge (p44);
    \path [-] (p4) edge (p45);
    \path [-] (p4) edge (p46);
    \path [-] (p4) edge (p47);
\end{scope}
\end{tikzpicture}

\medskip
\begin{tikzpicture}
\raggedleft
\node (C1) at (-7.4*\x,-2*\y) {$C^1_2 = $};
\draw [thick,decorate] (-5.5*\x,-1*\y) to[out=240,in=120]  (-5.5*\x,-3*\y);
\draw [thick,decorate] (5.5*\x,-1*\y) to[out=300,in=60]  (5.5*\x,-3*\y);
\node (c1) at (-0*\x,-3*\y) {\Large ,};

\begin{scope}[every node/.style={circle,draw,,inner sep=1pt}]
    \node (q1) at (-3*\x,-1*\y) {};  
    \node (q3) at (3*\x,-1*\y) {}; 
    \node (p1) at (-5*\x,-2*\y) {};  
    \node (p2) at (-3*\x,-2*\y) {};  
    \node (p3) at (-1*\x,-2*\y) {};   
    \node (p11) at (-5*\x,-3*\y) {};    
    \node (p22) at (-3*\x,-3*\y) {};
    \node (p31) at (-1*\x,-3*\y) {};

    \node (p5) at (1*\x,-2*\y) {}; 
    \node (p6) at (3*\x,-2*\y) {}; 
    \node (p7) at (5*\x,-2*\y) {}; 
    \node (p51) at (1*\x,-3*\y) {};
    \node (p61) at (2*\x,-3*\y) {};
    \node (p62) at (3*\x,-3*\y) {};
    \node (p63) at (4*\x,-3*\y) {};
    \node (p71) at (5*\x,-3*\y) {};
\end{scope}

\begin{scope}[every node/.style={fill=white,circle,font=\scriptsize,inner sep=1pt},
              every edge/.style={draw}]
    \path [-] (q1) edge (p1);
    \path [-] (q1) edge (p2);
    \path [-] (q1) edge (p3);

    \path [-] (q3) edge (p5);
    \path [-] (q3) edge (p6);
    \path [-] (q3) edge (p7);
    \path [-] (p1) edge (p11);

    \path [-] (p2) edge (p22);

    \path [-] (p3) edge (p31);
    
    \path [-] (p5) edge (p51);
    \path [-] (p6) edge (p61);
    \path [-] (p6) edge (p62);
    \path [-] (p6) edge (p63);
    \path [-] (p7) edge (p71);
\end{scope}
\end{tikzpicture} 
\end{minipage}
\caption{An ordered tree $T$ of height 3, and a cover $\mathcal{C}_j$ of $\mathcal{T}_j$ for all $0<j<3$. Recall that $\mathcal{T}_j$ is the set of distinct subtrees of $T$ rooted at depth $3-j$, while $\mathcal{C}^k_j$ is the $k$th chain in $\mathcal{C}_j$.}
\label{fig:cover}
\end{figure}

\begin{defi}
Let $\mathcal{C}$ be a cover of $\mathcal{T}$. Let $H$ be a subgraph of $G_\tau$ and $j=\ceil{\pi(H)/2}$.
For a fixed chain $\mathcal{C}^k_j$ in $\mathcal{C}_j$, the \emph{$k$th-width} of $H$, denoted $\alpha^k_\mathcal{C}(H)$, is the smallest integer $i\geq 0$ such that $H$ admits a finite feasible node labeling $\nu:V(H)\rightarrow L(T^k_{i,j})$. 
If such an $i$ does not exist, then we set $\alpha^k_{\mathcal{C}}(H) = \infty$.
\end{defi}

Recall that $T^k_{i,j}$ is the $(i+1)$-th smallest tree in the chain $\mathcal{C}^k_j$.
We are mainly interested in the case when $H$ is a cycle, and write $\alpha^k(H)$ whenever the cover $\mathcal{C}$ is clear from context.
Observe that the definition above requires $\nu(v)\neq \top$ for all $v\in V(H)$.
Hence, an odd cycle has infinite $k$th-width by the Cycle Lemma.
As $(\mathcal{C}^k_j,\sqsubseteq)$ is a chain, $H$ admits a finite feasible node labeling $\nu:V(H)\rightarrow L(T^k_{i,j})$ for all $\alpha^k(H)\leq i < |\mathcal{C}^k_j|$.
The next lemma illustrates the connection between the $k$th-width of an even cycle and its path decomposition.

\begin{lem}\label{lem:bit_req_cycle}
Let $\mathcal{C}$ be a cover of $\mathcal{T}$.
For an even cycle $C$, let $\Pi(C) = \set{v_1, v_2, \dots, v_\ell}$ and $j=\pi(C)/2$. 
Decompose $C$ into arc-disjoint paths $P_1,P_2,\dots, P_\ell$ such that each $P_i$ ends at $v_i$.
Then, $\alpha^k(C) = \max_{i\in [\ell]}\alpha^k(P_i)$ for all $0\leq k < |\mathcal{C}_j|$.
\end{lem}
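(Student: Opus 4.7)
To prove the equality, I would establish the two inequalities separately. First note that since each $v_i \in V(P_i)$ satisfies $\pi(v_i) = \pi(C) = 2j$ and $P_i \subseteq C$, we have $\pi(P_i) = 2j$, so $\alpha^k(P_i)$ and $\alpha^k(C)$ are defined with respect to the same chain $\mathcal{C}^k_j$.

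For the lower bound $\alpha^k(C) \geq \alpha^k(P_i)$, I may assume $\alpha^k(C) < \infty$ and take a finite feasible labeling $\nu : V(C) \to L(T^k_{\alpha^k(C),j})$. I then argue that its restriction to $V(P_i)$ is feasible on $P_i$: because $C$ is a simple cycle, every internal Even node of $P_i$ has a unique out-arc in $C$ which necessarily lies in $P_i$, so any Even strategy witnessing feasibility of $\nu$ on $C$ also witnesses feasibility of the restriction on $P_i$, and the arc constraints in $P_i$ are inherited from those in $C$.

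For the upper bound $\alpha^k(C) \leq m := \max_i \alpha^k(P_i)$, I may assume $m < \infty$ and pick finite feasible labelings $\nu_i : V(P_i) \to L(T^k_{\alpha^k(P_i),j})$ for each $i$. Since $\mathcal{C}^k_j$ is a chain, $T^k_{\alpha^k(P_i),j} \sqsubseteq T^k_{m,j}$, and any such embedding preserves depths (hence truncations, as $\xi|_p$ is a specific ancestor of $\xi$) together with the total order; composing $\nu_i$ with the embedding therefore yields a feasible labeling valued in $L(T^k_{m,j})$. I would then define $\mu: V(C) \to L(T^k_{m,j})$ by setting $\mu(v) := \nu_i(v)$ for $v$ internal to $P_i$ and $\mu(v_i) := \nu_i(v_i)$ at each shared endpoint, and verify feasibility of $\mu$ on $C$ arc by arc.

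The main obstacle is this patching step, since consecutive paths may disagree on the label of the shared vertex $v_i$. The resolution exploits the maximality of $\pi(v_i) = 2j$ in $C$: the feasibility condition at the first arc $v_iw$ of $P_{i+1}$ demands $\mu(v_i)|_{2j} \geq \mu(w)|_{2j}$, but every leaf of $T^k_{m,j}$ shares the same ancestor in $T$ at depth $h - j$, namely the root of $T^k_{m,j}$, so both $2j$-truncations coincide and the inequality is automatic. Every remaining arc of $C$ lies entirely within some $P_i$ with both endpoints labeled by $\nu_i$, so feasibility there is inherited directly, completing the argument.
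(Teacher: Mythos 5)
Your proof is correct and takes essentially the same route as the paper: the lower bound by restricting a feasible labeling of $C$ to each $P_i$, and the upper bound by moving each $\nu_i$ into the largest tree of the chain and patching at the priority-$2j$ junction nodes, where the feasibility constraint is automatic because all finite labels in a height-$j$ tree have the same (empty) $2j$-truncation. The only cosmetic difference is that the paper normalizes $\nu_i(v_i)$ to $\min L(T^k_{\alpha^*,j})$ before patching, whereas you observe directly that the junction constraint holds for any finite labels.
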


\begin{proof} 
Fix a chain $\mathcal{C}^k_j$ and let $\alpha^* = \max_{i\in [\ell]}\alpha^k(P_i)$.
Clearly, any node labeling which is feasible in $C$ is also feasible in $P_i$ for all $i\in [\ell]$. 
So, $\alpha^k(C)\geq \alpha^*$.
Next, we prove the reverse inequality.
For each $i\in [\ell]$, there exists a node labeling $\nu_i:V(P_i)\rightarrow L(T^k_{\alpha^*,j})$ which is feasible in $P_i$.
Since $v_i$ has out-degree 0 in $P_i$, 
we may assume that $\nu_i(v_i) = \min L(T^k_{\alpha^*,j})$ for all $i\in [\ell]$ without loss of generality.
Let us define a new node labeling $\nu$ as follows.
For $v\in V(C)$, set $\nu(v) := \nu_i(v)$ where $i\in [\ell]$ is the unique index such that $\delta^-_{P_i}(v)\neq \emptyset$.
Then, $\nu$ is feasible in $C$ because $\nu(v)\neq \top$ for all $v\in V(C)$.
Hence, $\alpha^k(C)\leq \alpha^*$.
\end{proof}

For a base node $v\in B(G_\tau)$, let us consider the cycles in $G_\tau$ which are dominated by $v$.
Among these cycles, we are interested in finding one with the smallest $k$th-width.
So, we extend the notion of $k$th-width to base nodes in the following way.

\begin{defi}
Let $\mathcal{C}$ be a cover of $\mathcal{T}$.
Let $v\in B(G_\tau)$ be a base node and $j=\pi(v)/2$.
For $0\leq k < |\mathcal{C}_j|$, the \emph{$k$th-width} of $v$ is defined as
\[\alpha^k_{\mathcal{C}}(v):=\min\set{\alpha^k_{\mathcal{C}}(C):C \text{ is a cycle dominated by $v$ in $G_\tau$}}.\]
\end{defi}

Again, we write $\alpha^k(v)$ whenever the cover $\mathcal{C}$ is clear from context.
Observe that $T^k_{\alpha^k(v),\pi(v)/2}$ is the smallest tree in the chain $\mathcal{C}^k_{\pi(v)/2}$ for which some cycle dominated by $v$ in $G_\tau$ admits a finite feasible node labeling.

Given a leaf $\xi\in L(T)$ and integers $i,j,k\in \Z_{\geq 0}$, the following subroutine locates a member of the chain $\mathcal{C}^k_j$ in $T$ whose leaves are at least $\xi$ and into which $T^k_{i,j}$ is embeddable.

\begin{myframe}{Raise{$(\xi,i,j,k)$}}
  Given a leaf $\xi\in L(T)$ and integers $i\geq 0$, $j\in [h]$, $k\geq 0$, return the smallest leaf $\xi'\in L(T)$ such that (1) $\xi'\geq \xi$; and (2) $\xi'$ is the smallest leaf in the subtree $T^k_{i',j}$ for some $i'\geq i$. 
  If $\xi'$ does not exist, then return $\top$.
\end{myframe}

This subroutine allows us to relate the $k$th-width $\alpha^k(v)$ of a base node $v$ to its threshold label $\widehat{\mu}(v)$.
For $0\leq k < |\mathcal{C}_{\pi(v)/2}|$, the element returned by \textsc{Raise}$(\mu(v),\alpha^k(v),\frac{\pi(v)}{2},k)$ is at least $\widehat{\mu}(v)$.
Moreover, the smallest such element over all $k$ is precisely $\widehat{\mu}(v)$.

\begin{lem}\label{lem:threshold}
Let $\mu\in \mathcal{L}$ be a node labeling such that $G_\tau$ does not have loose arcs.
Let $v\in B(G_\tau)$ be a base node and $j=\pi(v)/2$.
If $\xi^k\in \bar{L}(T)$ is the element returned by {\sc Raise}$(\mu(v),\alpha^k(v),j,k)$, then
\[\widehat{\mu}(v) = \min_{0\leq k < |\mathcal{C}_j|} \xi^k.\]
\end{lem}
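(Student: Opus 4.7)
The plan is to prove both inequalities by exploiting the $\sqsubseteq$-structure of the subtrees in the cover together with the ``no loose arcs'' hypothesis on $\mu$.

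For the forward inequality $\widehat{\mu}(v) \leq \min_{k} \xi^{k}$, it suffices to establish $\widehat{\mu}(v) \leq \xi^k$ for each $k$ (the case $\xi^k = \top$ is vacuous). By the definition of {\sc Raise}, $\xi^k$ is the smallest leaf of a subtree $S \subseteq T$ of height $j$ with $S \equiv T^k_{i',j}$ for some $i' \geq \alpha^k(v)$, and it satisfies $\xi^k \geq \mu(v)$. Unfolding the definition of $\alpha^k(v)$ produces a cycle $C$ dominated by $v$ with $\alpha^k(C) \leq \alpha^k(v) \leq i'$, so $C$ admits a finite feasible labeling valued in $L(T^k_{i',j})$. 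Transferring this labeling through the isomorphism $T^k_{i',j} \equiv S$ and invoking the normalization step in the proof of Lemma~\ref{lem:bit_req_cycle} (which allows each dominating node to carry the minimum leaf of the ambient subtree) produces a labeling $\tilde{\mu}$ feasible on $C$ with $\tilde{\mu}(v) = \min L(S) = \xi^k \geq \mu(v)$. Setting $\tilde{\mu}(u) = \top$ for $u \notin V(C)$ then witnesses $\widehat{\mu}(v) \leq \xi^k$.

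For the reverse inequality $\widehat{\mu}(v) \geq \min_{k} \xi^k$, let $\tilde{\mu}$ attain the minimum in Definition~\ref{def:threshold} on some cycle $C$ dominated by $v$. The Cycle Lemma forces every label on $C$ to share the $2j$-truncation $\sigma := \tilde{\mu}(v)|_{2j}$, so $\tilde{\mu}$ takes values in $L(T')$, where $T' \subseteq T$ is the height-$j$ subtree of $T$ rooted at $\sigma$. Since $T' \in \mathcal{T}_j$ belongs to some chain $\mathcal{C}^k_j$ of the cover, we have $T' \equiv T^k_{i',j}$ for some $i' \geq \alpha^k(C) \geq \alpha^k(v)$. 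It then suffices to show $\min L(T') \geq \mu(v)$: this makes $\min L(T')$ an admissible candidate for {\sc Raise}$(\mu(v),\alpha^k(v),j,k)$, giving $\xi^k \leq \min L(T') \leq \tilde{\mu}(v) = \widehat{\mu}(v)$.

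The bound $\min L(T') \geq \mu(v)$ splits into two cases according to how $\sigma$ compares to $\mu(v)|_{2j}$. If $\sigma > \mu(v)|_{2j}$ lexicographically, every leaf of $T'$ already exceeds $\mu(v)$ and the bound is immediate. The delicate case, which I expect to be the main obstacle, is $\sigma = \mu(v)|_{2j}$, so that $\mu(v) \in L(T')$: here one needs the sharper equality $\mu(v) = \min L(T')$, and the only available leverage is the ``no loose arcs'' hypothesis. Since $\pi(v) = 2j$ is even, {\sc Tighten}$(\mu,vw)$ on any outgoing arc $vw$ of $v$ equals the leftmost leaf of the depth-$(h-j)$ subtree of $T$ rooted at $\mu(w)|_{2j}$. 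The no-loose-arcs assumption forces $\mu(v)$ to coincide with {\sc Tighten}$(\mu,vw)$ for at least one outgoing arc $vw$---Even's strategy arc when $v \in V_0$, and $v\tau(v)$ when $v \in V_1$---so matching $2j$-truncations yields $\mu(w)|_{2j} = \mu(v)|_{2j} = \sigma$. Hence the subtree appearing in the tight value is $T'$ itself, so $\mu(v) = \min L(T')$ as required, and combining the two cases closes the proof.
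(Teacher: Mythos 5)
Your first inequality, and your reduction of the second inequality to the claim $\min L(T')\geq\mu(v)$, follow the paper's argument closely and are correct. The gap is in your Case 2. You assert that the no-loose-arcs hypothesis forces $\mu(v)={\sc Tighten}(\mu,vw)$ for at least one outgoing arc $vw$ of $v$ (Even's strategy arc if $v\in V_0$, the arc $v\tau(v)$ if $v\in V_1$). This does not follow: ``no loose arcs'' only says that every arc of $G_\tau$ is tight \emph{or violated}, i.e.\ $\mu(v)\leq{\sc Tighten}(\mu,vw)$ for every $vw\in\delta^+_{G_\tau}(v)$; to get equality on some arc you would need a non-violated outgoing arc at $v$, which is exactly feasibility of $\mu$ in $G_\tau$ at $v$ --- and feasibility is \emph{not} a hypothesis of the lemma. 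In the intended application (right after a pivot in Algorithm~\ref{algo:strategy+iteration}) the newly chosen arcs are admissible, i.e.\ violated, so a base node $v\in V_1$ can have its unique outgoing arc $v\tau(v)$ violated and possess no tight outgoing arc at all; the paper even treats this situation explicitly (all of $\delta^+_{G_\tau}(v)$ violated) in the perfect-tree theorem of Section~\ref{sec:label-setting}. So your local argument at $v$ collapses exactly in the regime the lemma is designed for.

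The conclusion of your Case 2 is still true, but establishing it requires the cycle $C$ rather than local reasoning. Let $s$ be the out-neighbour of $v$ on $C$; by the Cycle Lemma, $\tilde\mu(s)|_{2j}=\tilde\mu(v)|_{2j}=\mu(v)|_{2j}$. If $\tilde\mu(s)|_{2j}\geq\mu(s)|_{2j}$, then $\mu(v)|_{2j}\geq\mu(s)|_{2j}$, so the arc $vs$ is non-violated with respect to $\mu$, hence tight by the no-loose hypothesis, and your {\sc Tighten} computation then yields $\mu(v)=\min L(T')$ as desired. Otherwise $\tilde\mu(s)|_{2j}<\mu(s)|_{2j}$, so $\tilde\mu(s)<\mu(s)$; since $\tilde\mu$ is finite and feasible on $C$ (no violated arcs on the $s$--$v$ portion) and $\tilde\mu(v)\geq\mu(v)$, a backward induction along the $s$--$v$ path, using monotonicity of the tight label in the head label, shows that the absence of loose arcs with respect to $\mu$ would force $\tilde\mu(s)\geq\mu(s)$, a contradiction. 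This second branch --- handling a violated cycle arc at $v$ by locating a loose arc elsewhere on $C$ --- is the ingredient missing from your write-up; the rest of your proof is essentially the paper's.
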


\begin{proof}
First, we prove that $\widehat{\mu}(v)\leq \xi^k$ for all $0\leq k< |\mathcal{C}_j|$.
Fix a $k$ and assume that $\xi^k\neq \top$.
Then, $\xi^k$ is the smallest leaf in a subtree $T^k_{i,j}$ of $T$ for some $i\geq \alpha^k(v)$. 
From the definition of $\alpha^k(v)$, there exists a cycle $C$ dominated by $v$ in $G_\tau$ which admits a finite feasible node labeling $\nu:V(C)\rightarrow L(T^k_{i,j})$.
Since $\pi(v) = 2j$ and $\nu(w)\neq \top$ for all $w\in V(C)$, we may assume that $\nu(v) = \min L(T^k_{i,j})$. 
Let us define a new node labeling $\tilde{\mu}:V\rightarrow \bar{L}(T)$ as follows.
If $w\in V(C)$, set $\tilde{\mu}(w)$ as the concatenation of $\xi^k|_{2j}$ and $\nu(v)$, which is a leaf in $T$. 
Otherwise, set $\tilde{\mu}(w) := \top$.
Then, $\tilde{\mu}$ is feasible in $C$ and $\tilde{\mu}(v) = \xi^k \geq \mu(v)$, where the equality is due to our assumption $\nu(v) = \min L(T^k_{i,j})$.
Hence, $\tilde{\mu}(v)\geq \widehat{\mu}(v)$ from the definition of $\widehat{\mu}(v)$.

\smallskip
It is left to show that $\widehat{\mu}(v) \geq  \xi^k$ for some $0\leq k < |\mathcal{C}_j|$.
We may assume that $\widehat{\mu}(v)\neq \top$.
Let $T^k_{i,j}$ be the subtree of $T$ rooted at $\widehat{\mu}(v)|_{2j}$.
From the definition of $\widehat{\mu}(v)$, there exists a node labeling $\tilde{\mu}\in \mathcal{L}$ and a cycle $C$ dominated by $v$ in $G_\tau$ such that $\tilde{\mu}$ is feasible in $C$ and $\tilde{\mu}(v) = \widehat{\mu}(v) \geq \mu(v)$. 
By the Cycle Lemma, $\tilde{\mu}(v)|_{2j} = \tilde{\mu}(w)|_{2j}$ for all $w\in V(C)$.
Since $\tilde{\mu}$ is feasible in $C$ and $\tilde{\mu}(w)\neq \top$ for all $w\in V(C)$, it follows that $i\geq \alpha^k(C)\geq \alpha^k(v)$.
It remains to show that $\widehat\mu(v)$ is the smallest leaf in the subtree of $T$ rooted at $\widehat\mu(v)|_{2j}$, as this would imply $\widehat\mu(v)$ is at least the output of {\sc Raise}$(\mu(v),i,j,k)$. Suppose otherwise for a contradiction. Then, $\widehat\mu(v) = \mu(v)$. Let $u$ be the unique out-neighbour of $v$ in $C$, and consider the $u$-$v$ subpath $P$ of $C$. Since $\tilde\mu(v)\geq \mu(v)$ and there are no loose arcs in $P$ with respect to $\mu$, an inductive argument from $v$ along $P$ shows that $\tilde\mu(w)\geq \mu(w)$ for all $w\in V(P)$. It follows that $\mu(v)|_{2j} = \tilde\mu(v)|_{2j} = \tilde\mu(u)|_{2j}\geq \mu(u)|_{2j}$, so $vu$ is loose with respect to $\mu$. This is a contradiction.
\end{proof}

The necessary number of chains in the subcover $\mathcal{C}_{\pi(v)/2}$ can be large if $T$ is an arbitrary ordered tree.
Fortunately, the universal trees constructed in the literature admit covers with small subcovers.
We prove that a succinct $(n,h)$-universal tree has a cover with only 1 chain per subcover in Section \ref{sec:succinct}, whereas a succinct Strahler $(n,h)$-universal tree (introduced by Daviaud et al.~\cite{conf/icalp/DaviaudJT20}) has a cover with at most $\log n$ chains per subcover in Section \ref{sec:strahler}.

Let $\rho(T,\mathcal{C})$ denote the running time of {\sc Raise}.
We provide efficient implementations of {\sc Raise} for succinct universal trees and succinct Strahler universal trees in Sections \ref{sec:succinct}--\ref{sec:strahler}.
They have the same running time as \textsc{Tighten}, i.e., $\rho(T,\mathcal{C})=O(\log n \log h)$.

\subsection{Estimating the Width of Base Nodes}
\label{sec:raise}
In light of the previous discussion, we can now focus on computing the $k$th-width of a base node $w\in B(G_\tau)$. 
Fix a $0\leq k < |\mathcal{C}_{\pi(w)/2}|$.
Since we ultimately need a label that lies between the fixed point $\mu^{\mathcal{G}^\uparrow_\tau}(w)$ and the threshold label $\widehat{\mu}(w)$ in order to initialize Algorithm \ref{algo:Bellman+Ford}, it suffices to compute a `good' under-estimation of $\alpha^k(w)$. 
In this subsection, we reduce this problem to computing a minimum bottleneck cycle in an auxiliary digraph $D$ with nonnegative arc costs $c^k \geq 0$.

For a base node $w\in B(G_\tau)$, let $K_w$ denote the SCC containing $w$ in $(G_\tau)_{\pi(w)}$, the subgraph of $G_\tau$ induced by nodes with priority at most $\pi(w)$.
Note that $K_v = K_w$ for all $v\in \Pi(K_w)$.
Let $K'_w$ be the subgraph of $K_w$ after deleting $\delta^-(v)$ for all $v\in \Pi(K_w)\setminus \{w\}$.
Then, we define $J_w$ as the subgraph of $K'_w$ induced by nodes which can reach $w$ in $K'_w$.
These are the nodes which can reach $w$ in $K_w$ without encountering an intermediate node of priority $\pi(w)$.
We remark that $J_w$ may not be an induced subgraph of $K_w$.

\begin{figure}[ht]
\begin{minipage}{0.49\textwidth}
\centering
\begin{tikzpicture}[scale=0.7]

\fill [fill=gray!20] plot [mark=none, smooth cycle] coordinates {(-4.2,-1.2) (-1.5,-1.2) (-0.5,-2.6) (1,-2.6) (2.5,-0.3) (2.5,2.4) (-4,2.2)};
\fill [fill=gray!40] plot [mark=none, smooth cycle] coordinates {(-4.1,-1) (-1.4,-1) (-0.5,-2.4) (1,-2.4) (2.4,0) (2.4,2.2) (-0.5,2.3) (-1.5,1) (-4.2,1)};

\begin{scope}[every node/.style={circle,draw,inner sep = 2pt}]
    \node (B) at (2,2) {1};  
    \node[label={right,xshift=1mm}:{$w_4$}] (D) at (2,0) {4}; 
    \node (G) at (-2,-2) {5};
    \node[label=above:{$w_2$}] (H) at (-2,2) {4}; 
    \node (I) at (-2,0) {2};
    
\end{scope}
\begin{scope}[every node/.style={rectangle,draw}]
    \node (A) at (0,0) {1};  
    \node[label={above,yshift=1mm}:{$w_1$}] (C) at (0,2) {2};  
    \node[label={left,xshift=-1mm}:{$w_3$}] (E) at (-4,0) {4};
    \node (F) at (0,-2) {3}; 
\end{scope}

\begin{scope}[>={Stealth[black]},
              every edge/.style={draw}]
    \path [->] (A) edge (B);
    \path [->] (B) edge (C);
    \path [->] (C) edge (A);
    \path [->] (E) edge (I);
    \path [->] (I) edge (A);
    \path [->] (C) edge (H);
    \path [->] (G) edge (F);
    \path [->] (F) edge [bend right] (D);
    \path [->] (A) edge [bend right] (F);
    \path [->] (F) edge [bend right] (A);
    \path [->] (D) edge (B);
    \path [->] (E) edge [bend right] (G);
    \path [->] (H) edge [bend right] (E);
\end{scope}
\end{tikzpicture}
\end{minipage}
\begin{minipage}{0.49\textwidth}
\centering
\begin{tikzpicture}
\begin{scope}[every node/.style={circle,draw, inner sep=2pt}]
    \node[label=below:{$w_1$}] (A) at (-2,0) {};
    \node[label=above:{$w_2$}] (B) at (1,0.7) {};
    \node[label=below:{$w_3$}] (C) at (0,-0.7) {};
    \node[label=below:{$w_4$}] (D) at (2,-0.7) {};
    
\end{scope}

\begin{scope}[>={Stealth[black]},
              every edge/.style={draw}]
    \path [->] (A) edge [out=45, in=135, loop] (A);
    \path [->] (B) edge [bend right] (C);
    \path [->] (C) edge [bend right] (B);
    \path [->] (C) edge [bend right] (D);
    \path [->] (D) edge [bend right] (B);
    \path [->] (D) edge [out=315, in=45, loop] (D);
\end{scope}
\end{tikzpicture}
\end{minipage}
\caption{An example of a 1-player game $(G_\tau,\pi)$ for Even is given on the left, with its auxiliary digraph $D$ on the right. Nodes in $V_0$ and $V_1$ are drawn as squares and circles respectively. Base nodes are labeled as $w_1$, $w_2$, $w_3$, $w_4$. The light gray region is $K_{w_4}$, while the dark gray region is~$J_{w_4}$.}
\label{fig:aux_graph}
\end{figure}
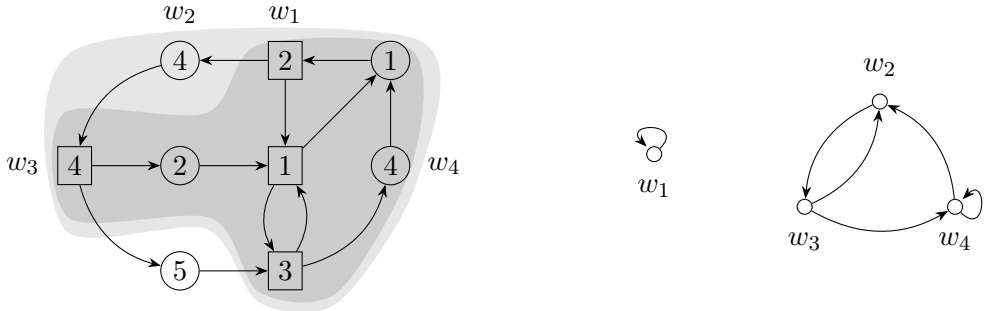

The auxiliary digraph $D$ is constructed as follows.
Its node set is $B(G_\tau)$.
For every ordered pair $(v,w)$ of distinct base nodes, add the arc $vw$ if $v\in \Pi(J_w)$.
For every base node $w$, we also add the self-loop $ww$ if $w$ has an outgoing arc in $J_w$.
Observe that $vw\in E(D)$ if and only if (1) $v\neq w$, $v\in \Pi(K_w)$ and $v$ can reach $w$ by only seeing smaller priorities on the intermediate nodes; or (2) $v=w$ and there exists a cycle $C$ in $G_\tau$ such that $\Pi(C) = \{w\}$.
Hence, every component of $D$ is strongly connected as its vertex set is $\Pi(K_w)$ for some base node $w\in B(G_\tau)$ (see Figure \ref{fig:aux_graph} for an example).
For $w\in B(G_\tau)$, we denote $D_w$ as the component in $D$ which contains $w$.

To finish the description of $D$, all that remains is to assign the arc costs $c^k$.
Note that the graph structure of $D$ is independent of $k$.
We first state a useful lemma which relates the outputs of $\drop(\cdot,\cdot)$ when applied to node labelings from comparable trees.

\begin{lem}\label{lem:finite_comparable}
Given integers $0\leq i_1\leq i_2$ and $j,k\geq 0$, let $f$ be an injective and order-preserving homomorphism from $T^k_{i_1,j}$ to $T^k_{i_2,j}$ with the extension $f(\top) := \top$.
Let $\nu_1:V\to \bar L(T^k_{i_1,j})$ and $\nu_2:V\to \bar L(T^k_{i_2,j})$ be node labelings such that $f(\nu_1(u))\geq \nu_2(u)$ for all $u\in V$.
For any arc $vw\in E_\tau$, we have $f(\drop(\nu_1,vw)) \geq \drop(\nu_2,vw)$.
\end{lem}

\begin{proof}
We only need to check the label at node $v$.
If $vw$ is not loose with respect to $\nu_1$, then $\drop(\nu_1,vw) = \nu_1(v)$.
Hence, $f(\drop(\nu_1,vw))  = f(\nu_1(v)) \geq \nu_2(v) \geq \drop(\nu_2,vw)$.
So, we may assume that $vw$ is loose with respect to $\nu_1$.
Note that this implies $\nu_1(w)< \top$, which in turn implies $\nu_2(w)\leq f(\nu_1(w)) < \top$.
Let $\xi\in L(T^k_{i_2,j})$ be the unique element such that $vw$ is tight with respect to $\nu_2$ after setting $\nu_2(v) = \xi$.
Since $\nu_2(w)|_{\pi(v)}\leq  f(\nu_1(w))|_{\pi(v)} = f(\nu_1(w)|_{\pi(v)})$, we obtain $\drop(\nu_2,vw)\leq \xi \leq f(\drop(\nu_1,vw))$.
\end{proof}

Fix a base node $w\in B(G_\tau)$ and let $j=\pi(w)/2$.
Let $\mathcal{J}^\downarrow_w=\set{\text{Drop}_e:e\in E(J_w)}$ denote the set of Drop operators in $J_w$.
For $0\leq i < |\mathcal{C}^k_j|$, define $\lambda^k_{i,w}:V(J_w)\rightarrow \bar{L}(T^k_{i,j})$ as the greatest simultaneous fixed point of $\mathcal{J}^\downarrow_w$ which satisfies $\lambda^k_{i,w}(w) = \min L(T^k_{i,j})$. %
By Proposition~\ref{prop:Tarski}, $\lambda^k_{i,w}$ exists and can be obtained by applying the operators in $\mathcal{J}^\downarrow_w$ to the node labeling that assigns $\min L(T^k_{i,j})$ to $w$, and $\top$ to other nodes, until convergence.

For every arc $vw\in E(D)$, we specify a range $\underline{c}^k(vw)\leq c^k(vw) \leq \overline{c}^k(vw)$ of permissible arc costs.
The lower bound $\underline{c}^k(vw)$ is the smallest index of a tree in the chain $\mathcal{C}^k_j$ for which $\lambda^k_{i,w}$ is finite at an out-neighbour of $v$, i.e.,
\begin{equation} \label{eq:cost-lower-bound}
  \underline{c}^k(vw) := \min\set{i:\lambda^k_{i,w}(u)\neq \top \text{ for some }u\in N^+_{J_w}(v)}.
\end{equation}
We use the convention that the minimum over the empty set is $\infty$. By Lemma~\ref{lem:finite_comparable}, observe that if $\lambda^k_{i,w}(u)\neq \top$, then $\lambda^k_{i',w}(u)\neq \top$ for all $i'\geq i$.

On the other hand, the upper bound $\overline{c}^k(vw)$ is the smallest $k$th-width of a path from an out-neighbour of $v$ to $w$ in $J_w$, i.e.
\begin{equation} \label{eq:cost-upper-bound}
  \overline{c}^k(vw) := \min\set{\alpha^k(P): P \text{ is a $u$-$w$ path in } J_w \text{ where }u\in N^+_{J_w}(v)}.
\end{equation}
Recall that the $k$th-width of a path $P$ is the smallest index of a tree in the chain $\mathcal{C}^k_j$ for which $P$ has a finite feasible node labeling.

\begin{lem}
For every arc $vw\in E(D)$, we have $\underline{c}^k(vw) \leq \overline{c}^k(vw)$.
\end{lem}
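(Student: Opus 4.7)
Assume $i := \overline{c}^k(vw) < \infty$ (otherwise the claim is vacuous), and let $u \in N^+_{J_w}(v)$ together with a $u$-$w$ path $P = u_0 u_1 \cdots u_\ell$ in $J_w$ (where $u_0 = u$ and $u_\ell = w$) witness $\overline{c}^k(vw) = i$ via a finite feasible labeling $\nu : V(P) \to L(T^k_{i,j})$. The plan is to show $\lambda^k_{i,w}(u) \ne \top$ for this specific $u$, which gives $\underline{c}^k(vw) \le i$.

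A preliminary step lets me assume $\nu(w) = \min L(T^k_{i,j})$. Starting from $\nu'(w) := \min L(T^k_{i,j})$, I walk $P$ backwards and set $\nu'(u_{\ell-s})$ to the tight value at $u_{\ell-s}$ with respect to $\nu'(u_{\ell-s+1})$. Because truncation $\xi \mapsto \xi|_p$ is order-preserving on the lex-ordered leaves, the invariant $\nu'(x) \le \nu(x)$ propagates along $P$: lowering the target label only relaxes the feasibility condition on the arc, so $\nu(u_{\ell-s})$ remains a valid leaf witness and each tight value exists in $L(T^k_{i,j})$.

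For the main argument, I compute $\lambda^k_{i,w}$ via the deflationary analogue of Proposition~\ref{prop:Tarski}(i), starting from the initial labeling $\nu_0$ with $\nu_0(x) = \top$ for $x \ne w$ and $\nu_0(w) = \min L(T^k_{i,j})$, applying the operators of $\mathcal{J}^\downarrow_w$ in an order of my choice. I process the arcs of $P$ first, backwards: $u_{\ell-1}u_\ell$, then $u_{\ell-2}u_{\ell-1}$, and so on until $u_0 u_1$. The key invariant, proved by induction on the number of arcs of $P$ already processed, is that immediately after $\text{Drop}_{u_{\ell-s} u_{\ell-s+1}}$ has fired, the current label at $u_{\ell-s}$ is at most $\nu(u_{\ell-s})$. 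In the inductive step I combine the bound at $u_{\ell-s+1}$ with truncation monotonicity to conclude that $\nu(u_{\ell-s})$ still satisfies the non-violated condition on the arc, so the tight value at $u_{\ell-s}$ exists in $L(T^k_{i,j})$ and is no larger than $\nu(u_{\ell-s})$; Drop then sets the label to at most this leaf. After processing all $\ell$ arcs, the intermediate labeling satisfies $\lambda(u) \le \nu(u) < \top$, and since further Drops only decrease labels while converging to $\lambda^k_{i,w}$ independently of order, we obtain $\lambda^k_{i,w}(u) \le \nu(u) < \top$.

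The hardest step will be verifying that each Drop along $P$ actually produces a leaf rather than stalling at $\top$: the output of $\text{Drop}_{u_{\ell-s} u_{\ell-s+1}}$ equals $\top$ precisely when every leaf of $T^k_{i,j}$ violates the arc condition with respect to the current label at $u_{\ell-s+1}$, and a priori nothing rules this out. The feasibility hypothesis on $\nu$ is what saves the argument, by providing an explicit leaf witness at every step of the induction.
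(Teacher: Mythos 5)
Your proof is correct and follows essentially the same route as the paper: both arguments propagate a bound backwards along the witnessing $u$-$w$ path, using feasibility of $\nu$ on $P$ and monotonicity of truncation to conclude $\lambda^k_{i,w}(u)\leq\nu(u)<\top$ and hence $\underline{c}^k(vw)\leq\overline{c}^k(vw)$. The paper obtains the pointwise bound $\nu\geq\lambda^k_{i,w}$ on $V(P)$ directly from the facts that $\lambda^k_{i,w}(w)=\min L(T^k_{i,j})\leq\nu(w)$ and that the fixed point has no loose arcs, so your explicit scheduling of the Drop operators and the preliminary normalization $\nu(w)=\min L(T^k_{i,j})$ are sound but not needed.
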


\begin{proof}
We may assume that $\overline{c}^k(vw) < \infty$.
Let $P$ be a $u$-$w$ path in $J_w$ where $u\in N^+_{J_w}(v)$ and $\alpha^k(P) = \overline{c}^k(vw)$.
Let $j=\pi(w)/2$.
From the definition of $\alpha^k(P)$, there exists a finite node labeling $\nu:V(P)\rightarrow L(T^k_{\alpha^k(P),j})$ which is feasible in $P$.
Note that $\nu(s)\geq \lambda^k_{\alpha^k(P),w}(s)$ for all $s\in V(P)$ because $\nu(w) \geq \min L(T^k_{\alpha^k(P),j}) = \lambda^k_{\alpha^k(P),w}(w)$ and there are no loose arcs in $P$ with respect to $\lambda^k_{\alpha^k(P),w}$.
It follows that $\lambda^k_{\alpha^k(P),w}(u) \leq \nu(u) < \top$, which gives $\underline{c}^k(vw) \leq \alpha^k(P)$ as desired.
\end{proof}

For a cycle $C$ in $D$, we define its (bottleneck) $c^k$-\emph{cost} as $c^k(C):=\max_{e\in E(C)}c^k(e)$.
Note that self-loops in $D$ are considered cycles.
The next theorem allows us to obtain the desired initial node labeling $\nu$ for Algorithm \ref{algo:Bellman+Ford} by computing minimum bottleneck cycles in~$D$.

\begin{thm}\label{thm:bottleneck_cycle}
Let $\mathcal{C}$ be a cover of $\mathcal{T}$.
Let $\mu\in \mathcal{L}$ be a node labeling such that $G_\tau$ does not have loose arcs.
Fix a base node $w\in B(G_\tau)$ and let $j=\pi(w)/2$. 
For $0\leq k < |\mathcal{C}_j|$, let $c^k$ be arc costs in $D_w$ where $\underline{c}^k\leq c^k \leq \overline{c}^k$, let $i^k$ be the minimum $c^k$-cost of a cycle containing $w$ in $D_w$, and let $\xi^k$ be the label returned by {\sc Raise}$(\mu(w),i^k,j,k)$.\footnote{We set $\xi^k = \top$ if $i^k = \infty$.}
Then, 
\[\mu^{\mathcal{G}^\uparrow_\tau}(w)\leq \min_{0\leq k < |\mathcal{C}_j|}\xi^k\leq \widehat{\mu}(w).\]
\end{thm}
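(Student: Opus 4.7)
I attack the two inequalities of the theorem by separate constructions.

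\emph{Upper bound} $\min_k \xi^k \leq \widehat{\mu}(w)$: By Lemma~\ref{lem:threshold}, $\widehat{\mu}(w) = \min_k$ \textsc{Raise}$(\mu(w),\alpha^k(w),j,k)$. Since \textsc{Raise} is non-decreasing in its second argument, it suffices to show $i^k \leq \alpha^k(w)$ for each $k$. Fix $k$ and choose a cycle $C^*$ in $G_\tau$ with $w\in \Pi(C^*)$ attaining $\alpha^k(C^*) = \alpha^k(w)$. Decompose $C^*$ along $\Pi(C^*) = \{v_0 = w, v_1, \dots, v_{\ell-1}\}$ into arc-disjoint paths $P_r$ from $v_{r-1}$ to $v_r$. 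A priority-chasing check (internal nodes of $P_r$ have priority $<2j$, and $v_{r-1}$ reaches $v_r$ inside $K'_{v_r}$) shows each $P_r$ lies in $J_{v_r}$, so $(v_{r-1},v_r) \in E(D)$ and the first arc of $P_r$ picks an out-neighbour of $v_{r-1}$ in $J_{v_r}$ with $\overline{c}^k(v_{r-1}v_r) \leq \alpha^k(P_r)$. Hence $c^k(v_{r-1}v_r) \leq \alpha^k(P_r)$, and the cycle $v_0 \to v_1 \to \cdots \to v_0$ in $D_w$ has $c^k$-bottleneck cost at most $\max_r \alpha^k(P_r) = \alpha^k(C^*) = \alpha^k(w)$ by Lemma~\ref{lem:bit_req_cycle}. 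Therefore $i^k \leq \alpha^k(w)$.

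\emph{Lower bound} $\mu^{\mathcal{G}^\uparrow_\tau}(w) \leq \min_k \xi^k$: Fix $k$, let $C_D\colon v_0 \to v_1 \to \cdots \to v_\ell = v_0$ with $v_s = w$ be a $c^k$-minimum bottleneck cycle in $D_w$ through $w$, and let $T^k_{i',j}$ be the subtree with $\xi^k = \min L(T^k_{i',j})$, so $i' \geq i^k$. Since $\underline{c}^k(v_{r-1}v_r) \leq c^k(v_{r-1}v_r) \leq i^k \leq i'$, Lemma~\ref{lem:finite_comparable} guarantees an out-neighbour $u_r$ of $v_{r-1}$ in $J_{v_r}$ with $\lambda^k_{i',v_r}(u_r) \neq \top$. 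The goal is to exhibit a simultaneous fixed point $\tilde{\mu}$ of $\mathcal{G}^\uparrow_\tau$ with $\tilde{\mu} \geq \mu$ and $\tilde{\mu}(w) \leq \xi^k$; then Proposition~\ref{prop:Tarski}\,(ii) forces $\mu^{\mathcal{G}^\uparrow_\tau}(w) \leq \tilde{\mu}(w) \leq \xi^k$, and the inequality follows on taking the minimum over $k$.

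To build $\tilde{\mu}$, observe that the finite-label set $F_r := \{v \in V(J_{v_r}) : \lambda^k_{i',v_r}(v) \neq \top\}$ is closed under out-neighbours in $J_{v_r}$: any arc from $F_r$ to a $\top$-labelled node would be loose under $\lambda^k_{i',v_r}$, contradicting the greatest-fixed-point property. Combined with the fact that every node of $J_{v_r}$ reaches $v_r$, this yields a path $Q_r$ from $u_r$ to $v_r$ inside $F_r$. Concatenating the pieces $v_{r-1} \to u_r \leadsto v_r$ around $C_D$ produces a closed walk $W$ in $G_\tau$ whose highest priority is $2j$, attained only on $\{v_0, \dots, v_{\ell-1}\}$. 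Extract a simple cycle $C$ through $w$ from $W$; it is dominated by $w$. Assign $\tilde{\mu}(v_r) = \xi^k$ at each priority-$2j$ node on $C$, use the corresponding $\lambda^k_{i',v_r}$-values on the interior stretches of $C$, and set $\tilde{\mu} \equiv \top$ outside $V(C)$. The Cycle Lemma constraints are matched because all $v_r$ share the prefix $\xi^k|_{2j}$, and, exactly as in the proof of Lemma~\ref{lem:bounds-labels-base-nodes}, $\top$ outside $V(C)$ upgrades feasibility on $C$ to feasibility in $G_\tau$.

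The main obstacle is verifying $\tilde{\mu} \geq \mu$. At the target node $w$ the inequality $\tilde{\mu}(w) = \xi^k \geq \mu(w)$ is immediate from the definition of \textsc{Raise}. For interior nodes of the $Q_r$-pieces, a monotonicity comparison between two greatest fixed points on $J_{v_r}$ (one built from $\mu$, the other from $\xi^k$ at $v_r$) gives $\lambda^k_{i',v_r} \geq \mu$ on $F_r$, using that $\mu$ has no loose arcs in $G_\tau$. The delicate step is the other priority-$2j$ nodes $v_r \neq w$ on $C$, where one must rule out $\mu(v_r)|_{2j} > \xi^k|_{2j}$. This will likely require either a careful choice of $C_D$ among all minimum-bottleneck cycles through $w$, or an argument propagating the bound $\xi^k \geq \mu(w)$ along the component $D_w$ through the no-loose-arc structure of $\mu$ in $G_\tau$.
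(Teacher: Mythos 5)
Your upper bound argument is correct and is essentially the paper's: decompose an optimal cycle $C^*$ attaining $\alpha^k(w)$ along its dominating nodes, observe the pieces live in the relevant $J_{v_r}$ so they give a cycle in $D_w$, bound $c^k$ by $\overline{c}^k\leq\alpha^k(P_r)$, invoke Lemma~\ref{lem:bit_req_cycle}, and use monotonicity of \textsc{Raise} together with Lemma~\ref{lem:threshold}.

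The lower bound, however, has a genuine gap, and it is exactly the one you flag at the end: you never establish $\mu(v_r)\leq\xi^k$ (equivalently $\tilde\mu\geq\mu$) at the dominating nodes $v_r\neq w$ of your cycle, and your ``monotonicity comparison between two greatest fixed points on $J_{v_r}$'' for the interior nodes itself presupposes this inequality at its anchor $v_r$, so the gap propagates rather than being confined to finitely many nodes. The paper closes it without any special choice of bottleneck cycle, by a backward induction around the cycle $(w_1,\dots,w_\ell)$ in $D_w$ with $w_1=w_\ell=w$: for each $s$ one anchors the label $\xi^k$ at $w_s$ (top everywhere else) and takes the greatest fixed point $\nu_s^{\mathcal{J}^\downarrow_{w_s}}$ of the Drop operators in $J_{w_s}$; Lemma~\ref{lem:finite_comparable} (your Claim-analogue) produces an out-neighbour $u_s$ of $w_{s-1}$ whose label agrees with $\xi^k$ up to priority $2j$, so the arc $w_{s-1}u_s$ is tight for $\nu_s^{\mathcal{J}^\downarrow_{w_s}}$; starting from $s=\ell$, where $\xi^k\geq\mu(w)$ gives $\nu_\ell^{\mathcal{J}^\downarrow_{w_\ell}}\geq\mu$ (because $\mu$, having no loose arcs, is itself a fixed point of $\mathcal{J}^\downarrow_{w_\ell}$), the inductive hypothesis plus this tight arc yields $\mu(w_{s-1})\leq\nu_s^{\mathcal{J}^\downarrow_{w_s}}(w_{s-1})=\xi^k$, which re-anchors the greatest-fixed-point comparison at $w_{s-1}$ and lets the induction continue. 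A second, smaller defect in your construction is the extraction of a simple cycle from the closed walk $W$: at a node visited by two different stretches $Q_r$ and $Q_{r'}$, the extracted cycle may enter along an arc of one stretch and leave along an arc of the other, and the single label you assign there need not keep both incident arcs non-violated. The paper sidesteps this by never extracting a cycle at all: it defines $\nu(v):=\min_s\nu_s^{\mathcal{J}^\downarrow_{w_s}}(v)$ on all of $V$ and checks that this pointwise minimum is feasible in $G_\tau$ (lowering the head label of a non-violated arc keeps it non-violated) and at least $\mu$, whence $\nu\geq\mu^{\mathcal{G}^\uparrow_\tau}$ and $\xi^k=\nu(w)\geq\mu^{\mathcal{G}^\uparrow_\tau}(w)$.
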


\begin{proof}
We first prove the lower bound.
Fix a $0\leq k < |\mathcal{C}_j|$ and assume that $\xi^k\neq \top$.
Let $C$ be a minimum $c^k$-cost cycle in $D_w$ containing $w$.
Denote $C = (w_1,w_2,\dots, w_\ell)$ where $w = w_1 = w_\ell$; note that these are base nodes in $G_\tau$.
For every $s \in  [\ell]$, let $\nu_s:V\rightarrow \bar{L}(T)$ be the node labeling defined by $\nu_s(w_s) := \xi^k$ and $\nu_s(v) := \top$ for all $v\neq w_s$.
Consider the greatest simultaneous fixed point $\nu_s^{\mathcal{J}^\downarrow_{w_s}}$, which can be obtained by applying the Drop operators in $\mathcal{J}^\downarrow_{w_s}$ to $\nu_s$ by Proposition~\ref{prop:Tarski}.
Since $\pi(J_{w_s}) = \pi(w_s) = j$, we have $\nu_s^{\mathcal{J}^\downarrow_{w_s}}(w_s) = \nu_s(w_s) = \xi^k$ because $\xi^k$ is the smallest leaf in the subtree of $T$ rooted at $\xi^k|_j$.
This also implies that $\nu^{\mathcal{J}^\downarrow_{w_s}}_s(v)\geq \xi^k$ for all $v\in V$.
Furthermore, $\nu^{\mathcal{J}^\downarrow_{w_s}}_s$ is feasible in $G_\tau\setminus \delta^+(w_s)$, as $\nu_s$ is feasible in $G_\tau\setminus \delta^+(w_s)$ and feasibility is maintained throughout the application of Drop operators.

\begin{clm}\label{clm:out_neighbour}
For each $1<s\leq \ell$, there exists a node $u_s\in N^+_{J_{w_s}}(w_{s-1})$ such that $\nu^{\mathcal{J}^\downarrow_{w_s}}_s(u_s)|_j = \xi^k|_j$.
\end{clm}

\begin{proof}
Fix an $s$ and let $i=\underline{c}^k(w_{s-1}w_s)$.
From the definition of $\underline{c}^k$ in~\eqref{eq:cost-lower-bound}, there exists a node $u_s\in N^+_{J_{w_s}}(w_{s-1})$ such that $\lambda^k_{i,w_s}(u_s)\neq \top$.
Recall that $\lambda^k_{i,w_s}:V(J_{w_s})\rightarrow \bar{L}(T^k_{i,j})$ is the greatest simultaneous fixed point of $\mathcal{J}^\downarrow_{w_s}$ subject to $\lambda^k_{i,w_s}(w_s) = \min L(T^k_{i,j})$.
We will show that $\nu^{\mathcal{J}^\downarrow_{w_s}}_s(u_s)|_j = \xi^k|_j$.
From our construction of $\nu_s$, observe that $\nu^{\mathcal{J}^\downarrow_{w_s}}_s:V\rightarrow \bar{L}(T)$ is the greatest simultaneous fixed point of $\mathcal{J}^\downarrow_{w_s}$ subject to $\nu^{\mathcal{J}^\downarrow_{w_s}}_s(w_s) = \xi^k$.
Since $\xi^k$ is returned by \textsc{Raise}$(\mu(w),i^k,j,k)$, it is the smallest leaf in a copy of $T^k_{i',j}$ in the main tree $T$ for some $i'\geq i^k$.
As $c^k(C) = i^k$ and $w_{s-1}w_s\in E(C)$, we have $i^k\geq c^k(w_{s-1}w_s)\geq \underline{c}^k(w_{s-1}w_s) = i$.
So, $i'\geq i$ and $T^k_{i,j} \sqsubseteq T^k_{i',j}$.
By Lemma \ref{lem:finite_comparable}, $\lambda^k_{i',w_s}(u_s)\neq \top$, which implies $\nu_s^{\mathcal{J}^\downarrow_{w_s}}(u_s)|_j = \xi^k|_j$. 
\end{proof}

Now, consider the node labeling $\nu$ defined by $\nu(v) := \min_{s\in [\ell]}\nu^{\mathcal{J}^\downarrow_{w_s}}_s(v)$ for all $v\in V$.
Note that $\nu(w_s) = \xi^k$ for all $s\in [\ell]$.
It suffices to show that $\nu$ is feasible in $G_\tau$ and $\nu\geq \mu$.
This is because it would imply $\nu\geq \mu^{\mathcal{G}^\uparrow_\tau}$ by the pointwise minimality of $\mu^{\mathcal{G}^\uparrow_\tau}$.
In particular, $\xi^k = \nu(w)\geq \mu^{\mathcal{G}^\uparrow_\tau}(w)$.

\smallskip

We first prove that $\nu$ is feasible in $G_\tau$. 
Since $\nu^{\mathcal{J}^\downarrow_{w_s}}_s$ is feasible in $G_\tau\setminus \delta^+(w_s)$ for all $s \in [\ell]$, $\nu$ is feasible in $G_\tau\setminus \cup_{s=1}^\ell \delta^+(w_s)$.
This is because for each node $u\in V\setminus \{w_1,w_2,\dots,w_\ell\}$, $\nu(u) = \nu_s^{\mathcal{J}^\downarrow_{w_s}}(u)$ for some $s\in [\ell]$ and furthermore, there exists an arc $uv\in \delta^+_{G_\tau}(u)$ which is non-violated with respect to $\nu_s^{\mathcal{J}^\downarrow_{w_s}}$.
Since $\nu_s^{\mathcal{J}^\downarrow_{w_s}}(v)\geq \nu(v)$, $uv$ is also non-violated with respect to $\nu$.
It is left to show that $w_s$ has a non-violated outgoing arc in $G_\tau$ with respect to $\nu$ for all $s\in [\ell]$.
Fix $1<s\leq \ell$.
By Claim \ref{clm:out_neighbour}, there exists a node $u_s\in N^+_{J_{w_s}}(w_{s-1})$ such that $\nu^{\mathcal{J}^\downarrow_{w_s}}_s(u_s)|_j = \xi^k|_j$.
Hence, $\nu(u_s)|_j = \xi^k|_j$.
As $\nu(w_{s-1}) = \xi^k$ and $\pi(w_{s-1}) = j$, the arc $w_{s-1}u_s$ is non-violated with respect to $\nu$.

\smallskip

Next, we show that $\nu^{\mathcal{J}^\downarrow_{w_s}}_s\geq \mu$ for all $s\in [\ell]$, which would imply $\nu\geq \mu$ as desired.
We proceed by induction on $s$.
For the base case $s=\ell$, we know that $\nu^{\mathcal{J}^\downarrow_{w_\ell}}_\ell$ is the greatest simultaneous fixed point of $\mathcal{J}^\downarrow_{w_\ell}$ subject to $\nu^{\mathcal{J}^\downarrow_{w_\ell}}_\ell(w_\ell) = \xi^k$.
Observe that $\mu$ is also a simultaneous fixed point of $\mathcal{J}^\downarrow_{w_\ell}$ because there are no loose arcs in $G_\tau$ with respect to $\mu$. 
Since $\nu^{\mathcal{J}^\downarrow_{w_\ell}}_\ell(w_\ell) = \xi^k \geq \mu(w_\ell)$, where the inequality is due to $w_\ell=w$, we get $\nu^{\mathcal{J}^\downarrow_{w_\ell}}_\ell\geq \mu$.
For the inductive step, suppose that $\nu^{\mathcal{J}^\downarrow_{w_s}}_s\geq \mu$ for some $1<s\leq \ell$.
By Claim \ref{clm:out_neighbour}, there exists a node $u_s\in N^+_{J_{w_s}}(w_{s-1})$ such that $\nu^{\mathcal{J}^\downarrow_{w_s}}_s(u_s)|_j = \xi^k|_j$.
Then, $\mu(w_{s-1})\leq \nu_s^{\mathcal{J}^\downarrow_{w_s}}(w_{s-1}) = \xi^k$.
The inequality follows from the inductive hypothesis, whereas the equality is due to the tightness of $w_{s-1}u_s$ with respect to $\nu^{\mathcal{J}^\downarrow_{w_s}}_s$. 
As $\nu^{\mathcal{J}^\downarrow_{w_{s-1}}}_{s-1}$ is the greatest simultaneous fixed point of $\mathcal{J}^\downarrow_{w_{s-1}}$ subject to $\nu^{\mathcal{J}^\downarrow_{w_{s-1}}}_{s-1}(w_{s-1}) = \xi^k\geq \mu(w_{s-1})$, we obtain $\nu^{\mathcal{J}^\downarrow_{w_{s-1}}}_{s-1}\geq \mu$ because $\mu$ is also a simultaneous fixed point of $\mathcal{J}^\downarrow_{w_{s-1}}$. 

\bigskip

Lastly, we prove the upper bound, i.e., $\xi^k\leq \widehat{\mu}(w)$ for some $0\leq k < |\mathcal{C}_j|$.
We may assume that $\widehat{\mu}(w)\neq \top$.
By Lemma \ref{lem:threshold}, $\widehat{\mu}(w)$ is returned by {\sc Raise}$(\mu(w),\alpha^k(w),j,k)$ for some $0\leq k< |\mathcal{C}_j|$. 
Let $H$ be a cycle in $G_\tau$ such that $w\in \Pi(H)$ and $\alpha^k(H) = \alpha^k(w)$.
Let $\Pi(H) = \set{w_1,w_2,\dots,w_r}$ and decompose $H$ into arc-disjoint paths $P_1, P_2, \dots, P_r$ such that $P_s$ is a $w_{s-1}$-$w_s$ path for all $s\in [r]$, where $w_0 := w_r$.
Since each $P_s$ lies in the subgraph $J_{w_s}$, we have $w_{s-1}w_s\in E(D)$ for all $s\in [r]$, and their union induces a cycle $H'$ containing $w$ in~$D$. 
Then,
\[i^k \leq c^k(H') = \max_{s\in [r]}c^k(w_{s-1}w_s) \leq \max_{s\in[r]}\overline{c}^k(w_{s-1}w_s) \leq \max_{s\in [r]}\alpha^k(P_s) = \alpha^k(H) = \alpha^k(w).\]
The third inequality follows from the definition of $\overline{c}^k$ in ~\eqref{eq:cost-upper-bound}, while the second equality is due to Lemma~\ref{lem:bit_req_cycle}.
It follows that $\xi^k\leq \widehat{\mu}(w)$ because \textsc{Raise} is monotone with respect to its second argument.
\end{proof}

\subsection{The Label-Correcting Algorithm}
The overall algorithm for computing $\mu^{\mathcal{G}^\uparrow_\tau}$ is given in Algorithm~\ref{algo:Cramer}.
The main idea is to initialize the labels on base nodes via the recipe given in Theorem~\ref{thm:bottleneck_cycle}, before running Algorithm~\ref{algo:Bellman+Ford}.
The labels on $V\setminus B(G_\tau)$ are initialized to~$\top$.
The auxiliary graph $D$ serves as a condensed representation of the `best' paths between base nodes.
The arc costs are chosen such that minimum bottleneck cycles in $D$ give a good estimate on the width of base nodes.

\begin{algorithm}[htbp]
  \caption{Label-Correcting: $(G_\tau,\pi)$ 1-player game for Even, $\mathcal{C}$ cover of $\mathcal{T}$ for some universal tree $T$, $\mu:V\rightarrow \bar{L}(T)$ node labeling such that $G_\tau$ does not contain loose arcs.}
  \label{algo:Cramer}
  \begin{algorithmic}[1]
    \Procedure{LabelCorrecting}{($G_\tau,\pi),\mathcal{C},\mu$}
    \State $\nu(v) \gets \top$ for all $v\in V$
    \State Construct auxiliary digraph $D$
    \ForAll{components $H$ in $D$}
    \For{$k=0$ \textbf{to} $|\mathcal{C}_{\pi(H)/2}|-1$}
      \State Assign arc costs $c^k$ to $H$ where $\underline{c}^k\leq c^k\leq \overline{c}^k$
      \ForAll{$w\in V(H)$} 
      \State $i^k \gets $ minimum $c^k$-cost of a cycle containing $w$ in $H$
      \If{$i^k < \infty$}
        \State $\nu(w) \gets \min(\nu(w), \text{\sc Raise}({\mu(w),i^k,\frac{\pi(H)}{2},k}))$
      \EndIf
    \EndFor
    \EndFor   
    \EndFor 
    \State $\nu \gets $ \Call{BellmanFord}{$(G_\tau,\pi),\nu$}
    \State \Return $\nu$
    \EndProcedure
  \end{algorithmic}
\end{algorithm}

In the next paragraph, we elaborate on how the arc costs $c^k$ are computed.

\paragraph{Computing arc costs}

Fix a base node $w\in B(G_\tau)$ and let $j=\pi(w)/2$.
We show how to compute $c^k(vw)$ for all incoming arcs $vw\in \delta^-_{D}(w)$ by running {\sc BellmanFord} on $J_w$.
For every $0\leq i < |\mathcal{C}^k_j|$, let $\nu_i:V(J_w)\rightarrow \bar{L}(T^k_{i,j})$ be the node labeling given by $\nu_i(w) := \min L(T^k_{i,j})$ and $\nu_i(u) := \top$ for all $u\neq w$.
Let $\nu'_i$ be the node labeling obtained by running {\sc BellmanFord} on $J_w$ with $\nu_i$.
Then, the cost of $vw\in \delta^-_D(w)$ is set as
\[c^k(vw) := \min\set{i:\nu'_i(u)\neq \top \text{ for some }u\in N^+_{J_w}(v)}.\]

\begin{prop}
For every arc $vw\in \delta^-_D(w)$, the computed arc cost satifies
\[\underline{c}^k(vw)\leq c^k(vw)\leq \overline{c}^k(vw).\]
\end{prop}

\begin{proof}
For every $0\leq i < |\mathcal{C}^k_j|$, we have $\lambda^k_{i,w}\leq \nu'_i$ because $\lambda^k_{i,w} = \nu_i^{\mathcal{J}^\downarrow_w}$.
Therefore, $\underline{c}^k(vw)\leq c^k(vw)$.
Next, let $P$ be a $u$-$w$ path in $J_w$ for some $u\in N^+_{J_w}(v)$.
For every $i\geq \alpha^k(P)$, there exists a finite node labeling $\bar\nu_i:V(P)\to L(T^k_{i,j})$ which is feasible on $P$.
Without loss of generality, we may assume that $\bar\nu_i(w) = \min L(T^k_{i,j})$.
Since {\sc BellmanFord} ran for $|V(J_w)|-1$ iterations, an inductive argument from $w$ along $P$ shows that $\nu'_i\leq \bar \nu_i$.
In particular, $\nu'_i(u)\neq \top$ for all $i\geq \alpha^k(P)$.
It follows that $c^k(vw) \leq \overline{c}^k(vw)$.
\end{proof}

Since the procedure above involves running {\sc BellmanFord} on $J_w$ $|\mathcal{C}^k_j|$ times, its running time is $O(mn\gamma(T)|\mathcal{C}^k_j|)$.
If $|\mathcal{C}^k_j|$ is too big, then the procedure can be combined with binary search.
For every node $u\in V(J_w)$, we want to determine $\min\{i:\nu'_i(u)\neq \top\}$.
Fix a node $u\in V(J_w)$.
We first call {\sc BellmanFord} on $J_w$ with $i=\lfloor |\mathcal{C}^k_j|/2 \rfloor$, i.e., the middle tree in the chain $\mathcal{C}^k_j$.
By Lemma~\ref{lem:finite_comparable}, if $\nu'_i(u) = \top$, then $\nu'_{i'}(u) = \top$ for all $i'\leq i$.
Otherwise, $\nu'_{i'}(u) \neq \top$ for all $i'\geq i$ .
In both cases, we can discard half of the trees in $\mathcal{C}^k_j$.
Then, the process is repeated until the correct tree is found. 
By applying this to every node $u\in V(J_w)$, we obtain the following running time.

\begin{prop}\label{prop:computing-arc-costs}
For every base node $w\in B(G_\tau)$ and $0\leq k < |\mathcal{C}_j|$ where $j = \pi(w)/2$, the arc costs $\{c^k(vw):vw\in \delta^-_D(w)\}$ can be computed in $O(mn\gamma(T)\cdot\min\{|\mathcal{C}^k_j|,n\log|\mathcal{C}^k_j|\})$ time.
\end{prop}

\medskip
We are ready to prove a generic bound on the running time of Algorithm \ref{algo:Cramer} for an arbitrary universal tree $T$ with an arbitrary cover $\mathcal{C}$. 

\begin{thm}\label{thm:cramer}
  In $O(mn^2\gamma(T) \cdot \max_{j,k}|\mathcal{C}_j|\min\{|\mathcal{C}^k_j|,n\log|\mathcal{C}^k_j|\} + n\rho(T,\mathcal{C})\cdot \max_j|\mathcal{C}_j|)$ time,  Algorithm~\ref{algo:Cramer} returns $\mu^{\mathcal{G}^\uparrow_\tau}$.
\end{thm}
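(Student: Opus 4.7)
The plan is to establish correctness and the running-time bound separately, then combine them.

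For correctness, I will verify that the input $\nu$ fed into the final call of \textsc{BellmanFord} meets the preconditions of Theorem~\ref{thm:bellmanFord}, namely $\mu^{\mathcal{G}^\uparrow_\tau} \leq \nu$ pointwise and $\nu(w) \leq \widehat{\mu}(w)$ for every base node $w \in B(G_\tau)$. On non-base nodes we leave $\nu(v) = \top$, so the first condition is immediate there and the second is vacuous. For each base node $w$, the algorithm sets $\nu(w) = \min_k \xi^k$, where $\xi^k = \textsc{Raise}(\mu(w), i^k, \pi(w)/2, k)$ and $i^k$ is a minimum $c^k$-cost of a cycle through $w$ in $D_w$. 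Since the arc costs constructed in the inner loop satisfy $\underline{c}^k \leq c^k \leq \overline{c}^k$ (which I verify as part of the arc-cost computation, using Lemma~\ref{lem:finite_comparable} to justify the lower bound and the fact that Algorithm~\ref{algo:Bellman+Ford} certifies the upper bound), Theorem~\ref{thm:bottleneck_cycle} applies, yielding $\mu^{\mathcal{G}^\uparrow_\tau}(w) \leq \min_k \xi^k \leq \widehat{\mu}(w)$. So Theorem~\ref{thm:bellmanFord} guarantees that the returned labeling equals $\mu^{\mathcal{G}^\uparrow_\tau}$.

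For the running time, I will account for the four cost centers. First, identifying base nodes by recursive SCC decomposition takes $O(dm)$, and constructing $D$ can be absorbed into a single traversal of $G_\tau$ at cost $O(nm)$, both dominated by later terms. Second, for each component $H$ of $D$ and each chain index $0 \leq k < |\mathcal{C}_{\pi(H)/2}|$, and each $w \in V(H)$, computing the costs of the incoming arcs $\delta^-_D(w)$ requires either $|\mathcal{C}^k_j|$ invocations of Algorithm~\ref{algo:Bellman+Ford} on $J_w$ (linear scan) or $O(n \log |\mathcal{C}^k_j|)$ invocations (binary search, where the factor $n$ reflects that different out-neighbors may require settling at different positions of the chain). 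Each invocation costs $O(mn\gamma(T))$. Summing $|V(H)|$ over components gives at most $n$, so the total arc-cost cost is $O(mn^2\gamma(T)\cdot \max_{j,k} |\mathcal{C}_j|\min\{|\mathcal{C}^k_j|, n\log|\mathcal{C}^k_j|\})$.

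Third, the minimum bottleneck cycle subroutine on each component takes $O(|V(H)|^2 \log |V(H)|)$, summing to $O(n^2 \log n \cdot \max_j |\mathcal{C}_j|)$, which is absorbed in the arc-cost term. The \textsc{Raise} calls occur once per base node per chain index, totaling $O(n \rho(T,\mathcal{C}) \cdot \max_j |\mathcal{C}_j|)$. Finally, the concluding \textsc{BellmanFord} costs $O(mn\gamma(T))$, absorbed as well. Adding these yields the claimed bound. The main obstacle I anticipate is nailing down the arc-cost computation carefully, specifically justifying that the binary-search variant costs $O(n\log |\mathcal{C}^k_j|)$ runs of Algorithm~\ref{algo:Bellman+Ford} per node $w$ rather than the naive $|V(J_w)|\log|\mathcal{C}^k_j|$, and verifying that the costs produced indeed lie in the interval $[\underline{c}^k, \overline{c}^k]$ so that Theorem~\ref{thm:bottleneck_cycle} is applicable; the monotonicity supplied by Lemma~\ref{lem:finite_comparable} is the crucial tool here.
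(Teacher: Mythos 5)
Your proposal is correct and follows essentially the same route as the paper: correctness is obtained by combining Theorem~\ref{thm:bottleneck_cycle} with Theorem~\ref{thm:bellmanFord}, and the running time is the same term-by-term accounting (base nodes and $D$ in $O(dm)$ and $O(mn)$, arc costs via linear scan or per-node binary search over the chain at $O(|V(H)|mn\gamma(T)\min\{|\mathcal{C}^k_j|,n\log|\mathcal{C}^k_j|\})$, bottleneck cycles, \textsc{Raise} calls, and a final \textsc{BellmanFord}). The details you flag as needing care — that the computed costs lie in $[\underline{c}^k,\overline{c}^k]$ and the cost of the binary-search variant — are exactly the points the paper handles in the paragraphs preceding the theorem, so nothing is missing.
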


\begin{proof}
Correctness follows immediately from Theorem~\ref{thm:bottleneck_cycle} and Theorem~\ref{thm:bellmanFord}.
It is left to prove the running time. 
Identifying the base nodes takes $O(m\log d)$ time (Section \ref{sec:prelude}), while constructing the auxiliary digraph $D$ takes $O(mn)$ time.
Next, for every component $H$ in $D$ and $0\leq k < |\mathcal{C}_{\pi(H)/2}|$, computing the arc costs $c^k$ takes $O(|V(H)|mn\gamma(T)\cdot\min\{|\mathcal{C}^k_{\pi(H)/2}|,n\log|\mathcal{C}^k_{\pi(H)/2}|\})$ by Proposition~\ref{prop:computing-arc-costs}.
Then, computing the minimum $c^k$-cost cycles in $H$ takes $O(|E(H)|\log|V(H)|)$ time by Theorem~\ref{thm:min-bottleneck-cycles}, and applying {\sc Raise} takes $O(|V(H)|\rho(T,\mathcal{C}))$ time.
Finally, {\sc Bellman--Ford} runs in $O(mn\gamma(T))$ time.
\end{proof}

In the next two subsections, we apply Algorithm \ref{algo:Cramer} to the quasi-polynomial universal trees constructed in the literature \cite{conf/lics/JurdzinskiL17,conf/icalp/DaviaudJT20}.
It runs in time $O(mn^2\log n \log d)$ time for a succinct $(n,d/2)$-universal tree, and $O(mn^2\log^3n\log d)$ for a succinct Strahler $(n,d/2)$-universal tree.
The vertices in these trees are encoded using tuples of binary strings. 
For working with these tuples, we introduce the following notation.

\begin{defi}
Given a tuple $\xi = (\xi_{2h-1},\xi_{2h-3},\dots,\xi_1)$ of binary strings, denote $\zeta(\xi)$ as the number of leading zeroes in $\xi_{2h-1}$.
We also define $\zeta(\top) := -1$.
\end{defi}

For a pair of tuples $\xi,\xi'$ of binary strings, note that if $\xi\geq \xi'$, then $\zeta(\xi)\leq \zeta(\xi')$  by the lexicographic order on tuples.

\begin{defi}
Given a tuple $\xi = (\xi_{2h-1},\xi_{2h-3},\dots,\xi_1)$ of binary strings and an integer $\kappa\geq 0$, let $\xi^\kappa$ be the tuple obtained by deleting $\kappa$ leading zeroes from $\xi_{2h-1}$.
If $\kappa>\zeta(\xi)$, then $\xi^\kappa := \top$.
We also define $\top^\kappa := \top$.
\end{defi}

\begin{exa}
If $\xi = (001,\varepsilon,1)$, then $\zeta(\xi) = 2$, $\xi^1 = (01,\varepsilon,1)$, $\xi^2 = (1,\varepsilon,1)$ and $\xi^3 = \top$.
\end{exa}

\subsection{Application to Succinct Universal Trees}
\label{sec:succinct}

Let $T$ be a succinct $(n,h)$-universal tree.
Recall that every leaf $\xi\in L(T)$ corresponds to an $h$-tuple of binary strings where the number of bits fulfills $\size{\xi}\leq \floor{\log n}$.
First, we show that each $\mathcal{T}_j$ has a cover of size 1.
Equivalently, each $(\mathcal{T}_j,\sqsubseteq)$ is a chain.

\begin{lem}\label{lem:cover}
There exists a cover $\mathcal{C}$ of $\mathcal{T}$ such that $|\mathcal{C}_j| = 1$ for all $0\leq j\leq h$.
\end{lem}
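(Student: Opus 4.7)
The plan is to prove the stronger statement that for every $j$, the poset $(\mathcal{T}_j, \sqsubseteq)$ is a chain; the cover $\mathcal{C}_j$ consisting of the single chain $\mathcal{C}_j^0 := \mathcal{T}_j$ then witnesses the lemma. To establish the chain property, I will parametrize the distinct subtrees of height $j$ by a residual bit-budget and then show that a larger budget entails embeddability.

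First I would exploit the recursive description of the succinct universal tree. Write $b := \lfloor \log n \rfloor$. For a vertex $v$ at depth $h - j$ in $T$, let $\beta(v)$ denote the total number of bits used by the binary-string labels along the root-to-$v$ path. From the recursive definition of the succinct universal tree, the subtree of $T$ rooted at $v$ depends, up to the isomorphism $\equiv$, only on $j$ and on the residual budget $c := b - \beta(v)$. Call this canonical tree $S_j(c)$. Explicitly, $S_0(c)$ is a single vertex, while for $j \geq 1$ the root of $S_j(c)$ has one child labeled $\varepsilon$ hanging $S_{j-1}(c)$, together with, for each $k \in \{1, \ldots, c\}$ and each length-$(k{-}1)$ binary string $s$, children labeled $0s$ and $1s$ each hanging a copy of $S_{j-1}(c - k)$.

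Next I would prove by induction on $j$ that $S_j(c_1) \sqsubseteq S_j(c_2)$ whenever $0 \leq c_1 \leq c_2$. The base $j = 0$ is trivial. For $j \geq 1$, the set of root-child labels of $S_j(c_1)$ is contained in that of $S_j(c_2)$, so mapping each root-child of $S_j(c_1)$ to the identically-labeled root-child of $S_j(c_2)$ is injective and order-preserving (the order on child labels depends only on the labels themselves, not on $c$). Beneath a child labeled by a length-$k$ string, the subtrees are $S_{j-1}(c_1 - k)$ and $S_{j-1}(c_2 - k)$ respectively; since $c_1 - k \leq c_2 - k$, the inductive hypothesis supplies an embedding of the former into the latter sending leaves to leaves. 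Gluing the root assignment with these inductive embeddings yields the desired embedding $S_j(c_1) \sqsubseteq S_j(c_2)$.

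With the claim established, any two trees in $\mathcal{T}_j$ are equivalent to $S_j(c_1)$ and $S_j(c_2)$ for some $c_1, c_2 \in \{0, 1, \ldots, b\}$, and one embeds into the other according to whether $c_1 \leq c_2$ or $c_2 \leq c_1$. Hence $(\mathcal{T}_j, \sqsubseteq)$ is a chain and the single-chain tuple $\mathcal{C}_j := (\mathcal{T}_j)$ is a valid cover. The main obstacle is pinning down the canonical form $S_j(c)$ cleanly from the recursive construction of Jurdzi\'nski--Lazi\'c; once this is done, the embedding proof is a routine double induction.
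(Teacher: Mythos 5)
Your proof is correct and takes essentially the same route as the paper: both arguments observe that a subtree rooted at depth $h-j$ is determined by its residual bit budget $c=\lfloor\log n\rfloor-|r|$, and that a smaller budget embeds into a larger one via the identity map on tuples. Your level-by-level induction merely unfolds the same identity embedding that the paper asserts directly, so the two proofs coincide in substance.
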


\begin{proof}
Fix $0\leq j \leq h$ and pick two vertices $r_1,r_2\in V(T)$ at depth $h-j$. 
Let $T_1$ and $T_2$ be the subtrees of $T$ rooted at $r_1$ and $r_2$ respectively.
Every leaf $\xi_1\in L(T_1)$ and $\xi_2\in L(T_2)$ corresponds to a $j$-tuple of binary strings where $|\xi_1|\leq \floor{\log n} - \size{r_1}$ and $\size{\xi_2}\leq \floor{\log n} - \size{r_2}$.
Without loss of generality, assume that $\size{r_1}\geq\size{r_2}$.
Then, the identity map from $V(T_1)$ to $V(T_2)$ is an order-preserving and injective homomorphism.
Hence, $T_1\sqsubseteq T_2$.
\end{proof}

Since each subcover $\mathcal{C}_j$ of $\mathcal{C}$ consists of a single chain, we write $\mathcal{C}_j = \mathcal{T}_j$ and omit the superscript $k$.
The subtrees in $\mathcal{T}_j$ are $T_{0,j} \sqsubset T_{1,j} \sqsubset \cdots \sqsubset T_{\floor{\log n},j}$.
Observe that every leaf $\xi\in L(T_{i,j})$ corresponds to a $j$-tuple of binary strings where $|\xi|\leq i$.

Next, we give an efficient implementation of the {\sc Raise} subroutine.

\begin{lem}\label{lem:raise}
For a succinct $(n,d/2)$-universal tree $T$ with cover $\mathcal{C}=(\mathcal{T}_0,\mathcal{T}_1,\dots,\mathcal{T}_{d/2})$, the {\sc Raise}$(\xi,i,j,k)$ subroutine runs in $O(\log n \log d)$ time.
\end{lem}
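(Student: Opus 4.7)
The plan is: by Lemma~\ref{lem:cover}, $\mathcal{C}_j$ is a single chain, so $k$ is vacuous and $T_{i',j}$ is the subtree of $T$ rooted at any vertex $v$ of depth $h-j$ with budget $b_v=\lfloor\log n\rfloor-|v|=i'$. Its leftmost leaf is $v\cdot(0^{b_v},\varepsilon,\ldots,\varepsilon)$, so the output $\xi'$ of {\sc Raise} (if not $\top$) must have this form for some depth-$(h-j)$ vertex $v$ with $|v|\leq\lfloor\log n\rfloor-i$ and $\xi'\geq\xi$. Minimising $\xi'$ in lex order is equivalent to minimising $v$ under these two constraints.

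First, I would try $v=\xi|_{2j}$: this succeeds iff $b_v\geq i$ and the leftmost leaf at $v$ is at least $\xi$, which amounts to a pair of bit-level tests on $\xi_{2j-1}$ and the lower components and takes $O(\log n)$ time on the succinct encoding. Otherwise $v$ must be strictly larger at depth $h-j$, so I select the smallest level $s^*\in\{j+1,\ldots,h\}$ at which there exists a binary string $y>\xi_{2s^*-1}$ with $|y|\leq b_{\text{par}}(s^*)-i$, where $b_{\text{par}}(s)=\lfloor\log n\rfloor-\sum_{t>s}|\xi_{2t-1}|$. The smallest feasible level yields the lex-smallest $v$ because the higher components of $\xi$ are preserved, and filling the intermediate levels minimally forces $0^{B}$ at level $s^*-1$ (with $B=b_{\text{par}}(s^*)-|y_{\min}|-i$) and $\varepsilon$ thereafter, giving $\xi'=(\xi_{2h-1},\ldots,\xi_{2s^*+1},y_{\min},0^{B},\varepsilon,\ldots,\varepsilon,0^{i},\varepsilon,\ldots,\varepsilon)$.

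The running-time bound follows from the succinct encoding of~\cite{conf/lics/JurdzinskiL17} used by Tighten: a leaf fits in $O(\log n)$ bits with indexable component separators, and the primitives for suffix navigation, bit-length summation, and locating the lex-successor of a given binary string subject to a length bound each run in $O(\log n\log d)$. The main obstacle is that feasibility at level $s$ is not monotone in $s$: increasing $s$ enlarges $b_{\text{par}}(s)-i$ but swaps in a potentially maximal $\xi_{2s-1}$. I handle this by the observation that $\sum_s|\xi_{2s-1}|\leq\lfloor\log n\rfloor$, so at most $\log n$ components of $\xi$ are non-empty, and within any block of consecutive $\varepsilon$-components the parent budget $b_{\text{par}}$ is constant, making feasibility there equivalent to $b_{\text{par}}\geq i+1$. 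Scanning the non-empty components bottom-up, locating the first feasible block or position, and finally assembling $\xi'$ therefore all fit within the claimed $O(\log n\log d)$ time.
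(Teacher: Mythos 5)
Your proposal is correct and follows essentially the same route as the paper's proof: you locate the lowest level $s^*>j$ at which the component of $\xi$ can be increased while reserving $i$ bits, replace it by the length-bounded lexicographic successor, spend the spare budget as a single block of zeroes immediately below, and reserve exactly $i$ zeroes at index $2j-1$ with $\varepsilon$'s elsewhere, which is precisely the paper's case analysis (your explicit $O(\log n)$-item search for $s^*$ over nonempty components and $\varepsilon$-blocks is a detail the paper leaves implicit). The only slip is the degenerate case $s^*=j+1$, where your displayed tuple must merge the two zero blocks into one component $0^{B+i}$ at index $2j-1$, as in the paper's case $p=2j+2$.
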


\begin{proof}
We may assume that $\xi$ is the smallest leaf in the subtree rooted at $\xi|_{2j}$.
Otherwise, we can set it to the smallest leaf of the next subtree rooted at that depth using {\sc Tighten} (we return $\top$ if this next subtree does not exist).
Recall that the {\sc Tighten} subroutine for $T$ also runs in $O(\log n\log d)$ time.
It follows that $\xi_{2j-1}\in \set{0\cdots0,\varepsilon}$ and $\xi_q = \varepsilon$ for all odd $q<2j-1$.
If $\size{\xi_{2j-1}}\geq i$, then we simply return~$\xi$.
Otherwise, let $p\in [d]$ be the smallest even integer such that $\xi|_p$ has a child bigger than $\xi|_{p-1}$ with at most $\floor{\log n}-i$ bits.
If $p$ does not exist, then we return $\top$.
Otherwise, $p>2j$. 
Let $r = \floor{\log n} - i - \size{\xi|_{p-1}}$.
There are two cases. 

\smallskip
\emph{Case 1: $r > 0$.} Return
\[(\xi_{d-1},\dots,\xi_{p-1}1\underbrace{0\cdots0}_{r-1},\varepsilon,\dots,\varepsilon,\underbrace{0\cdots0}_{i},\varepsilon,\dots,\varepsilon)\]
where the string of $i$ zeroes is at index $2j-1$.

\smallskip
\emph{Case 2: $r\leq 0$.}
Denote $\xi_{p-1} = b_1b_2\cdots b_\ell$ where $b_q\in \set{0,1}$ for all $q\in [\ell]$.
Note that $\ell\geq 1$ by our choice of $p$.
Furthermore, there exists a largest $t\in [\ell]$ such that $b_t = 0$ and $r' := r+\ell-(t-1)\geq 0$. 
Then,
\begin{itemize}
  \item if $p = 2j + 2$, return
    \[
    (\xi_{d-1},\dots,\xi_{p+1},b_1\cdots b_{t-1},\underbrace{0\cdots0}_{i+r'},\varepsilon,\dots,\varepsilon)\,;
    \]
  \item if $p > 2j + 2$, return
  \[(\xi_{d-1},\dots,\xi_{p+1},b_1\cdots b_{t-1},\underbrace{0\cdots0}_{r'},\varepsilon,\dots,\varepsilon,\underbrace{0\cdots0}_{i},\varepsilon,\dots,\varepsilon)\]
  where the string of $i$ zeroes is at index $2j-1$. \qedhere
\end{itemize}
\end{proof}

Due to the self-similar structure of succinct universal trees, the running time of Algorithm \ref{algo:Cramer} given in Theorem \ref{thm:cramer} can be improved.
The following lemma yields a faster method for computing arc costs for the auxiliary digraph $D_\tau$.
The key observation is that for any pair of trees in a chain $\mathcal{T}_j$, the smaller tree can be obtained from the larger tree by deleting vertices in decreasing lexicographic order.
For example, a succinct (3,2)-universal tree can be obtained from a succinct (7,2)-universal tree by deleting vertices whose first component does not contain a leading zero (compare Figures \ref{fig:succinct_tree} and~\ref{fig:pruning}).  

\begin{figure}[ht]
\def\x{0.8}
\def\y{2}
\centering
\begin{tikzpicture}
\centering
\begin{scope}[every node/.style={circle,draw,,inner sep=2pt}]
    \node (p) at (0,0) {};
    \node (p1) at (-8*\x,-1*\y) {};  
    \node (p2) at (-6*\x,-1*\y) {};  
    \node (p3) at (-4*\x,-1*\y) {};   
    \node (p11) at (-8*\x,-2*\y) {};
    \node (p21) at (-7*\x,-2*\y) {};
    \node (p22) at (-6*\x,-2*\y) {};
    \node (p23) at (-5*\x,-2*\y) {};
    \node (p31) at (-4*\x,-2*\y) {};   
    \node (p4) at (0,-1*\y) {};
    \node (p5) at (4*\x,-1*\y) {}; 
    \node (p6) at (6*\x,-1*\y) {}; 
    \node (p7) at (8*\x,-1*\y) {}; 
\end{scope}
\begin{scope}[every node/.style={circle,draw, inner sep=2pt}] 
    \node (p41) at (0-3*\x,-2*\y) {};   
    \node (p42) at (0-2*\x,-2*\y) {};
    \node (p43) at (0-1*\x,-2*\y) {};
    \node (p44) at (0,-2*\y) {};
    \node (p45) at (0+1*\x,-2*\y) {};   
    \node (p46) at (0+2*\x,-2*\y) {};
    \node (p47) at (0+3*\x,-2*\y) {};
    \node (p51) at (4*\x,-2*\y) {};
    \node (p61) at (5*\x,-2*\y) {};
    \node (p62) at (6*\x,-2*\y) {};
    \node (p63) at (7*\x,-2*\y) {};
    \node (p71) at (8*\x,-2*\y) {};
\end{scope}

\begin{scope}[every node/.style={fill=white,circle,font=\footnotesize,inner sep=2pt},
              every edge/.style={draw}]
    \path [-] (p) edge node {00} (p1);
    \path [-] (p) edge node {0} (p2);
    \path [-] (p) edge node {01} (p3);
    \path [-] (p) edge node {$\varepsilon$} (p4);
    \path [-] (p) edge node {10} (p5);
    \path [-] (p) edge node {1} (p6);
    \path [-] (p) edge node {11} (p7);
    \path [-] (p1) edge node {$\varepsilon$} (p11);
    \path [-] (p2) edge node {0} (p21);
    \path [-] (p2) edge node {$\varepsilon$} (p22);
    \path [-] (p2) edge node {1} (p23);
    \path [-] (p3) edge node {$\varepsilon$} (p31);
    \path [-] (p4) edge node {00} (p41);
    \path [-] (p4) edge node {0} (p42);
    \path [-] (p4) edge node {01} (p43);
    \path [-] (p4) edge node {$\varepsilon$} (p44);
    \path [-] (p4) edge node {10} (p45);
    \path [-] (p4) edge node {1} (p46);
    \path [-] (p4) edge node {11} (p47);
    \path [-] (p5) edge node {$\varepsilon$} (p51);
    \path [-] (p6) edge node {0} (p61);
    \path [-] (p6) edge node {$\varepsilon$} (p62);
    \path [-] (p6) edge node {1} (p63);
    \path [-] (p7) edge node {$\varepsilon$} (p71);
\end{scope}
\end{tikzpicture}
\caption{The succinct $(7,2)$-universal tree}
\label{fig:pruning}
\end{figure}
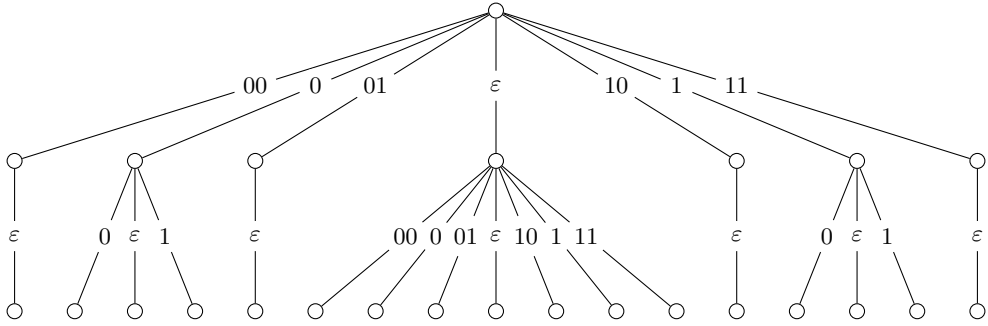

\begin{lem}\label{lem:left_embed}
Given integers $0\leq i_1\leq i_2$ and $j\geq 0$, let $\nu_1:V\rightarrow \bar{L}(T_{i_1,j})$ and $\nu_2:V\rightarrow \bar{L}(T_{i_2,j})$ be node labelings such that $\nu_1(u) = \nu_2(u)^{i_2-i_1}$ for all $u\in V$.
For any arc $vw\in E_\tau$ where $\pi(v)<2j$, we have $\drop(\nu_1,vw) = \drop(\nu_2,vw)^{i_2-i_1}$.
\end{lem}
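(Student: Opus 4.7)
The plan is to use the closed-form expression $\drop(\nu,vw)=\min\{\nu(v),\textsc{Tighten}(\nu,vw)\}$, which follows from a short case distinction on whether $\nu(v)$ is tight, loose, or violated: writing $\xi^*=\textsc{Tighten}(\nu,vw)$ for the smallest element of $\bar L(T)$ satisfying the feasibility condition at $vw$, every $\xi\le\nu(v)$ is not loose precisely when $\xi\le\xi^*$, so the largest such $\xi$ is $\min(\nu(v),\xi^*)$. Since the superscript operation $\cdot^\kappa$ with $\kappa:=i_2-i_1$ is monotone on a totally ordered set (hence commutes with $\min$), the lemma reduces to proving $\textsc{Tighten}(\nu_1,vw)=\textsc{Tighten}(\nu_2,vw)^\kappa$.

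Before attacking that identity, I would record three structural facts about $\cdot^\kappa$, all rooted in the order $0s<\varepsilon<1s'$, $bs<bs'\iff s<s'$ on the string alphabet $M$: (i) $\cdot^\kappa$ is weakly monotone on $\bar L(T_{i_2,j})$, since stripping a common leading zero is order-preserving on $M$ and $\zeta$ is order-reversing on first components; (ii) for $p<2j$ the truncation $|_p$ preserves the first component, so $\cdot^\kappa$ commutes with $|_p$; and (iii) restricted to $\{\xi\in L(T_{i_2,j}):\zeta(\xi)\ge\kappa\}$, the map $\cdot^\kappa$ is an order-isomorphism onto $L(T_{i_1,j})$, with inverse $\phi$ given by prepending $\kappa$ zeros to the first component.

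Writing $\xi^*_i:=\textsc{Tighten}(\nu_i,vw)$, I would then prove $\xi^*_1=(\xi^*_2)^\kappa$ by two inequalities. For $\xi^*_1\le(\xi^*_2)^\kappa$, apply $\cdot^\kappa$ to the feasibility relation for $\xi^*_2$; by (i) and (ii), together with $\nu_1(w)=\nu_2(w)^\kappa$, the element $(\xi^*_2)^\kappa$ satisfies the $\nu_1$-feasibility condition at $vw$, so minimality of $\xi^*_1$ gives the inequality. For the reverse, $\phi(\xi^*_1)$ satisfies the $\nu_2$-feasibility condition (using monotonicity of $\phi$ and that $\phi$ inverts $\cdot^\kappa$ on its image), hence $\xi^*_2\le\phi(\xi^*_1)$; applying $\cdot^\kappa$ and using $\cdot^\kappa\circ\phi=\mathrm{id}$ yields $(\xi^*_2)^\kappa\le\xi^*_1$.

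The subtle point is the odd-priority case, where the feasibility condition is strict and $\cdot^\kappa$ is only weakly monotone, since it can collapse two distinct elements both to $\top$. Here the escape clause ``$\xi=\nu(w)=\top$'' is essential: if $\cdot^\kappa$ destroys a strict inequality $\xi^*_2|_p>\nu_2(w)|_p$ by sending both sides to $\top$, then $\zeta(\xi^*_2)<\kappa$ and $\zeta(\nu_2(w))<\kappa$, forcing $(\xi^*_2)^\kappa=\top$ and $\nu_1(w)=\top$ simultaneously, so the exception fires. The only other potentially risky configuration, $\zeta(\xi^*_2)\ge\kappa$ but $\zeta(\nu_2(w))<\kappa$, is incompatible with $\xi^*_2|_p>\nu_2(w)|_p$, since more leading zeros strictly decrease the first component.
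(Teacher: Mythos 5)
Your proposal is correct, but it is organized differently from the paper's proof. The paper argues by a direct case analysis on the status of the arc $vw$ with respect to $\nu_1$: if $vw$ is violated, it shows the violation transfers to $\nu_2$ (using $\zeta(\nu_2(v))\geq i_2-i_1$ and, when $\nu_1(w)=\top$, that the extra leading zeros force $\nu_2(v)|_{2j-1}<\nu_2(w)|_{2j-1}$), so both drops leave the labels unchanged; if $vw$ is not violated, it splits on whether $\nu_1(w)=\top$, settles the all-$\top$ branch by a count of leading zeros, and leaves the main finite branch as ``easy to verify''. You instead factor $\drop(\nu,vw)=\min\{\nu(v),\textsc{Tighten}(\nu,vw)\}$ (valid, since the non-violated labels form an up-set whose minimum is the tight value) and reduce the lemma to the identity $\textsc{Tighten}(\nu_1,vw)=\textsc{Tighten}(\nu_2,vw)^{\kappa}$, proved by two inequalities via the order-embedding $\phi$ (prepending $\kappa$ zeros) of $L(T_{i_1,j})$ onto a down-set of $L(T_{i_2,j})$, which commutes with $p$-truncation precisely because $\pi(v)<2j$ keeps the first component. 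This buys a cleaner separation of concerns: the combinatorial core is stated once and reused in both directions, and it makes explicit exactly the step the paper waves through, including the odd-priority subtlety where weak monotonicity of $\cdot^{\kappa}$ could destroy strictness; your treatment of the two risky configurations (both sides collapsing to $\top$, which triggers the exception clause, versus $\zeta(\xi^*_2)\geq\kappa>\zeta(\nu_2(w))$, which contradicts the strict inequality) is precisely what is needed there. The only corner you do not spell out is the reverse direction when $\nu_1(w)=\top$, where $\phi(\nu_1(w))$ need not equal $\nu_2(w)$; but then $\xi^*_1=\top$ and the inequality is trivial, so this is a presentational detail rather than a gap. The paper's route is shorter and avoids introducing $\phi$; yours is more structural and generalizes to any chain of trees whose smaller member embeds as an initial segment of leaves compatibly with the relevant truncations, which is exactly the self-similarity of succinct universal trees that the algorithmic speed-up exploits.
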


\begin{proof}
Let $\xi_1 = \drop(\nu_1,vw)$ and $\xi_2 = \drop(\nu_2,vw)$.
First, assume that $vw$ is violated with respect to $\nu_1$.
Then, $\nu_1(v)\neq \top$, which implies that $\nu_2(v)\neq \top$.
In particular, $\zeta(\nu_2(v))\geq i_2 - i_1$.
We claim that $vw$ is also violated with respect to $\nu_2$.
This is clear if $\nu_1(w)\neq \top$.
If $\nu_1(w) = \top$, then $\zeta(\nu_2(w))<i_2-i_1$. 
As $\nu_2(v)|_{2j-1}<\nu_2(w)|_{2j-1}$ and $\pi(v)<2j$, the arc $vw$ is indeed violated with respect to $\nu_2$.
Hence, $\xi_1 = \nu_1(v)$ and $\xi_2 = \nu_2(v)$.

Next, assume that $vw$ is not violated with respect to $\nu_1$.
If $\nu_1(w)\neq \top$, then $vw$ is also not violated with respect to $\nu_2$.
It is easy to verify that $\xi_1 = \xi_2^{i_2-i_1}$.
On the other hand, if $\nu_1(w) = \top$, then $\nu_1(v) = \top$.
So, we have $\zeta(\nu_2(v))<i_2-i_1$ and $\zeta(\nu_2(w))<i_2-i_1$.
Since $\pi(v)<2j$, we also have $\zeta(\xi_2)<i_2-i_1$.
Thus, $\xi_1 = \top = \xi_2^{i_2-i_1}$ as required.
\end{proof}

\paragraph{Computing arc costs}

Fix a component $H$ in the auxiliary graph $D$ and let $j = \pi(H)/2$.
Fix a node $w\in V(H)$.
We first recall the generic method for computing the arc costs of $\delta^-_D(w)$.
For each $0\leq i\leq \floor{\log n}$, let $\nu_i:V(J_w)\rightarrow \bar{L}(T_{i,j})$ be the node labeling defined by $\nu_i(w):= \min L(T_{i,j})$ and $\nu_i(u) := \top$ for all $u\neq w$.
Let $\nu'_i$ be the node labeling obtained by running {\sc BellmanFord} on $J_w$ with input node labeling $\nu_i$.
Then, the cost of every arc $vw\in \delta^-_D(w)$ is set to 
\[c(vw) := \min\set{i:\nu'_i(u)\neq \top \text{ for some }u\in N^+_{J_w}(v)}.\]

The above method involves running {\sc BellmanFord} $\floor{\log n}+1$ times.
We outline a faster method which only runs {\sc BellmanFord} once.
Observe that for any $0\leq i\leq \floor{\log n}$, we have $\nu_i(u) = \nu_{\floor{\log n}}(u)^{\floor{\log n}-i}$ for all $u\in V$.
Since $\delta^-_{J_w}(v) = \emptyset$ for all $v\in \Pi(J_w)\setminus \{w\}$ by construction, and $\nu_i(w) = \min L(T_{i,j})$, Lemma \ref{lem:left_embed} yields
\[\nu'_i(u) = \nu'_{\floor{\log n}}(u)^{\floor{\log n}-i}\] 
for all $u\in V(J_w)\setminus (\Pi(J_w)\setminus\{w\})$.
In other words, from $\nu'_{\floor{\log n}}$ we can infer $\nu'_i(u)$ for all $u\in V(J_w)\setminus (\Pi(J_w)\setminus \{w\})$ and $0\leq i\leq \floor{\log n}$. So, we only need to run {\sc BellmanFord} on $J_w$ once to obtain $\nu'_{\floor{\log n}}$.
Note that $\nu'_{\floor{\log n}}$ is finite because $T_{\floor{\log n},j}$ is an $(n,j)$-universal tree and every node in $J_w$ can reach $w$.
Then, the cost of each arc $vw\in \delta^-_{D}(w)$ is given by 
\[c(vw) = \floor{\log n} - \max_{u\in N^+_{J_w}(v)}\zeta(\nu(u)).\]

The above argument shows that we can compute $c$ for the component $H$ in time $O(|V(H)|mn\gamma(T))$, saving a factor of $\log n$.
We are ready to prove the running time of Algorithm \ref{algo:Cramer} for succinct universal trees.

\begin{thm}
For a succinct $(n,d/2)$-universal tree $T$ with cover $\mathcal{C}=(\mathcal{T}_0,\mathcal{T}_1,\dots, \mathcal{T}_{d/2})$, Algorithm \ref{algo:Cramer} runs in  $O(mn^2\log n \log d)$ time.
\end{thm}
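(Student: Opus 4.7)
The plan is to start from the generic running-time bound in Theorem \ref{thm:cramer} and show that, for the succinct universal tree, the arc-cost computation can be done more efficiently than the generic argument suggests, saving a factor of $\log n$.

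By Lemma \ref{lem:cover}, for every $0 \leq j \leq d/2$ the subcover $\mathcal{C}_j$ is the single chain $T_{0,j} \sqsubset T_{1,j} \sqsubset \dots \sqsubset T_{\floor{\log n},j}$, so $|\mathcal{C}_j|=1$ and $|\mathcal{C}^0_j| \leq \floor{\log n}+1$. Also, $\gamma(T)=O(\log n\log d)$ for succinct trees, and $\rho(T,\mathcal{C})=O(\log n\log d)$ by Lemma \ref{lem:raise}. Plugging these into Theorem \ref{thm:cramer} yields only $O(mn^{2}\log^{2} n\log d)$, which is one $\log n$ factor too many; the slack comes from the fact that the generic arc-cost routine runs Bellman--Ford $\Theta(\log n)$ times per node $w \in V(H)$, once for each tree in the chain.

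I would then invoke the improvement described in the paragraph preceding the theorem: for each component $H$ of $D$ with $j=\pi(H)/2$ and each $w\in V(H)$, run Algorithm \ref{algo:Bellman+Ford} on $J_w$ \emph{just once}, starting from the node labeling $\nu_{\floor{\log n}}:V(J_w)\rightarrow \bar{L}(T_{\floor{\log n},j})$ with $\nu_{\floor{\log n}}(w) := \min L(T_{\floor{\log n},j})$ and $\nu_{\floor{\log n}}(u):=\top$ elsewhere, and let $\nu$ denote the returned labeling. Since the nodes of $\Pi(J_w)\setminus \{w\}$ have no incoming arcs in $J_w$ by construction, every Drop operator ever applied in $J_w$ involves an arc whose tail has priority strictly less than $2j$. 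By Lemma \ref{lem:left_embed}, this implies that running Bellman--Ford from the initial labeling $\nu_i$ associated with the smaller tree $T_{i,j}$ would produce exactly $\nu^{\floor{\log n}-i}$. Hence the cost of each incoming arc $vw\in \delta^-_D(w)$ can be read off as $c(vw)=\floor{\log n}-\max_{u\in N^+_{J_w}(v)}\zeta(\nu(u))$ (using $\zeta(\top)=-1$), which lies in the required range $[\underline{c}(vw),\overline{c}(vw)]$. The per-node cost is thus $O(mn\gamma(T))$ rather than $O(mn\gamma(T)\log n)$.

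To finish, I would add up the ingredients. Identifying base nodes and building $D$ cost $O(dm+mn)$. Summed over components, the arc-cost computation costs $O\!\bigl(\sum_H |V(H)|\cdot mn\gamma(T)\bigr)=O(mn^{2}\log n\log d)$ since $\sum_H |V(H)|\leq n$. Computing minimum bottleneck cycles costs $O\!\bigl(\sum_H |V(H)|^{2}\log|V(H)|\bigr)=O(n^{2}\log n)$, and the $O(|V(H)|\rho(T,\mathcal{C}))$ \textsc{Raise} calls sum to $O(n\log n\log d)$. Finally, the concluding Bellman--Ford costs $O(mn\log n\log d)$. All terms are dominated by $O(mn^{2}\log n\log d)$, as claimed. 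The main obstacle is purely the justification that Lemma \ref{lem:left_embed} applies uniformly across the whole Bellman--Ford run on $J_w$, which is where the assumption that the only node of priority $2j$ in $J_w$ is $w$ (and that $w$'s outgoing arcs are discarded, as they lie outside $J_w$) is essential; everything else is routine aggregation.
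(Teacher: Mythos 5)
Your proposal is correct and follows essentially the same route as the paper: the published proof is exactly Lemma \ref{lem:cover} plus the refined arc-cost computation via Lemma \ref{lem:left_embed} (a single Bellman--Ford run per base node $w$ on $J_w$ with the largest tree $T_{\floor{\log n},j}$, with costs read off as $\floor{\log n}-\max_{u\in N^+_{J_w}(v)}\zeta(\nu(u))$), plugged into Theorem \ref{thm:cramer} together with $\gamma(T)=\rho(T,\mathcal{C})=O(\log n\log d)$ from Lemma \ref{lem:raise}. One small inaccuracy in your justification: $w$'s outgoing arcs are \emph{not} discarded from $J_w$ (they are exactly what produces self-loops in $D$), and Bellman--Ford does apply Drop to arcs whose tails lie in $\Pi(J_w)$, so not every Drop has tail priority strictly below $2j$; this is harmless, though, because $\nu(w)=\min L(T_{\floor{\log n},j})$ is the lattice minimum and can never drop, while the nodes of $\Pi(J_w)\setminus\{w\}$ are never heads of arcs in $J_w$ nor out-neighbours of nodes in $\Pi(J_w)$, so their labels enter neither the inductive application of Lemma \ref{lem:left_embed} nor the cost formula.
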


\begin{proof}
By Lemma \ref{lem:cover}, we have $|\mathcal{C}_j|=1$ for all $0\leq j\leq h$.
Computing arc costs for the auxiliary digraph $D_\tau$ takes $O(mn^2\gamma(T))$ time as discussed above.
Hence, the running time of Algorithm \ref{algo:Cramer} becomes $O(n\rho(T,\mathcal{C}) + mn^2\gamma(T))$.
The result then follows from $\gamma(T) = O(\log n \log d) = \rho(T,\mathcal{C})$, where the latter equality is due to Lemma \ref{lem:raise}.
\end{proof}

\subsection{Application to Succinct Strahler Universal Trees}
\label{sec:strahler}

In this subsection, we apply Algorithm \ref{algo:Cramer} to succinct Strahler universal trees.
Let us start by introducing the necessary definitions.
The \emph{Strahler number} of a rooted tree $T$ is the largest height of a perfect binary tree that is a minor of $T$.
For example, a perfect $(\ell,h)$-universal tree has Strahler number 0 if $\ell = 1$, and $h$ otherwise. 

\begin{defi}
A \emph{$g$-Strahler $(\ell,h)$-universal tree} is an ordered tree $T'$ of height $h$ such that $T\sqsubseteq T'$ for every ordered tree $T$ of height $h$, Strahler number at most $g$, and with at most $\ell$ leaves, all at depth exactly $h$. 
\end{defi}

In the definition above, we may assume that $g\leq \min(h,\floor{\log \ell})$.
Daviaud et al.~\cite{conf/icalp/DaviaudJT20} constructed a $g$-Strahler $(\ell,h)$-universal tree with $\ell^{O(1)}(h/g)^g$ leaves.
Note that this is quasipolynomial in $\ell$ and $h$ because $g\leq \min(h,\floor{\log \ell})$.
In this paper, we refer to it as a \emph{succinct $g$-Strahler $(\ell,h)$-universal tree}.
Every leaf $\xi = (\xi_{2h-1},\xi_{2h-3},\dots,\xi_1)$ in this tree corresponds to an $h$-tuple of binary strings which satisfies the following three properties:
\begin{enumerate}
  \item\label{item:nonempty-bit-strings} There are $g$ nonempty bit strings, i.e.~$\size{\set{i:\xi_i\neq \varepsilon}} = g$;
  \item\label{item:total-number-bits} The total number of bits \size{\xi} is at most $g + \floor{\log \ell}$;
  \item For each odd $i\in [2h]$,
  \begin{enumerate}
    \item\label{item:exhaust-nonleading-bits} If there are $f<g$ nonempty bit strings in $\xi|_{i+1}$ and $\size{\xi|_{i+1}}=f+\floor{\log \ell}$, then $\xi_i = 0$. 
    \item\label{item:start-with-zero} If $\xi_j\neq \varepsilon$ for all odd $j\in [i]$, then $\xi_j$ starts with 0 for all odd $j\in [i]$.
  \end{enumerate}
\end{enumerate}
This is the construction of $\mathcal{B}^k_{t,h}$ in~\cite[Definition~19]{conf/icalp/DaviaudJT20}.
In each string $\xi_i$, the first bit is called the \emph{leading bit}, while the remaining bits are the \emph{non-leading bits}.
Properties \ref{item:nonempty-bit-strings} and \ref{item:total-number-bits} imply that $\xi$ contains exactly $g$ leading bits and at most $\floor{\log \ell}$ non-leading bits (see Figure~\ref{fig:strahler} for an example).
Observe that if $g=h$, then the tree is identical to a succinct $(\ell,h)$-universal tree. 
Indeed, one can arrive at the encoding of Jurdzi\'{n}ski and Lazi\'{c} \cite{conf/lics/JurdzinskiL17} by removing the leading zero in every string.

\begin{figure}[ht]
\def\x{0.9}
\def\y{2}
\def\z{0.6}
\centering
\begin{tikzpicture}
\centering
\begin{scope}[every node/.style={circle,draw,,inner sep=2pt}]
    \node (p) at (0,0) {};
    \node (p1) at (-2*\z-7*\x,-1*\y) {};  
    \node (p2) at (-2*\z-6*\x,-1*\y) {};  
    \node (p3) at (-2*\z-5*\x,-1*\y) {};
    \node (p4) at (-2*\z-4*\x,-1*\y) {};
    \node (p5) at (-2*\z-3*\x,-1*\y) {}; 
    \node (p6) at (-2*\z-2*\x,-1*\y) {}; 
    \node (p7) at (-2*\z-1*\x,-1*\y) {}; 
    \node (p8) at (0,-1*\y) {};  
    \node (p9) at (2*\z+1*\x,-1*\y) {};  
    \node (p10) at (2*\z+2*\x,-1*\y) {};
    \node (p11) at (2*\z+3*\x,-1*\y) {};
    \node (p12) at (2*\z+4*\x,-1*\y) {}; 
    \node (p13) at (2*\z+5*\x,-1*\y) {}; 
    \node (p14) at (2*\z+6*\x,-1*\y) {}; 
    \node (p15) at (2*\z+7*\x,-1*\y) {}; 
\end{scope}
\begin{scope}[every node/.style={circle,draw, inner sep=2pt}] 
    \node (q1) at (-2*\z-7*\x,-2*\y) {};
    \node (q2) at (-2*\z-6*\x,-2*\y) {};
    \node (q3) at (-2*\z-5*\x,-2*\y) {};   
    \node (q4) at (-2*\z-4*\x,-2*\y) {};   
    \node (q5) at (-2*\z-3*\x,-2*\y) {};
    \node (q6) at (-2*\z-2*\x,-2*\y) {};
    \node (q7) at (-2*\z-1*\x,-2*\y) {};
    \node (q81) at (-3*\z,-2*\y) {};   
    \node (q82) at (-2*\z,-2*\y) {};
    \node (q83) at (-1*\z,-2*\y) {};
    \node (q84) at (0,-2*\y) {};
    \node (q85) at (1*\z,-2*\y) {};   
    \node (q86) at (2*\z,-2*\y) {};
    \node (q87) at (3*\z,-2*\y) {};  
    \node (q9) at (2*\z+1*\x,-2*\y) {};
    \node (q10) at (2*\z+2*\x,-2*\y) {};
    \node (q11) at (2*\z+3*\x,-2*\y) {};   
    \node (q12) at (2*\z+4*\x,-2*\y) {};   
    \node (q13) at (2*\z+5*\x,-2*\y) {};
    \node (q14) at (2*\z+6*\x,-2*\y) {};
    \node (q15) at (2*\z+7*\x,-2*\y) {};
\end{scope}

\begin{scope}[every node/.style={fill=white,rectangle,font=\scriptsize,inner sep=1pt},
              every edge/.style={draw}]
    \path [-] (p) edge node[pos=0.8] {000} (p1);
    \path [-] (p) edge node[pos=0.8] {00} (p2);
    \path [-] (p) edge node[pos=0.8] {001} (p3);
    \path [-] (p) edge node[pos=0.8] {0} (p4);
    \path [-] (p) edge node[pos=0.8] {010} (p5);
    \path [-] (p) edge node[pos=0.8] {01} (p6);
    \path [-] (p) edge node[pos=0.8] {011} (p7);
    \path [-] (p) edge node {$\varepsilon$} (p8);
    \path [-] (p) edge node[pos=0.8] {100} (p9);
    \path [-] (p) edge node[pos=0.8] {10} (p10);
    \path [-] (p) edge node[pos=0.8] {101} (p11);
    \path [-] (p) edge node[pos=0.8] {1} (p12);
    \path [-] (p) edge node[pos=0.8] {110} (p13);
    \path [-] (p) edge node[pos=0.8] {11} (p14);
    \path [-] (p) edge node[pos=0.8] {111} (p15);
    \path [-] (p1) edge node {$\varepsilon$} (q1);
    \path [-] (p2) edge node {$\varepsilon$} (q2);
    \path [-] (p3) edge node {$\varepsilon$} (q3);
    \path [-] (p4) edge node {$\varepsilon$} (q4);
    \path [-] (p5) edge node {$\varepsilon$} (q5);
    \path [-] (p6) edge node {$\varepsilon$} (q6);
    \path [-] (p7) edge node {$\varepsilon$} (q7);
    \path [-] (p8) edge node[pos=0.8] {000} (q81);
    \path [-] (p8) edge node[pos=0.8] {00} (q82);
    \path [-] (p8) edge node[pos=0.8] {001} (q83);
    \path [-] (p8) edge node[pos=0.8] {0} (q84);
    \path [-] (p8) edge node[pos=0.8] {010} (q85);
    \path [-] (p8) edge node[pos=0.8] {01} (q86);
    \path [-] (p8) edge node[pos=0.8] {011} (q87);
    \path [-] (p9) edge node {$\varepsilon$} (q9);
    \path [-] (p10) edge node {$\varepsilon$} (q10);
    \path [-] (p11) edge node {$\varepsilon$} (q11);
    \path [-] (p12) edge node {$\varepsilon$} (q12);
    \path [-] (p13) edge node {$\varepsilon$} (q13);
    \path [-] (p14) edge node {$\varepsilon$} (q14);
    \path [-] (p15) edge node {$\varepsilon$} (q15);
    
\end{scope}
\end{tikzpicture}
\caption{The succinct 1-Strahler $(7,2)$-universal tree}
\label{fig:strahler}
\end{figure}
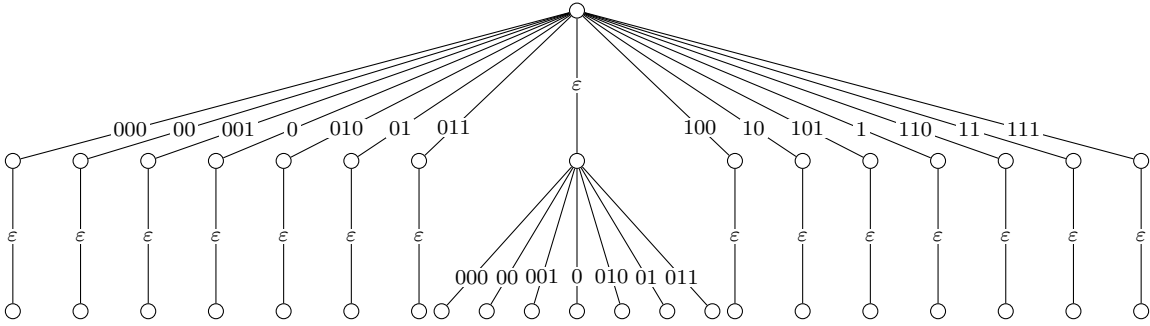

Let $T$ be a succinct $g$-Strahler $(\ell,h)$-universal tree.
Let $v\in V(T)$ be a vertex at depth $h-j$ for some $0\leq j\leq h$.
If $v$ has $k$ nonempty strings and $i$ non-leading bits, then the subtree rooted at $v$ is a succinct $(g-k)$-Strahler $(2^{\floor{\log \ell}-i},j)$-universal tree.
Note that if $i=\floor{\log \ell}$, then the subtree has only one leaf.
This fact is actually independent of $k$.
Indeed, varying $k$ only yields different encodings of the same tree.

The crucial fact for obtaining a small cover is that the subtrees of $T$ with fixed height $j$ and fixed Strahler number $k$ form a chain. 
This leads to the following statement.

\begin{lem}\label{lem:cover_strahler}
There exists a cover $\mathcal{C}$ of $\mathcal{T}$ such that $\size{\mathcal{C}_j}\leq g$ for all $0\leq j \leq h$.
\end{lem}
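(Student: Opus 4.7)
The plan is to partition $\mathcal{T}_j$ according to Strahler number and show each part is a chain under $\sqsubseteq$. Since $T$ is a $g$-Strahler universal tree, every subtree of height $j$ has Strahler number in $\{0,1,\ldots,g\}$, and a Strahler-$0$ subtree of height $j$ is simply the path of height $j$---a single isomorphism class that embeds into any tree of the same height with positive Strahler number, so its chain can always be appended to one of the other chains. I will show that for each fixed $k$ with $1\leq k\leq g$, the distinct subtrees in $\mathcal{T}_j$ of Strahler number $k$ form a chain.

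The chain claim is where the main work lies. By the structural observation recalled in the excerpt, a depth-$(h-j)$ vertex $v$ of $T$ whose prefix uses $g-k$ nonempty strings and $i$ non-leading bits has subtree equivalent to a succinct $k$-Strahler $(2^{\floor{\log\ell}-i},j)$-universal tree, whose leaves are exactly the $j$-tuples $\eta$ satisfying properties~\ref{item:nonempty-bit-strings}--\ref{item:start-with-zero} with the parameters $(g,\ell,h)$ replaced by $(k,2^{\floor{\log\ell}-i},j)$. For fixed $k$, the distinct such subtrees are indexed by the parameter $i$. I claim that if $i_1\leq i_2$, then the set of tuples admissible at $i_2$ is contained in the set admissible at $i_1$, so the identity map on tuples is an order-preserving injective homomorphism sending leaves to leaves, i.e., an embedding from the $i_2$-subtree into the $i_1$-subtree. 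Properties~\ref{item:nonempty-bit-strings}, \ref{item:total-number-bits}, and~\ref{item:start-with-zero} transfer immediately; the delicate verification is for property~\ref{item:exhaust-nonleading-bits}. At parameter $i_1$ its triggering condition requires $\size{\eta|_{i'}}=f+\floor{\log\ell}-i_1$ for some odd $i'$ with $f<k$ nonempty strings in $\eta|_{i'}$. However, for $\eta$ admissible at $i_2$, property~\ref{item:total-number-bits} combined with the fact that the $k-f$ nonempty strings of $\eta$ lying below position $i'$ each contribute at least one (leading) bit yields $\size{\eta|_{i'}}\leq f+\floor{\log\ell}-i_2<f+\floor{\log\ell}-i_1$, so the triggering equality at $i_1$ cannot occur and property~\ref{item:exhaust-nonleading-bits} is vacuous. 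Hence the Strahler-$k$ subtrees form a chain ordered by decreasing $i$.

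Finally, the count. The number of nonempty strings at a depth-$(h-j)$ vertex ranges over $[\max(0,g-j),\min(g,h-j)]$, so the realized Strahler numbers in $\mathcal{T}_j$ form the contiguous interval $[\max(0,g-h+j),\min(g,j)]$. If this interval contains $0$ (equivalently, $j\leq h-g$), there are $\min(g,j)+1$ classes; merging the Strahler-$0$ chain into the Strahler-$1$ chain (which exists since $g\geq 1$) leaves $\min(g,j)\leq g$ chains. Otherwise $j>h-g$ and the interval has $\min(g,j)-(g-h+j)+1$ elements; for $j\leq g$ this equals $h-g+1$, bounded by $g$ because $j>h-g$ and $j\leq g$ force $h<2g$, while for $j>g$ it equals $h-j+1$, bounded by $g$ because $j>h-g$. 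The main obstacle is the containment verification of the second paragraph: showing that the boundary constraint~\ref{item:exhaust-nonleading-bits} is neutralised by the bit-budget inequality is the step that genuinely uses the specific succinct Strahler encoding, and once it is in hand the grouping and counting are routine.
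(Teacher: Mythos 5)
Your proof is correct and follows essentially the same route as the paper: classify the height-$j$ subtrees by the Strahler parameter $k$ (equivalently, by the number of nonempty strings in the root's prefix) and show each class is a chain by an identity-map embedding between succinct $k$-Strahler universal trees with different non-leading-bit budgets. You additionally spell out two points the paper leaves implicit---the tuple-set containment behind the identity embedding (in particular that the boundary property~(3a) becomes vacuous via the bit-budget inequality) and the explicit count showing the path/Strahler-$0$ class can be absorbed into another chain so that $g$ chains suffice---and both verifications are sound.
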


\begin{proof}
Fix a height $0\leq j\leq h$.
For each $0\leq k\leq g$, let $\mathcal{C}^k_j$ be the set of succinct $k$-Strahler universal trees in $\mathcal{T}_j$.
Then, $\cup_k \mathcal{C}^k_j = \mathcal{T}_j$.
It is left to show that $(\mathcal{C}^k_j,\sqsubseteq)$ is a chain for all $k$.
Fix a $k$ and pick two vertices $r_1,r_2\in V(T)$ at depth $h-j$ such that they each have $g-k$ nonempty bit strings.
Then, they each have $g-k$ leading bits. So, the number of non-leading bits in $r_1$ and $r_2$ are $|r_1| - (g-k)$ and $|r_2| - (g-k)$ respectively.
Let $T_1$ and $T_2$ be the subtrees of $T$ rooted at $r_1$ and $r_2$ respectively.
Observe that $T_1$ is a succinct $k$-Strahler $(2^{\floor{\log n}-\size{r_1}+g-k},j)$-universal tree.
Similarly, $T_2$ is a succinct $k$-Strahler $(2^{\floor{\log n}-\size{r_2}+g-k},j)$-universal tree.
Without loss of generality, assume that $\size{r_1}\geq \size{r_2}$.
Then, the identity map from $V(T_1)$ to $V(T_2)$ is an order-preserving and injective homomorphism.
Hence, $T_1\sqsubseteq T_2$.
\end{proof}

Let $\mathcal{C}$ be the cover given in the proof of Lemma \ref{lem:cover_strahler}, i.e.~$\mathcal{C}_j = (\mathcal{C}^0_j, \mathcal{C}^1_j, \dots, \mathcal{C}^{g}_j)$ for all $0\leq j\leq h$.
\begin{obs}\label{obs:strahler_cover}
We make a few observations about the cover $\mathcal{C}$:
\begin{enumerate}
  \item We have $\mathcal{C}^k_j \neq \emptyset$ if and only if $\max(0,g-(h-j))\leq k\leq \min(j,g)$.
  \item For any $j$ where $\mathcal{C}^0_j\neq \emptyset$, we have $\mathcal{C}^0_j=(T^0_{0,j})$ where $T^0_{0,j}$ has only one leaf.
  \item For any $j,k$ where $\mathcal{C}^k_j\neq \emptyset$, its smallest tree $T^k_{0,j}$ has only one leaf. 
\end{enumerate}
\end{obs}
The next lemma shows that the {\sc Raise} subroutine can be implemented efficiently using this cover when its second argument is positive. This will be sufficient for running Algorithm \ref{algo:Cramer}, as Theorem~\ref{thm:strahler} shows.

\begin{lem}\label{lem:raise_strahler}
For a succinct $g$-Strahler $(n,d/2)$-universal tree $T$ with cover $\mathcal{C}$, if $i>0$, then the {\sc Raise}$(\xi,i,j,k)$ subroutine runs in $O(\log n \log d)$ time.
\end{lem}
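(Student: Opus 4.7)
The plan is to adapt the bit-manipulation strategy of Lemma~\ref{lem:raise} to the Strahler setting, using the hypothesis $i > 0$ to avoid degenerate cases.

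Reformulation: the subtrees $T^k_{i',j}$ for $i' \geq i$ are exactly the subtrees of $T$ rooted at vertices $r$ at depth $h-j$ whose encoding has exactly $g-k$ nonempty bit strings and at most $\floor{\log n} - i$ non-leading bits (so $N(r) \leq \floor{\log n} - i$). For any such root $r$, the smallest leaf of the subtree rooted at $r$ has an explicit form: it places $k$ additional nonempty strings at the highest available odd positions below $r$, each consisting of a leading $0$ padded with as many trailing zeros as the remaining budget $\floor{\log n} - N(r)$ permits. Thus {\sc Raise}$(\xi, i, j, k)$ reduces to finding the lexicographically smallest such $r$ with either $r > \xi|_{2j}$, or $r = \xi|_{2j}$ and the corresponding minimum leaf at least $\xi$, and then returning that minimum leaf.

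The algorithm begins by checking whether $\xi|_{2j}$ is itself a valid root. If so, it constructs the smallest leaf in the corresponding subtree and compares it to $\xi$. Otherwise, it scans positions $p = 2j+1, 2j+3, \ldots, 2h-1$ in search of the lowest $p$ at which $\xi_p$ admits a strictly larger replacement that is consistent with properties~\ref{item:nonempty-bit-strings}--\ref{item:start-with-zero} \emph{and} leaves enough slack---in both the nonempty-string count and the non-leading-bit count---to complete the prefix to an admissible root. Once this critical $p$ is located, the smallest admissible replacement at $p$ is chosen and positions strictly below $p$ are filled greedily to minimize the tuple subject to the remaining budgets. Locating $p$ by binary search over the $O(d)$ candidate positions, with each comparison costing $O(\log n)$ for string manipulation, yields $O(\log n \log d)$ time; assembling the final tuple takes $O(\log n + d)$ time, for a total of $O(\log n \log d)$.

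The main obstacle I anticipate is the case analysis. As in the two-case split of Lemma~\ref{lem:raise} (the $r > 0$ versus $r \leq 0$ subcases), here one must further branch on whether position $p$ becomes newly nonempty, whether the leading bit of $\xi_p$ was $0$ or $1$, and whether property~\ref{item:exhaust-nonleading-bits} forces $\xi_q = 0$ for some $q > p$. The hypothesis $i > 0$ is precisely what keeps this analysis tractable: it guarantees a non-leading-bit budget in the target root, so an ``increment one non-leading bit'' primitive is always available, and one never falls into the path-only subcase $T^k_{0,j}$ singled out in Observation~\ref{obs:strahler_cover}(2)--(3).
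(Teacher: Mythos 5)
Your high-level reformulation (valid roots are the depth-$(h-j)$ vertices with exactly $g-k$ nonempty strings and at most $\floor{\log n}-i$ non-leading bits, and the answer is the minimum leaf of the smallest such root whose minimum leaf is at least $\xi$) is sound and matches the paper's viewpoint. But there is a genuine gap in the step that is supposed to deliver the running time: you locate the critical position $p$ by binary search over the $O(d)$ candidate positions, and the predicate ``$\xi_p$ admits a strictly larger replacement leaving enough slack to complete to an admissible root'' is \emph{not} monotone in $p$, so binary search can return a non-minimal position and hence a leaf that is not the smallest one satisfying the {\sc Raise} specification. Concretely, take $g=2$, $\floor{\log n}=2$, $h=4$, $j=k=i=1$ and the valid leaf $\xi=(\varepsilon,11,0,\varepsilon)$: at index $3$ the replacement $\varepsilon>0$ works (root $(\varepsilon,11,\varepsilon)$ has one nonempty string and one non-leading bit, yielding the answer $(\varepsilon,11,\varepsilon,00)$); at index $5$ no replacement works, since any string exceeding $11$ needs two non-leading bits and $\varepsilon<11$; at index $7$ a replacement works again (e.g.\ $10$, yielding $(10,\varepsilon,\varepsilon,00)$). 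The predicate pattern true--false--true means a monotone binary search probing the middle position would discard the correct answer $(\varepsilon,11,\varepsilon,00)$ in favour of the strictly larger $(10,\varepsilon,\varepsilon,00)$. A linear scan over all positions would be correct but costs $\Omega(d)$ predicate evaluations, exceeding the claimed $O(\log n\log d)$; the paper instead characterizes the smallest admissible even $p>2j$ directly (via the two budgets, with the nonempty-string count required to lie in the window $[\,g-k-(p/2-j-1),\,g-k\,]$) and then constructs the output by an explicit case analysis on $\xi_{p-1}$ and the signs of the two slacks, verifying properties \ref{item:nonempty-bit-strings}--\ref{item:start-with-zero} in each case.

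Two further points. Your accounting ``assembling the final tuple takes $O(\log n + d)$ time, for a total of $O(\log n\log d)$'' does not add up, since $d$ need not be $O(\log n\log d)$; one has to work with the sparse $O(\log n)$-bit representation of labels (as {\sc Tighten} does) rather than writing out all $d/2$ components. And the case analysis that you defer as ``the main obstacle'' is in fact the substance of the paper's proof: without it (and without a correct, monotonicity-free way of pinning down $p$), the argument is a plan rather than a proof. Your reading of the role of $i>0$ is close but slightly off target: in the paper its first use is to force $k>0$ via Observation~\ref{obs:strahler_cover}, while the $i=0$ (path) calls are handled separately in Theorem~\ref{thm:strahler}, not inside this lemma.
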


\begin{proof}
Note that $k>0$ by Observation~\ref{obs:strahler_cover}.
We may assume that $\xi$ is the smallest leaf in the subtree of $T$ rooted at $\xi|_{2j}$.
Otherwise, we can set it to the smallest leaf of the next subtree rooted at that depth using {\sc Tighten} (we return $\top$ if this next subtree does not exist).
It is known that the {\sc Tighten} subroutine for $T$ runs in $O(\log n\log d)$ time~\cite[Lemma 22]{conf/icalp/DaviaudJT20}.
This gives us $\xi_{2j-1}\in \set{0\cdots 0,\varepsilon}$ and $\xi_q\in \set{0,\varepsilon}$ for all $q<2j-1$.
If there are at least $i$ non-leading bits in $\xi_{2j-1}$ and exactly $k$ non-empty strings among $\xi_{2j-1},\dots,\xi_1$, then we simply return $\xi$.
Otherwise, let $p>2j$ be the smallest even integer such that $\xi|_p$ has a child bigger than $\xi|_{p-1}$ with at most $\floor{\log n}-i$ non-leading bits and exactly $x$ nonempty strings for some $g-k- (p/2 - j-1)\leq x\leq g-k$.
If $p$ does not exist, then we return $\top$.

Our goal is to increase $\xi$ minimally such that the last $j$ components form a tuple from $L(T^k_{i',j})$ for some $i'\geq i$.
Let $z$ be the number of non-empty strings in $\xi|_{p-1}$, and let $y$ be the number of non-leading bits in $\xi|_{p-1}$. 
We set $s = g-k-z$, representing the discrepancy on the number of nonempty bit strings (Property~\ref{item:nonempty-bit-strings}).
Note that $-1\leq s\leq p/2-j$ due to our choice of $p$.
We also set $r=\floor{\log n}-i-y$, representing the discrepancy on the number of non-leading bits (Property~\ref{item:total-number-bits}).
We split the remaining analysis into cases based on the emptiness of $\xi_{p-1}$ and the signs of $r,s$.

\smallskip
\emph{Case 1: $\xi_{p-1} = \varepsilon$.} 
Note that $y\leq \floor{\log n}-i$ and $z<g-k$.
So, $r\geq 0$ and $s>0$.
Return
\[(\xi_{d-1},\dots,\xi_{p+1},1\underbrace{0\cdots0}_{r},\underbrace{0,\dots,0}_{s-1},\varepsilon,\dots,\varepsilon,\underbrace{0\cdots0}_{i+1},\underbrace{0,\dots,0}_{k-1},\varepsilon,\dots,\varepsilon)\]
where the string of $i+1$ zeroes is at index $2j-1$.
Property~\ref{item:start-with-zero} holds because there is at least one empty string among the last $p/2$ components.
As all the other properties are also fulfilled by construction, this is a valid tuple encoding a leaf.

\smallskip
\emph{Case 2: $\xi_{p-1}\neq \varepsilon$ and $s<0$.}
Note that $s=-1$ and $\xi_{p-1}$ has a leading zero due to our choice of $p$.
Let $t$ be the number of non-leading bits in $\xi_{p-1}$.
Then, $y-t\leq \floor{\log n}-i$, which implies that $r+t\geq 0$.
Return
\[(\xi_{d-1},\dots,\xi_{p+1},\varepsilon,\varepsilon,\dots,\varepsilon,\underbrace{0\cdots0}_{i+1+r+t},\underbrace{0,\dots,0}_{k-1},\varepsilon,\dots,\varepsilon)\]
where the string of $i+1+r+t$ zeroes is at index $2j-1$.

\smallskip
\emph{Case 3: $\xi_{p-1} \neq \varepsilon$, $r>0$ and $s\geq 0$.}
Return 
\[(\xi_{d-1},\dots,\xi_{p-1}1\underbrace{0\cdots0}_{r-1},\underbrace{0,\dots,0}_s,\varepsilon,\dots,\varepsilon,\underbrace{0\cdots0}_{i+1},\underbrace{0,\dots,0}_{k-1},\varepsilon,\dots,\varepsilon)\]
where the string of $i+1$ zeroes is at index $2j-1$.
Property~\ref{item:start-with-zero} is satisfied because the last $p/2$ components are nonempty if and only if the last $p/2$ components of $\xi$ are nonempty.

\smallskip
\emph{Case 4: $\xi_{p-1} \neq \varepsilon$, $r\leq0$ and $s\geq 0$.}
Note that $s\leq p/2-j-1$ in this case. 
Denote $\xi_{p-1} = b_1b_2\cdots b_\ell$ where $b_q\in \set{0,1}$ for all $q\in [\ell]$.
Due to our choice of $p$, there exists a largest $t\in [\ell]$ such that $b_t = 0$ and $r' := r+\ell-\max(t-1,1)\geq 0$.
The following subcases remain.
\begin{itemize}
  \item If $s<p/2-j-1$ and $t=1$, return
  \[(\xi_{d-1},\dots,\xi_{p+1},\varepsilon,\underbrace{0\cdots0}_{r'+1},\underbrace{0,\dots,0}_{s},\varepsilon,\dots,\varepsilon,\underbrace{0\cdots0}_{i+1},\underbrace{0,\dots,0}_{k-1},\varepsilon,\dots,\varepsilon)\]
  where the string of $i+1$ zeroes is at index $2j-1$.
  \item If $s=p/2-j-1$ and $t=1$, return
  \[(\xi_{d-1},\dots,\xi_{p+1},1\underbrace{0\cdots0}_{r'},\underbrace{0,\dots,0}_{s},\underbrace{0\cdots0}_{i+1},\underbrace{0,\dots,0}_{k-1},\varepsilon,\dots,\varepsilon)\]
  where the string of $i+1$ zeroes is at index $2j-1$. Property~\ref{item:start-with-zero} holds because there exists an empty string in the last $p/2$ components of $\xi$ by our choice of $p$.
  \item If $s = 0$ and $t>1$, return
  \[(\xi_{d-1},\dots,\xi_{p+1},b_1\cdots b_{t-1},\varepsilon,\dots,\varepsilon,\underbrace{0\cdots0}_{i+1+r'},\underbrace{0,\dots,0}_{k-1},\varepsilon,\dots,\varepsilon)\]
  where the string of $i+1+r'$ zeroes is at index $2j-1$.
  \item If $s>0$ and $t>1$, return
  \[(\xi_{d-1},\dots,\xi_{p+1},b_1\cdots b_{t-1},\underbrace{0\cdots0}_{r'+1},\underbrace{0,\dots,0}_{s-1},\varepsilon,\dots,\varepsilon,\underbrace{0\cdots0}_{i+1},\underbrace{0,\dots,0}_{k-1},\varepsilon,\dots,\varepsilon)\]
  where the string of $i+1$ zeroes is at index $2j-1$. \qedhere
\end{itemize}
\end{proof}

We are ready to prove the running time of Algorithm \ref{algo:Cramer} for succinct Strahler universal trees. 
Recall that for a $g$-Strahler $(n,d/2)$-universal tree, we may assume that $g\leq \log n$.

\begin{thm}\label{thm:strahler}
For a succinct $g$-Strahler $(n,d/2)$-universal tree $T$ with cover $\mathcal{C}$, Algorithm~\ref{algo:Cramer} runs in $O(mn^2g\log^2n\log d)$ time.
\end{thm}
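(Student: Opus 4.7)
The plan is to instantiate the generic bound of Theorem~\ref{thm:cramer} with the parameters of the succinct $g$-Strahler universal tree. Four quantities need to be estimated: $\max_j |\mathcal{C}_j|$, $\max_{j,k} |\mathcal{C}^k_j|$, $\gamma(T)$, and $\rho(T,\mathcal{C})$. Lemma~\ref{lem:cover_strahler} immediately gives $\max_j |\mathcal{C}_j| \leq g+1$. For the chain sizes, I will use the fact that the trees in $\mathcal{C}^k_j$ are the distinct $k$-Strahler $(2^r,j)$-universal trees sitting at depth $h-j$ of $T$, indexed by the number $r \in \{0,1,\dots,\lfloor \log n\rfloor\}$ of non-leading bits available in a subtree-root encoding. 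This gives $|\mathcal{C}^k_j| = O(\log n)$, so in particular $\min\{|\mathcal{C}^k_j|, n\log |\mathcal{C}^k_j|\} = O(\log n)$.

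For $\gamma(T)$, I will appeal to the existing $O(\log n \log d)$-time implementation of \textsc{Tighten} on succinct Strahler universal trees given in~\cite{conf/icalp/DaviaudJT20}. For $\rho(T,\mathcal{C})$, Lemma~\ref{lem:raise_strahler} handles $\rho$ in $O(\log n \log d)$ time whenever its second argument $i$ is positive. The only loose end is the case $i=0$, which is the main (still very minor) obstacle: here Lemma~\ref{lem:raise_strahler} does not directly apply. However, by Observation~\ref{obs:strahler_cover}(3) every $T^k_{0,j}$ is simply a path of length $j$, so \textsc{Raise}$(\xi,0,j,k)$ just needs to locate the smallest $\xi'\geq \xi$ that is the smallest leaf of a subtree of $T$ at depth $h-j$ whose root encoding contains exactly $g-k$ nonempty strings. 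I will implement this by the same case analysis on leading/non-leading bit discrepancies as in the proofs of Lemmas~\ref{lem:raise} and~\ref{lem:raise_strahler}, which clearly runs in $O(\log n \log d)$ time.

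Finally, plugging $\max_j |\mathcal{C}_j| = O(g)$, $\max_{j,k} \min\{|\mathcal{C}^k_j|, n\log |\mathcal{C}^k_j|\} = O(\log n)$, and $\gamma(T) = \rho(T,\mathcal{C}) = O(\log n \log d)$ into the bound of Theorem~\ref{thm:cramer} gives
\[
O\bigl(mn^2 \cdot \log n \log d \cdot g \log n \;+\; n \cdot \log n \log d \cdot g\bigr) \;=\; O(mn^2 g \log^2 n \log d),
\]
where the first term dominates. This yields the claimed running time.
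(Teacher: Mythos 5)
Your parameter bookkeeping ($\max_j\size{\mathcal{C}_j}=O(g)$, $\size{\mathcal{C}^k_j}=O(\log n)$, $\gamma(T)=O(\log n\log d)$ via \cite{conf/icalp/DaviaudJT20}) and the final substitution into Theorem~\ref{thm:cramer} coincide with the paper's proof. The divergence, and the gap, is in how you dispose of the calls \textsc{Raise}$(\xi,0,j,k)$. The paper deliberately does \emph{not} extend Lemma~\ref{lem:raise_strahler} to $i=0$; essentially the entire proof of Theorem~\ref{thm:strahler} is devoted to showing that this case never needs a dedicated implementation: if Algorithm~\ref{algo:Cramer} ever computes $i^k=0$, the witnessing bottleneck cycle yields, between consecutive base nodes, paths through $J_{w_s}$ seeing only even priorities (because $T^k_{0,j}$ is a path), whence $\underline{c}^{k'}=\overline{c}^{k'}=0$ on that cycle and $i^{k'}=0$ for \emph{every} chain $k'$; since the algorithm takes the minimum of the \textsc{Raise} outputs over all $k'$, and all $T^{k'}_{0,j}$ are paths (Observation~\ref{obs:strahler_cover}), that combined minimum is either $\mu(w)$ itself or the smallest leaf of the next subtree rooted at depth $h-\pi(H)/2$, obtainable with a single \textsc{Tighten}.

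You instead assert that \textsc{Raise}$(\xi,0,j,k)$ can be computed directly in $O(\log n\log d)$ ``by the same case analysis'', which is precisely the non-routine step. When $i=0$ the target tree $T^k_{0,j}$ is a path, and copies of a path rooted at depth $h-j$ in $T$ are not only the subtrees whose root encoding has exactly $g-k$ nonempty strings: they also arise whenever the non-leading-bit budget at the root is exhausted (and for $k=0$ your characterization would miss these entirely). So your proposed specification of the output is itself an unproved claim about the semantics of \textsc{Raise}; whichever convention you adopt, you must re-verify the properties of \textsc{Raise} that Lemma~\ref{lem:threshold} and Theorem~\ref{thm:bottleneck_cycle} rely on (monotonicity in the second argument across the $i=0$ boundary, and that the returned leaf is the minimum leaf of a copy of $T^k_{i',j}$ with $i'\geq i$), and then actually supply the case analysis, none of which is done. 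The rest of your argument is identical to the paper's, but as written the proof has a hole exactly where the paper's proof does its real work.
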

\begin{proof}
By Lemma \ref{lem:cover_strahler}, we have $|\mathcal{C}_j|\leq g$ for all $0\leq j \leq h$.
Furthermore, $|\mathcal{C}^k_j|\leq \log n$ for all $0\leq j \leq h$ and $0\leq k < |\mathcal{C}_j|$.
It is known that \textsc{Tighten}$(\mu,vw)$ takes $O(\log n \log d)$ time~\cite[Lemma 22]{conf/icalp/DaviaudJT20}, whereas Lemma~\ref{lem:raise_strahler} shows that \textsc{Raise}$(\xi,i,j,k)$ takes $O(\log n \log d)$ time when $i>0$.

In Algorithm \ref{algo:Cramer}, if $\text{\sc Raise}({\mu(w),i^k,\pi(H)/2,k})$ is called with $i^k=0$, then there exists a cycle $C$ containing $w$ in $H$ such that $\underline{c}^k(e) \leq c^k(e) = 0$ for all $e\in E(C)$.
Denote $E(C) = \{w_0w_1, w_1w_2, \dots, w_{r-1}w_r\}$ where $w_0 = w_r$.
For each $s\in [r]$, the greatest simultaneous fixed point $\lambda^k_{0,w_s}:V(J_{w_s})\to \bar{L}(T^k_{0,j})$ assigns a finite label to an outneighbour of $w_{s-1}$.
Since $T^k_{0,j}$ has only one leaf by Observation~\ref{obs:strahler_cover}, there exists a $w_{s-1}$-$w_s$ path in $J_{w_s}$ which consists of only even priorities.
Hence $\overline{c}^k(w_{s-1}w_s) = 0$.
The same path also certifies that $\overline{c}^{k'}(w_{s-1}w_s) = \underline{c}^{k'}(w_{s-1}w_s) = 0$ for all $k'\neq k$.
It follows that $c^{k'}(e) = 0$ for all $e\in E(C)$, so the minimum $c^{k'}$-cost of a cycle containing $w$ in $H$ is $0$ for all $k'$.
Thus, the algorithm will always call $\text{\sc Raise}({\mu(w),i^{k'},\pi(H)/2,k'})$ with $i^{k'} = 0$ for all $k'$.
As $T^{k'}_{0,j}$ has only one leaf for all $k'$ by Observation~\ref{obs:strahler_cover}, the minimum of $\text{\sc Raise}({\mu(w),0,\pi(H)/2,k'})$ over all $k'$ is $\mu(w)$ if $\mu(w)$ is the smallest leaf of the subtree rooted at $\mu(w)|_{\pi(H)}$.
Otherwise, it is the smallest leaf of the next subtree rooted at that depth, which can be obtained via {\sc Tighten}.
The overall running time bound then follows from Theorem~\ref{thm:cramer}.
\end{proof}

\section*{Acknowledgment}
  \noindent The authors are grateful for the helpful comments and support by L\'{a}szl\'{o} V\'{e}gh.
  They would like to thank Xavier Allamigeon, Nathana\"{e}l Fijalkow and Marcin Jurdzi\'{n}ski for inspiring discussions.
  They are also thankful to the anonymous reviewers for their valuable comments, which have helped improve the presentation of this paper.

\bibliographystyle{alphaurl}
\bibliography{references}

\appendix

\section{Connection to tropical linear programming}
\label{sec:mpg}

In this section, we describe the well-known connection between parity games and tropical linear programming.
To this end, let us assume that $G$ is bipartite with bipartition $V_0\sqcup V_1$ without loss of generality.
Consider the tropical (min-plus) semiring $(\T,\oplus,\odot)$, where the set $\T=\R\cup \set{\infty}$ is equipped with binary operations $a\oplus b = \min\set{a,b}$ and $a\odot b = a + b$.
Let $n_1=\size{V_1}$ and $y\in \T^{n_1}$ be a vector of variables indexed over the nodes in $V_1$.
A parity game can be formulated as the following system of tropical linear inequalities
\begin{equation*}
  \label{sys_P}
  \tag{P}
    y_u + (-n)^{\pi(u)} \geq \min_{vw\in E} \set{y_w - (-n)^{\pi(v)}} \qquad \forall uv\in \delta^+(V_1).
\end{equation*}
This is precisely the reduction from parity games to mean payoff games \cite{journals/ijgt/EhrenfeuchtM79,thesis/Puri95}.
In particular, for each node $v\in V$, we assign a cost of $-(-n)^{\pi(v)}$ to every arc in $\delta^+(v)$.
A trivial feasible solution to \eqref{sys_P} is given by $\infty\cdot\1$.
However, we are interested in feasible solutions of maximal finite support due to the following statement. 

\begin{thmC}[\cite{AkianGG12}] \label{thm:feasible_mpg}
Let $y^*\in \T^{n_1}$ be a feasible solution to \eqref{sys_P} of maximal finite support. 
Then, Even wins from $u\in V_1$ if and only if $y^*_u< \infty$.
\end{thmC}

\end{document}